\declaretheorem[name=Theorem,numberwithin=section]{theorem}
\declaretheorem[name=Lemma,numberlike=theorem]{lemma}
\declaretheorem[name=Corollary,numberlike=theorem]{corollary}
\def\bd{{\partial}}
\def\VD{{\rm VD}}
\def\Vor{{ \rm Vor}}
\def\wt{{ \omega}}
\def\pre{{\rm pre}}
\def\Dbisr{{{\bf \Delta}_\beta^r}}
\def\H{{h}}
\def\NH{{t}}
\def\fullver{}
   \def\fullqed{}
   \def\fullqed{\qed}
\begin{document}

\title{Voronoi diagrams on planar graphs, \\ and computing the diameter in deterministic $\tilde{O}(n^{5/3})$ time\thanks{A preliminary version of this paper appeared in SODA 2018.}}

\author{
Pawe{\l} Gawrychowski\thanks{University of Wroc{\l}aw, Poland, \texttt{gawry@cs.uni.wroc.pl}. }
\and Haim Kaplan\thanks{Tel Aviv University, Israel, \texttt{\{haimk,michas\}@tauex.tau.ac.il}. 
Work by Haim Kaplan partially supported by the Israel Science Foundation (grants No. 1841/14 and 1595/19), and by GIF (grants No. 1161 and 1367). 
Work by Micha Sharir partially supported by the Israel Science Foundation (grants No. 892/13 and 260/18), by GIF (grant No. 1367), by Len Blavatnik and the Blavatnik Research Fund in Computer Science at Tel Aviv University, by the Israeli Centers of Research Excellence (I-CORE) program (center  No. 4/11), and by the Hermann Minkowski-MINERVA Center for Geometry at Tel Aviv University.}
\and Shay Mozes\thanks{Interdisciplinary Center Herzliya, Israel, \texttt{smozes@idc.ac.il}. Partially supported by the Israel Science Foundation (grants No. 794/13 and 592/17).}
        \and Micha Sharir$^\ddagger$
        \and Oren Weimann\thanks{University of Haifa, Israel, \texttt{oren@cs.haifa.ac.il}. Partially supported by the Israel Science Foundation (grants No. 794/13 and 592/17).}
        }

        \date{}

    \maketitle

    \begin{abstract}
    
    We present an explicit and efficient construction of additively weighted
Voronoi diagrams on planar graphs. Let $G$ be a planar graph with $n$ vertices and $b$ sites that lie on a constant number of faces. We show how to preprocess $G$ in $\tilde O(nb^2)$ time\footnote{The $\tilde O$ notation hides polylogarithmic factors.} so that one can compute any additively weighted Voronoi diagram for these sites in $\tilde O(b)$ time. 

We use this construction to compute the diameter of a directed planar graph
with real arc lengths in $\tilde{O}(n^{5/3})$ time. This improves the recent breakthrough result of Cabello (SODA'17), both by improving the running time (from $\tilde{O}(n^{11/6})$), and by providing a deterministic algorithm. It is in fact the first truly subquadratic {\em deterministic} algorithm for this problem. 
Our use of Voronoi diagrams to compute the diameter follows that of Cabello, but he used abstract Voronoi diagrams, which makes his diameter algorithm more involved, more expensive, and randomized. 

As in Cabello's work, our algorithm can compute, for every vertex $v$, both the farthest vertex from $v$ (i.e., the eccentricity of $v$), and the sum of distances from $v$ to all other vertices. 
Hence, our algorithm can also compute the radius, median, and Wiener index (sum of all pairwise distances) of a planar graph within the same time bounds.
Our construction of Voronoi diagrams for planar graphs is of independent interest. 

    \end{abstract}

\section{Introduction}

Computing the diameter of a (directed, weighted) graph (the largest distance between a pair of vertices) is a fundamental problem in graph algorithms, with numerous research papers studying its complexity.

For general graphs, the current fastest way to compute the diameter is by computing all pairs of shortest paths (APSP) between its vertices. Computing APSP can be done in cubic $O(n^3)$ time, using  the classical Floyd-Warshall algorithm~\cite{Flo62,War62}. Following a long line of improvements by polylogarithmic factors~\cite{Fre76,Dob90,Tak91,Han04,Zwick2004,Takaoka05,Chan08,Han2006,Chan10,HanT12}, the current fastest APSP algorithm is the  $O(n^3/2^{\Omega(\log n)^{1/2}})$-time algorithm by Williams~\cite{RyanWilliams}. However, no truly subcubic, i.e.\ $O(n^{3-\varepsilon})$-time algorithm, for any fixed $\varepsilon>0$, is known for either APSP or the problem of
computing the diameter, referred to as {\sc diameter}. We also do not know if {\sc diameter} is as hard as APSP (i.e., if a truly subcubic algorithm for {\sc diameter} implies a truly subcubic algorithm for APSP). This is in contrast to many other related problems that were recently shown to be as hard as APSP~\cite{TED2018,BackursTzamos,BackursDikkalaTzamos,AbboudVassilevskaFOCS14,AbboudLewi2013,CountingWeightedSubgraphs,AbboudPlanar,RodittyZwick,AbboudGrandoniVassilevska,VW10,AmirVassilevskaYu} (including the problem of computing the  radius of a graph~\cite{AbboudGrandoniVassilevska}).

For sparse graphs (with $\tilde O(n)$ edges), the problem is better understood. We can solve APSP (and thus {\sc diameter}) in $\tilde O(n^2)$ time by running a single-source shortest path algorithm from every vertex. There is clearly no truly subquadratic $O(n^{2-\varepsilon})$ algorithm for APSP (as the output size is $\Omega (n^2)$). However,  Chan~\cite{Chan06} showed that, for undirected unweighted sparse graphs, APSP can be represented and computed in $O(n^2/\log n)$ time.  Interestingly, assuming the strong exponential time hypothesis (SETH) \cite{IPZ01}, there is also no truly subquadratic algorithm for {\sc diameter}. In fact,  even distinguishing between diameter 2 and 3  requires quadratic time assuming SETH~\cite{RodittyW13}. This holds even for  undirected, unweighted graphs with treewidth $O(\log n)$. For graphs of bounded treewidth, the diameter can be computed in near-linear time~\cite{AWW16} (see also~\cite{Husfeldt16,Eppstein} for algorithms with time bounds that depend on the value of the diameter itself). Near-linear time algorithms were developed for many other restricted graph families, see e.g.~\cite{Chordal,PlaneTriangulations,CactusGraphs,OuterplanarGraphs,IntervalGraphs,HyperbolicGeodesic,Euclidian,SMAWK,Geodesic,DistanceHereditaryGraphs}.

For planar graphs, we know that the diameter can in fact be computed faster than APSP. This was illustrated by
Wulff-Nilsen who gave an algorithm for computing
the diameter of unweighted, undirected planar graphs in
$O(n^2 \log\log n/ \log n)$ time~\cite{WN08} and of weighted directed planar graphs in
$O(n^2 (\log\log n)^4/ \log n)$ time~\cite{WN10}. The question of whether a truly subquadratic ($O(n^{2-\varepsilon})$-time) algorithm exists (even for undirected, unweighted planar graphs) was a major open problem in planar graph algorithms. In SODA'17, a breakthrough result of Cabello~\cite{Cabello} showed that the diameter of weighted directed planar graphs can be computed in truly subquadratic $\tilde O(n^{11/6})$ expected time. Whether the exponent $11/6$ could be substantially reduced became a main open problem.

\medskip
\noindent
{\bf Voronoi diagrams on planar graphs.}
The breakthrough in Cabello's work is his novel use of Voronoi diagrams in planar graphs.\footnote{Using Voronoi diagrams for algorithms in planar graphs was done before (see e.g.~\cite{MarxP15,Verdiere10}), but never for the purpose of obtaining polynomial factor speedups.}
An $r$-{\em division}~\cite{F87} of a planar graph $G$ is a decomposition
of $G$ into $\Theta(n/r)$ pieces,
each of them with $O(r)$ vertices and $O(\sqrt{r})$ boundary vertices (vertices shared with other pieces). A useful property in $r$-divisions is when the boundary vertices of each piece lie on a constant number of faces of the piece (called {\em holes}). An $O(n\log n)$ (deterministic) algorithm for computing an $r$-division with this property was given in~\cite{FR06}  and improved to $O(n)$ in~\cite{KMS13}.
Unsurprisingly, it turns out that the difficult case for computing the diameter is when the farthest pair of vertices lie in different pieces.

Consider some source vertex $v_0$ outside of some piece $P$. For every boundary vertex $u$ of $P$, let $\wt(u)$ denote the $v_0$-to-$u$ distance in $G$. The {\em additively weighted Voronoi diagram} of $P$ with respect to $\wt(\cdot)$ is a partition of the vertices of $P$ into pairwise disjoint sets (Voronoi cells), each associated with a unique boundary vertex (site) $u$. The vertices in the cell
$\Vor(u)$ of $u$ are all the vertices $v$ of $P$ such that $u$ minimizes the quantity $\wt(u)$ plus the length of the $u$-to-$v$ shortest path inside $P$ (i.e., $u$ is the last boundary vertex of $P$ on the $v_0$-to-$v$ shortest path in $G$). The {\em boundary} of a cell $\Vor(u)$ consists of all edges of $P$ that have exactly one endpoint in $\Vor(u)$. For example, in a Voronoi diagram of just two sites $u$ and $v$, the boundary of the cell $\Vor(u)$ is a $uv$-cut and is therefore a cycle in the dual graph. We call this cycle the $uv$-{\em bisector}. The {\em complexity} of a Voronoi diagram is defined as the number of faces of $P$ that contain vertices belonging to three or more Voronoi cells. The complexity of a Voronoi {\em cell} $\Vor(u)$ is the number of faces of $P$ that contain vertices of $\Vor(u)$ and of at least two more Voronoi cells. 

For every $v_0$, computing the farthest vertex from $v_0$ in $P$ thus boils down to computing, for each site $u$, the farthest vertex from $u$ in $\Vor(u)$, and returning the maximum of these quantities.
The main challenges with this approach are: (1) to efficiently compute the Voronoi diagram of a piece $P$ (that is, to identify the partition of the vertices of $P$ into Voronoi cells), with respect to site weights equal to the $v_0$-to-site distances, and (2) to find the maximum site-to-vertex distance in each cell. To appreciate these issues, note that we need to perform tasks (1) and (2) much faster than $O(|P|)=O(r)$ time! Otherwise, since there are $n$ possible sources $v_0$, and $\Theta(n/r)$ pieces, the overall running time would be $\Omega(n^2)$. Thus, even task (2), which is trivial to implement  by simply inspecting all vertices of $P$, requires a much more sophisticated approach to make the overall algorithm subquadratic.

Voronoi diagrams of geometric objects of many kinds, and with respect to different metrics, have been extensively studied (see, e.g., \cite{Aurenhammer}). One of the basic approaches for computing Voronoi diagrams is divide and conquer. Using this approach, we split the objects (sites) into two groups, compute the Voronoi diagram of each group recursively, and then somehow merge the diagrams. Using this approach,
Shamos amd Hoey \cite{Shamos} gave the first $O(n\log n)$ deterministic algorithm for computing the Voronoi diagram of $n$ points in the plane with respect to the Euclidean metric. Since then other approaches, such as randomized incremental construction, have also been developed.

\medskip
\noindent
{\bf Cabello's approach.}
For every source $v_0$ and every piece $P$, Cabello uses the ``heavy hammer'' of {\em abstract Voronoi diagrams} in order to compute the Voronoi diagram of $P$ with respect to the $v_0$-to-boundary distances as additive weights.
Klein \cite{Klein89}  introduced  the abstract Voronoi diagram framework, trying to abstract the properties of objects and  metric that make the computation of Voronoi diagrams efficient.
Cabello uses (as a black box) the randomized construction of abstract Voronoi diagrams by Klein, Mehlhorn and Meiser~\cite{KleinMM93} (see also~\cite{KleinLN09}). This construction, when applied to our special setup, is highly efficient, requiring only
 $\tilde O(\sqrt{r})$ time. However, in order to use it, certain requirements must be met.

 First, it requires knowing in advance the Voronoi diagrams induced by every subset of four sites (boundary vertices). Cabello addresses this by first showing that the planar case only requires the Voronoi diagrams for any triple of sites to be known in advance.
The Voronoi diagram of each triple is composed of segments of bisectors.
The situation is further complicated because each pair of sites can have many bisectors, depending on (the difference between) the weights of the sites.
Cabello shows that for each pair of sites, there are only $O(r)$ different bisectors over all possible weight differences for this pair.
His algorithm computes each of these bisectors separately in $O(r)$ time, so computing all the bisectors for a single pair of sites takes $O(r^2)$ time. This yields a procedure that precomputes all 3-site weighted Voronoi diagrams from the bisectors of pairs of sites,
in $\tilde{O}(r^{7/2})$ time. 

 The second requirement is that all the sites must lie on the boundary of a single hole. However, from the properties of $r$-divisions, the boundary vertices of $P$ may lie on several (albeit a constant number of) holes. Overcoming this is a significant technical difficulty in Cabello's paper. He achieves this by allowing the bisectors to venture through the subgraphs enclosed by the holes (i.e., through other pieces).  Informally, Cabello's algorithm ``fills up'' the holes in order to apply the mechanism of abstract Voronoi diagrams.

\medskip
\noindent
{\bf Our result.}
We present a deterministic $\tilde{O}(n^{5/3})$-time algorithm for computing the
diameter of a directed planar graph with no negative-length cycles. This improves
Cabello's bound by a factor of $\tilde O(n^{1/6})$, and is the
first deterministic truly subquadratic algorithm for this problem.

Our algorithm follows the general high-level approach of Cabello, but differs in the way it is implemented.
Our main technical contribution is an efficient and deterministic construction of additively weighted
Voronoi diagrams on planar graphs (under the assumption that  the sites lie on the boundaries of a constant number of faces, which holds in the context considered here).
In contrast to Cabello, we only need to precompute the bisectors between {\em pairs} of sites, and then use a technique, developed in this paper, for intersecting three bisectors in $\tilde O(1)$ time. Moreover, we compute the bisectors faster,
so that all possible bisectors between a given pair of sites can be computed
in $\tilde O(r)$ time, compared to $O(r^2)$ time in~\cite{Cabello}. Another difference is that we use a deterministic divide-and-conquer approach that exploits the planar structure, rather than the randomized incremental construction of abstract Voronoi diagrams used by Cabello.
Formally, we prove the following result.
\begin{theorem}\label{thm:vor}
Let $P$ be a directed planar graph with real arc lengths, $r$ vertices, and no negative-length cycles. Let $S$ be a set of $b$ sites that lie on the boundaries of a constant number of faces (holes) of $P$. One can preprocess $P$ in $\tilde O(r b^2)$ time, so that, given any subset $S' \subseteq S$ of the sites, and a weight $\wt(u)$ for each $u \in S'$, one can construct a representation of the additively weighted Voronoi diagram of $S'$ with respect to the weights $\wt(\cdot)$, in $\tilde O(|S'|)$  time.
With this representation of the Voronoi diagram we can, for any site $u \in S'$, (i)
report the boundary of the Voronoi cell of $u$ in $\tilde O(1)$ time per edge, and (ii) query the maximum distance from $u$ to a vertex in the Voronoi cell of $u$ in $\tilde O(|\bd \Vor(u)|) $ time, where $|\bd \Vor(u)|$ denotes the complexity of the Voronoi cell of $u$.
\end{theorem}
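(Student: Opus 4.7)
The plan is to split the work into a one-time preprocessing phase that depends only on $P$ and the full site set $S$, and a per-query construction phase that depends on $S'$ and the weights $\wt$. The preprocessing will build, for every ordered pair of sites, a compact representation of all combinatorially distinct bisectors, together with auxiliary structures that reduce a three-way bisector intersection to an $\tilde O(1)$-time primitive. The construction phase will then assemble the diagram by a planar divide-and-conquer that glues subdiagrams using this primitive.

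\textbf{Preprocessing.} For every ordered pair $(u,v)\in S\times S$, a $uv$-bisector with respect to a weight difference $\wt(u)-\wt(v)$ is a cycle in the dual graph of $P$ that separates $u$ from $v$, and its combinatorial type changes only $O(r)$ times as the weight difference ranges over $\reals$. I would sweep the weight difference from $-\infty$ to $+\infty$, maintaining a shortest path structure (e.g., multiple-source shortest paths on $P$ using link-cut trees) and recording the $O(r)$ combinatorial bisectors produced along the way; this costs $\tilde O(r)$ per pair and $\tilde O(r b^2)$ over all $O(b^2)$ pairs. I would also precompute, for every pair, the maximum distance from each site to every vertex on each side of each possible bisector arc, stored in balanced binary trees keyed by position along the bisector so that maxima over any contiguous range can be read off in $\tilde O(1)$ time. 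Finally, I would build a point-location structure that, given three sites $u,v,w\in S$ and three weights, returns in $\tilde O(1)$ time the common intersection of the three bisectors (a potential Voronoi vertex) or reports that no such point exists.

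\textbf{Construction.} Given $S'\subseteq S$ and weights $\wt$, I would construct the Voronoi diagram by divide-and-conquer. Because the sites lie on a constant number of holes, each hole carries a natural cyclic order on its sites, which I use to split $S'$ into two balanced halves $S_1',S_2'$ of contiguous arcs on every hole. Recursing yields two subdiagrams; to merge them I trace the \emph{merge curve} consisting of all edges where a cell from $S_1'$ abuts a cell from $S_2'$. Starting from a known starting arc (identified by scanning the holes), I walk along bisectors: at each breakpoint I use the three-bisector intersection primitive to find, in $\tilde O(1)$ time, the next Voronoi vertex where the merge curve bends from one bisector onto another. Standard planarity arguments show that the total length of merge curves over the entire recursion is $\tilde O(|S'|)$, giving an $\tilde O(|S'|)$ construction time.

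\textbf{Representation and queries.} The diagram is stored as, for each site $u\in S'$, the cyclic list of bisector arcs bounding $\Vor(u)$, each arc annotated with pointers into the preprocessed bisector representation and into the maximum-distance balanced tree for that arc. To report $\bd\Vor(u)$ I walk the cyclic list and unpack the dual edges of each arc in $\tilde O(1)$ amortized time per edge. To answer the maximum-distance query from $u$ inside $\Vor(u)$, I combine the precomputed range maxima over the $|\bd\Vor(u)|$ arcs of the cell in $\tilde O(|\bd\Vor(u)|)$ time.

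\textbf{Main obstacles.} The hardest ingredient will be the three-bisector intersection primitive: a bisector is an implicitly defined cycle in the dual of $P$, whose shape depends on the weight difference and on the shortest-path structure inside $P$, so locating the crossing of three such cycles without traversing them edge-by-edge requires a careful organisation of the precomputed multiple-source shortest-path data (one family per hole) together with persistent search trees for the bisectors. A secondary obstacle is showing that the divide-and-conquer merge step still traces a well-behaved curve when the sites lie on several holes rather than one: the merge ``curve'' may split into several pieces, and I must argue that using the cyclic orders induced by a constant number of holes still bounds the total merge complexity by $\tilde O(|S'|)$ across the recursion, so that only $\tilde O(|S'|)$ calls to the three-bisector primitive are made.
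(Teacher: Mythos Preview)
Your high-level architecture matches the paper---precompute all bisectors in $\tilde O(rb^2)$ time, then use a trichromatic-vertex primitive inside a divide-and-conquer construction---but several load-bearing steps are stated at a level of abstraction that hides the real difficulties, and at least two of them do not work as you describe.

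First, the ``point-location structure'' for the three-bisector intersection is the technical heart of the whole result, and you have not given an approach. The paper does \emph{not} solve it by point location; it observes that the sequence $\bigl(\Delta^{r}(e^*)\bigr)_{e^*\in\beta^*(g,b)}$ is (weakly) bitonic along the $gb$-bisector, then runs a binary search guided by preorder labels $\pre_c(\cdot)$ on the shortest-path trees and by cotree-distance labels $\ell_c^{h_r}(\cdot)$ to locate the maximum and to jump over plateaus. None of this data is implied by ``persistent search trees for the bisectors,'' and without it you cannot get $\tilde O(1)$ per call. Moreover, the merge step needs the \emph{extended} primitive: given $r,g$ and a whole contiguous block $B$ of sites on one hole (with access to $\beta^*(g,B)$), find the trichromatic vertices of $\VD^*(r,g,B)$ in $\tilde O(1)$ time. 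With only the three-individual-sites version, each step of your merge trace would have to test every neighbour of the current cell, and the total cost would no longer be $\tilde O(|S'|)$.

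Second, your maximum-distance query does not cover the interior of a Voronoi cell. Storing, for each bisector arc, ``the maximum distance to every vertex on each side'' is not well-defined arc by arc, and in any case the vertices strictly inside $\Vor(u)$ are not indexed by boundary arcs. The paper's mechanism uses the cotree $T^*$ of the shortest-path tree $T_u$: for each dual vertex $f^*$ on the cell boundary it stores the maximum label over the cotree subtree hanging off the unique non-boundary neighbour of $f^*$, and with multiple holes it additionally maintains the subtree $T^*_{\mathcal H}$ of hole-to-hole cotree paths with an Euler-tour structure to subtract the parts that exit the cell. Your range-maxima-over-arcs idea captures only the boundary vertices, not the interior. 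Finally, your divide-and-conquer splits all holes simultaneously; the paper instead builds monochromatic diagrams per hole, then bichromatic, then trichromatic, then assembles the full diagram---precisely because with sites on several holes the merge curve may be a free-floating cycle not touching any hole, and finding even one edge on it requires a separate case analysis that your ``scanning the holes'' does not provide.
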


 The above $\tilde O(|S'|)$ construction time is significant because $|S'| \leq b = O(\sqrt{r}) \ll |P|=O(r)$.
This fast construction comes with the price
of the cost of the
preprocessing stage. In other words, a one-time construction of the diagram
is less efficient than a brute force $\tilde{O}(br)$ algorithm (that simply computes a shortest path tree from each site), because we need to account for the preprocessing cost too. Nevertheless, the preprocessing
is \emph{independent} of the weights of the sites. Hence, the overall approach, which consists of
a one-time preprocessing stage (per piece), followed by many calls to the diagram construction procedure,
makes the overall algorithm efficient. 
This general approach is borrowed from Cabello, except that
our improvements lead to a simpler, more efficient and deterministic algorithm for {\sc diameter}. 

As in Cabello's work, our algorithm can compute, for every vertex $v$, both the farthest vertex from $v$ (i.e., the eccentricity of $v$), and the sum of distances from $v$ to all other vertices. 
Hence, our algorithm can also compute the radius, median (the vertex minimizing the sum of distances to all other vertices), and Wiener index (sum of all pairwise distances) of a planar graph within the same time bounds. To the best of our knowledge, these are currently the fastest algorithms for all of these problems in planar graphs. 

Our detailed study of Voronoi diagrams for planar graphs, and the resulting algorithm of Theorem~\ref{thm:vor} are of independent interest, and we believe it will find additional uses. For example, in a recent work, Cohen-Addad, Dahlgaard, and Wulff-Nilsen~\cite{SubquadraticDistanceOracle} show that the Voronoi diagram approach can be used to construct 
exact {\em distance oracles} for planar graphs with subquadratic space and polylogarithmic query time. Specifically, they show how to implement the following point location queries in Voronoi diagrams in planar graphs. Given a vertex $v$, return the site $s$ such that $v$ lies in the Voronoi cell of $s$. They show how to use such queries to obtain 
an oracle of size $\tilde O(n^{5/3})$, query time $O(\log n)$, and preprocessing time $O(n^2)$. They also improve the state-of-the-art tradeoff between space and query time. Using Theorem~\ref{thm:vor} 
to construct the Voronoi diagrams yields an improved and  nearly optimal preprocessing time 
for the oracles in~\cite{SubquadraticDistanceOracle}, 
 matching their space requirements up to polylogarithmic factors.\footnote{ It should be noted that these space-to-query-time tradeoffs have since been significantly improved in~\cite{AlmostOptimal} (see also~\cite{ourSODA2018}). The oracles in~\cite{AlmostOptimal} have almost linear size, and polylogarithmic (or subpolynomial) query time. Their preprocessing time is roughly $O(n^{3/2}$), using a different method for computing a suitable representation of a single additively weighted Voronoi diagram rather than the buy-at-bulk approach of Theorem~\ref{thm:vor}. Whether any of these construction methods can be improved to obtain faster preprocessing time is an interesting open question.}
 
\subsection{The diameter algorithm}\label{sec:framework}
Let $G$ be a directed planar graph with non-negative arc lengths.\footnote{Negative arc length can be handled so long as there are no negative-length cycles. This is done using the standard technique of feasible price functions, which make all the weights non-negative. See, e.g., \cite{Cabello,FR06}.} The algorithm computes an $r$-division of $G$ with few holes per piece.
For a vertex $v$ that is not a boundary vertex, let $P_v$ be the unique piece of the $r$-division that contains $v$ and let $\partial P_v$ denote the set of boundary vertices of $P_v$.
The diameter of $G$ is the length of a shortest $u$-to-$v$ path for some pair of vertices $u, v$. There are three cases: ($i$) at least one of $u,  v$ is a boundary vertex, ($ii$) none of $u, v$ is a boundary vertex and $P_{u} = P_{v}$, and ($iii$) none of $u, v$ is a boundary vertex and $P_{u} \neq P_{v}$.

To take care of case ($i$), we reverse all arcs in $G$ and compute single-source shortest paths trees rooted at each boundary vertex. This takes $O(n \cdot n/\sqrt r)$ time using the linear time algorithm of Henzinger et al.~\cite{HKRS},\footnote{In fact, for our purposes it suffices to use Dijkstra's algorithm that takes $O(n\log n)$ time.} and computes the distance from every vertex in $G$ to every boundary vertex.  

To take care of case ($ii$), the algorithm next computes the distances in $G$ from $v$ to all vertices of $P_v$. This is done by a single-source shortest-path computation in $P_v$, with the distance labels of vertices of $\partial P_v$ initialized to their distance from $v$ in the entire graph $G$ (these distances were already computed in case ($i$)). Note that in doing so we  take care of the possibility that the shortest path from $u$ to $v$ traverses other pieces of $G$, even though $u$ and $v$  lie in the same piece. This takes $O(r)$ time per vertex, for a total of $O(nr)$ time.

It remains to take care of case ($iii$), that is, to compute the maximum distance between vertices that are not in the same piece. To this end we first invoke Theorem~\ref{thm:vor}, on every piece $P$ in the $r$-division, preprocessing $P$ in $\tilde O(r(\sqrt r)^2) = \tilde O(r^2)$ time.
Then, for every non-boundary source vertex $v_0 \in G$, and every piece $P$ in the $r$-division (except for the piece of $v_0$) we compute, in $\tilde O(\sqrt r)$ time, the additively weighted Voronoi diagram of $P$, where the weight $\wt(v)$, for each $v \in \partial P$, is the $v_0$-to-$v$ distance in $G$ (these distances have been computed in case ($i$)). We then use the efficient maximum query in item $(ii)$ of Theorem~\ref{thm:vor} to compute the farthest vertex from $v_0$ in each cell of the Voronoi diagram. Each query takes time proportional to the complexity of the Voronoi cell, which together sums up to the complexity of the Voronoi diagram, which is $\tilde O(\sqrt r)$. Thus, we find the farthest vertex from $v_0$ in $P$ in $\tilde O(\sqrt r)$ time. 
The total preprocessing time over all pieces is $\tilde O(\frac{n}{r} \cdot r^2) = \tilde O(nr)$, and the total query time over all sources $v_0$ and all pieces $P$ is $\tilde O(n \cdot \frac{n}{r} \cdot \sqrt r) = \tilde O(n^2 / \sqrt r)$. Summing the preprocessing and query bounds, we get an overall cost of $\tilde O(nr + n^2/\sqrt r)$.

Concerning the running time of the entire algorithm, we note that computing the $r$-division takes $O(n)$ time \cite{KMS13}. 
The total running time is thus dominated by the $\tilde O(nr + n^2/\sqrt r)$ of case~($iii$). Setting $r=n^{2/3}$ yields the claimed $\tilde O(n^{5/3})$ bound.

\subsection{A high-level description of the Voronoi diagram algorithm}\label{sec:highlevel}
We now outline the proof of Theorem~\ref{thm:vor}. The description is given in the context of the diameter algorithm, so the input planar graph is called a piece $P$, and the faces to which the sites are incident are called holes. The set of sites is denoted by $S$. To avoid clutter, and without loss of generality, we assume that the subset $S'$ of sites of the desired Voronoi diagram is the entire set $S$. 
Hence, $|S'| = |S| = b = O(\sqrt r)$.
Given a weight assignment $\wt(\cdot)$ to the sites in $S$, the algorithm  computes (an implicit representation of)
the \emph{additively weighted Voronoi diagram} of $S$ in $P$, denoted as $\VD(S,\wt)$, or
just $\VD(S)$ for short.
Recall that our goal is to construct many instances of $\VD(S)$ in
$\tilde{O}(b) = \tilde{O}(\sqrt{r})$ time each.
What enables us to achieve this is the fact that
these instances differ only in the weight assignment
$\wt(\cdot)$. We exploit this by carrying out a preprocessing stage, which is weight-independent.
We use the information collected in the preprocessing stage to make the construction of the weighted diagrams efficient.

\medskip
\noindent
{\bf The structure of \boldmath$\VD(S)$.}
Each Voronoi cell $\Vor(v)$ is in fact a tree rooted at $v$, which is a subtree of the shortest-path tree from $v$ (within $P$). We also consider the dual graph $P^*$ of $P$, and use the following dual representation of $\VD(S)$ within $P^*$, which we denote as $\VD^*(S)$. For each pair of distinct sites $u,v$, we define the \emph{bisector} $\beta^*(u,v)$ of $u$ and $v$ to be the collection of all edges $(pq)^*$ of $P^*$ that are dual to edges $pq$ of $P$ for which $p$ is nearer to $u$ than to $v$ and $q$ is nearer to $v$ than to $u$. Each bisector  is a simple cycle in $P^*$.\footnote{In this high level description we treat bisectors as undirected objects. In the precise definition, given in Section~\ref{sec:prelims}, they are  directed.}
In general, in the presence of other sites, only part of  $\beta^*(u,v)$ appears in $\VD^*(S)$.
 The set of maximal connected segments of $\beta^*(u,v)$ that appear in $\VD^*(S)$ is a collection of one or several \emph{Voronoi edges} that separate between the cells $\Vor(u)$ and $\Vor(v)$.\footnote{Even in the case of standard additively weighted Euclidean Voronoi diagrams, two cells can have a disconnected common boundary.} 
 Each Voronoi edge terminates at  a pair of \emph{Voronoi vertices}, where such a vertex $f^*$ is dual to a face $f$ of $P$ whose vertices belong to three or more distinct Voronoi cells. To simplify matters, we triangulate each face of $P$ (except for the holes), so all Voronoi vertices (except those dual to the holes, if any) are of degree $3$. If we contract the edges of each maximal connected segment comprising a Voronoi edge into a single abstract edge connecting its endpoints, we get a planar map of size $O(b)$ (by Euler's formula). This planar map is the dual Voronoi diagram $\VD^*(S)$.
A useful property is that, in the presence of just three sites, the diagram $\VD^*$ has at most two vertices. In fact, even if there are more than three sites, but assuming that the sites lie on the boundaries of only three holes, the diagram has at most two ``trichromatic'' vertices, that is, Voronoi vertices whose dual faces have vertices in the Voronoi cells of sites from all three holes. See Sections~\ref{sec:prelims} and~\ref{sec:structure}.

\medskip
\noindent
{\bf Computing all bisectors.}
The heart of the preprocessing step is the computation of bisectors of every pair of sites $u,v$, for every possible pair of weights $\wt(u),\wt(v)$ that can be assigned to them. Note that the  bisector $\beta^*(u,v)$ only depends on the \emph{difference} $\delta = \wt(v)-\wt(u)$ between the weights of these sites, and that, due to the discrete nature of the setup, the bisector changes only at a discrete set of differences, and one can show that  there are only $O(r)$ different bisectors for each pair $(u,v)$.
Computing a representation of the $O(r)$ possible bisectors for a fixed pair of sites $(u,v)$ is done in $\tilde O(r)$ time by varying $\delta$ from $+\infty$ to $-\infty$. As we do this, the bisector ``sweeps'' over the piece, moving farther from $u$ and closer to $v$. This is reminiscent of the multiple source shortest paths (MSSP) algorithm of~\cite{Klein05,CabelloCE13}, except that in our case $u$ and $v$ do not necessarily lie on the same face. We represent all the possible bisectors for $(u,v)$ using a persistent binary search tree~\cite{persistence} that requires $\tilde O(r)$ space. Summing over all $O((\sqrt r)^2) = O(r)$ pairs of sites, the total preprocessing time is $\tilde O(r^2)$.
See Section~\ref{sec:bisectors}.

\medskip
\noindent
{\bf Computing \boldmath$\VD(S)$.}
We compute the diagram in four steps:
\begin{enumerate} 
\item	
 We first compute the diagram for the sites on the boundary of a single hole (the ``monochromatic'' case). We do this using a divide-and-conquer technique, whose main ingredient is merging two sub-diagrams into a larger diagram. We compute one diagram for each of the $\NH$ holes. 
 \item 
 We then compute the diagram of the sites that lie on the boundaries of a fixed pair of holes (the ``bichromatic case''), for all $\binom{\NH}{2}=O(1)$ pairs of holes. Each such diagram is obtained by merging the two corresponding monochromatic diagrams. 
 \item 
 We then compute all ``trichromatic'' diagrams, each of which involves the sites that lie on the boundaries of three specific holes. Here too we merge the three (already computed) corresponding bichromatic diagrams. 
 \item 
 Since each Voronoi edge (resp., vertex) is defined by two (resp., three) sites (even if a Voronoi vertex is defined by more than three sites, it is uniquely determined by any three of them), we now have a superset of the vertices of $\VD(S)$. Each Voronoi edge of $\VD(S)$, say separating the cells of sites $u,v$, is contained in the appropriate Voronoi edges of all trichromatic diagrams involving the holes of $u$ and $v$. A simple merging step along each bisector constructs the true Voronoi edges and vertices. See section~\ref{sec:voronoi}.
\end{enumerate}

\medskip
\noindent
{\bf Intersecting bisectors, finding trichromatic vertices, and merging diagrams.}
The main technical part of the algorithm is an efficient implementation of steps ($i$)--($iii$) above. A crucial procedure that the algorithm repeatedly uses is an efficient construction of the (at most two) Voronoi vertices of three-sites diagrams, where the cost of this step is only $\tilde{O}(1)$. In fact, we generalize this procedure so that it can compute the (at most two) trichromatic Voronoi vertices of a diagram of the form $\VD(r,g,B)$, where $r$ and $g$ are two sites and $B$ is a sequence of sites on a fixed hole. This extended version also takes only $\tilde{O}(1)$ time. See Section~\ref{sec:trichrom}.

Having such a procedure at hand, each merging of two subdiagrams $\VD(S_1)$, $\VD(S_2)$,
into the joint diagram $\VD(S_1\cup S_2)$, is performed by tracing the {\em $S_1S_2$-bisector}, which is the collection of all Voronoi edges of the merged diagram that separate between a cell of a site in $S_1$ and a cell of a site in $S_2$. In all the scenarios where such a merge is performed, the $S_1S_2$-bisector is also 
a cycle, which we can trace segment  by segment. In each step of the trace, we are on a bisector of the form $\beta^*(u_1,u_2)$, for $u_1\in S_1$ and $u_2\in S_2$. We then take the cell of $u_1$ in $\VD(S_1)$, and compute the trichromatic vertices for $u_1$, $u_2$, and $S_1\setminus\{u_1\}$. We do the same for the cell of $u_2$ in $\VD(S_2)$, and the two steps together determine the first exit point of $\beta^*(u_1,u_2)$ from one of these two cells, and thereby obtain the terminal vertex of the portion of $\beta^*(u_1,u_2)$ that we trace, in the combined diagram. Assume that we have left the cell of $u_1$ and have entered the cell of $u'_1$ in $\VD(S_1)$. We then apply the same procedure to this new bisector $\beta^*(u'_1,u_2)$, and keep doing so until all of the $S_1S_2$-bisector is traced. See Section~\ref{sec:voronoi}.

The details concerning the identification of trichromatic vertices are rather involved, and we only
give a few hints in this overview. Let $r,g,b$ be our three sites. We keep the bisector $\beta^*(g,b)$ fixed, meaning that we keep the weights $\wt(g)$, $\wt(b)$ fixed, and vary the weight $\wt(r)$ from $+\infty$ to $-\infty$. As already reviewed, this causes the cell $\Vor(r)$, which is initially empty, to gradually expand, ``sweeping'' through $P$.
This expansion occurs at discrete critical values of $\wt(x)$.
We show that, as $\Vor(r)$ expands, it annexes a contiguous portion of $\beta^*(g,b)$, which keeps growing as $\wt(x)$ decreases. The terminal vertices of the annexed portion of $\beta^*(g,b)$ are the desired trichromatic vertices (for the present value of $\wt(x)$). By a rather involved variant of binary search through $\beta^*(g,b)$ (with the given $\wt(x)$ as a key), we find those endpoints, in $\tilde{O}(1)$ time. See Section~\ref{sec:trichrom}.

To support this binary search we need to store, at the preprocessing step, for each vertex $p$ of $P$ and for each pair of sites $r,g$, the ``time'' (i.e, difference $\wt(r) - \wt(g)$) at which $\beta^*(r,g)$ sweeps through $p$. Note that this can be done within the $O(r^2)$ time and space that it takes to precompute the bisectors.

Putting all the above ingredients together (where the many details that we skimmed through in this
overview are spelled out  later on), we get an overall algorithm that constructs the Voronoi diagram,
within a single piece and under a given weight assignment, in $\tilde{O}(b) = \tilde{O}(\sqrt{r})$ time.

\medskip
\noindent
{\bf Finding the farthest vertex in each Voronoi cell.}
To be useful for the diameter algorithm, our representation of Voronoi diagrams must be augmented to report the farthest vertex in each Voronoi cell from the site of the cell in nearly linear time in the number of Voronoi vertices of the cell. Such a mechanism has been developed by Cabello~\cite[Section 3]{Cabello}, but only for pieces with a single hole, where a Voronoi cell is bounded by a single cycle. We extend this procedure to pieces with a constant number of holes by exploiting the structure of Voronoi diagrams, and the interaction between a shortest path tree, its cotree, and the Voronoi diagram. See Section~\ref{sec:prep_max}.

\subsection{Discussion of the relation to Cabello's work}
We have already mentioned the similarities and differences of our work and Cabello's.  To summarize and further clarify the relation of the two papers, we distinguish three aspects in which our construction of Voronoi diagrams differs from his.
The first main difference is the faster computation of bisectors, the representation of bisectors, and the new capability to compute trichromatic vertices on the fly, which
has no analogue in~\cite{Cabello}. One could try to plug in just these components into Cabello's algorithm (i.e., still using the randomized incremental construction of abstact Voronoi diagrams) in order to obtain an $\tilde O(n^{5/3})$-time randomized algorithm for {\sc diameter}. Doing so, however, is not trivial since there are difficulties in modifying Cabello's technique for ``filling-up" the holes to work with  our persistent representation of the bisectors. This seems doable, but seems to require quite a bit of technical work.

The second main difference is that we develop a deterministic divide-and-conquer construction of Voronoi diagrams, while Cabello uses the randomized incremental construction for abstract Voronoi diagrams~\cite{KleinMM93}. This makes the algorithm more explicit and deterministic.

The third main difference is that our construction of Voronoi diagrams works when the sites lie on multiple holes,\footnote{We assume throughout the paper that the number of holes is constant but, in fact, the dependency of the construction time in our algorithm on the number of holes is polynomial, so we could tolerate a non-constant number of holes.}  
whereas Cabello's use of abstract Voronoi diagrams requires the sites to lie on a single hole.
Assuming that the sites lie on a single hole would significantly simplify multiple components of our construction of Voronoi diagrams, but has its drawbacks. Most concretely, it leads to a more complicated and less elegant algorithm for diameter due to the need to ``fill-up" holes. 
More generally, allowing multiple holes in the construction of Voronoi diagrams leads to a stronger interface that can be more suitable, easier to use, and perhaps even crucial for other applications of Voronoi diagrams on planar graphs beyond {\sc diameter}. As an anonymous reviewer pointed out, computing the diameter of a graph embedded on a surface of genus $g$ seems to reduce to the planar case with $O(g)$ holes.

\ifdefined\fullver
We next list the parts of our Voronoi construction algorithm that would be simplified by assuming that all sites lie on a single hole. We hope this can assist the reader in navigating through some of the more technically challenging parts of the paper.
(i) Parts of the procedure for finding trichromatic vertices in Section~\ref{sec:trichrom} can be made simpler when considering just a single hole. See, e.g., Section~\ref{sec:overview}. (ii) In the algorithm for merging Voronoi diagrams in Section~\ref{sec:voronoi}, only the single hole case (Section~\ref{sec:single}) is required. Sections~\ref{sec:double}--~\ref{sec:final} are not required.
(iii) In the mechanism for reporting the furthest vertex from a site in Section~\ref{sec:prep_max}, the extension for dealing with multiple holes (Section~\ref{sec:max_holes}) is not required.
\fi

\section{Preliminaries}\label{sec:prelims}
\medskip
\noindent
{\bf Planar embedded graphs.}
We assume basic familiarity with planar embedded graphs and planar duality. We treat the graph $G=(V,E)$ as a directed object, where $V$ is the set of vertices, and $E$ is the set of arcs. We assume that for every arc $e=uv$ there is an antiparallel arc $rev(e)=vu$ that is embedded on the same curve in the plane as $e$. Each arc $uv$ has a {\em length} $\ell(uv)$ associated with it. The length of an arc and its reverse need not be equal. We call $u$ the {\em tail} of $e$ and $v$ the {\em head} of $e$. We use the term {\em  edge} when the direction plays no role (i.e., when we wish to refer to the undirected object, not distinguishing
between the two antiparallel arcs).
The dual of a planar embedded graph $G$ is a planar embedded graph $G^* = (V^*, E^*)$, where the  nodes in $V^*$ represent faces in $G$, and the dual arcs in
$E^*$ stand in 1-1 correspondence with the arcs in $E$, in the sense that the arc $e^*$ dual to an arc $e$ connects
the face to the left of $e$ to the face to the right of $e$.
We use some well known properties of planar graphs, see e.g., \cite{planarbook}. If $T$ is a spanning tree of $G$ then the edges not in $T$ form a spanning tree $T^*$ of the dual $G^*$. The tree $T^*$ is called the {\em cotree} of $T$. If $(X,\bar X)$ is a partition of $V$, and the subgraphs induced by $X$ and by $\bar X$ are both connected, then the set of duals of the arcs whose tail is in $X$ and whose head is in $\bar X$ forms a simple cycle in $G^*$.

\medskip
\noindent
{\bf Assumptions.}
By a \emph{piece} $P=(V,E)$ we mean an embedded directed planar graph 
with $t=O(1)$ distinguished faces $h_1,\ldots, h_t$, to which we refer as \emph{holes}.
(In our context, the pieces are the subgraphs of the input graph $G$ produced by an $r$-division.)
We assume, for the diameter algorithm, that arc lengths are non-negative.
This assumption can be enforced using the standard technique of price functions and reduced lengths (see, e.g.,\cite{Cabello,FR06}). 
We assume that all faces of $G$, except possibly for the holes, are triangulated.
This assumption can be enforced by  triangulating each non-triangular face that is not a hole by infinite length diagonal edges.
Except for the dual nodes representing the holes, all other nodes of $P^*$ therefore have degree $3$.

We assume that shortest paths are unique. Our assumption is identical to the one made in~\cite{CabelloCE13}. More specifically, let $P$ be a piece with a set $S$ of boundary sites and additive weights $\wt(\cdot)$. Consider the process in which we vary the weight $\wt(u)$ of exactly one site $u \in S$. We assume that vertex-to-vertex distances in $P$ are distinct, and that shortest paths are unique, except at a discrete set of  critical values of $\wt(u)$ where there is a unique {\em tense} arc (see Section~\ref{sec:bisectors} for the definition). This assumption can be achieved deterministically with $\tilde O(1)$ time overhead using a deterministic lexicographic perturbation~\cite{CabelloCE13,HartvigsenMardon}. See also~\cite{EFL18} for a different approach. 

\medskip
\noindent
{\bf Additively weighted Voronoi diagram in a piece.}
We are given a piece $P=(V,E)$ with  $|V| = O(r)$ (and so $|E| = O(r)$). We are also given a set $S\subseteq V$ of $b=O(\sqrt{r})$ vertices, each of which is a vertex of one of the $O(1)$ holes $h_1,\ldots, h_t$; $S$ is a subset of the boundary vertices of the piece $P$, and its elements are the {\em sites} of the Voronoi diagram that we are going to define. Each site $v\in S$ has a real \emph{weight} $\wt(v)$ 
associated with it. We think of $\wt$ as a weight function on $S$.
For every pair of vertices $u$ and $v$, we denote by $\pi(u,v)$  the shortest path from $u$ to $v$ in $P$.
We denote the length of $\pi(u,v)$ by $d(u,v)$. The (additively weighted)
 {\em distance} between a site $u \in S$ and a vertex $v\in V$, is $\wt(u) + d(u,v)$.

The \emph{additively weighted Voronoi diagram} of $(S,\wt)$ within $P$, denoted by $\VD(S,\wt)$ (we will often
drop $\wt$ from this notation, although the diagram does depend on $\wt$), is a partition of $V$ into pairwise disjoint
sets, one set, denoted by $\Vor_{S,\wt}(u)$, for each site $u\in S$. We omit the subscript when it is clear from the context. The set $\Vor(u)$ contains all vertices closer (by additively weighted distance) to $u$ than to any other site $u' \not= u \in S$.
We call $\Vor(u)$ the (primal) {\em Voronoi cell of $u$}. Note that a Voronoi cell might be empty.
In what follows, we denote the shortest-path tree rooted at a site $u\in S$ as $T_u$. See Figure~\ref{fig:VD} for an illustration of some of the definitions in this section.
The Voronoi diagram has the following basic properties.

\begin{figure}[h]
\begin{center}
\includegraphics[width=0.7\textwidth, clip=true, trim = 0mm 0mm 150mm 0mm]{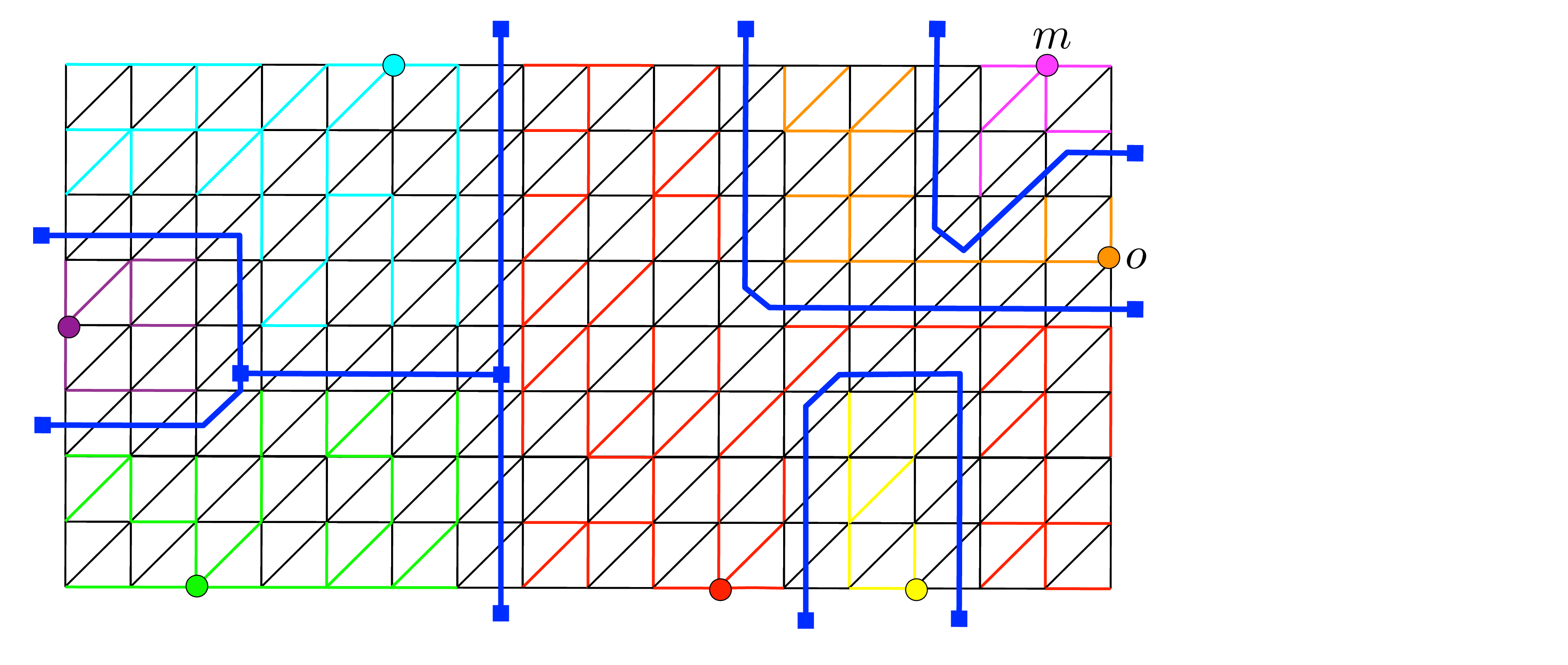}
\caption{\small A graph (triangulated grid) with 7 sites on the infinite face (colored circles). Edge lengths and additive weights are not indicated. For each site $u$, the subtree of $T_u$ spanning $\Vor(u)$ is indicated by edges with the same color as $u$. The dual Voronoi diagram $\VD^*(S)$ is shown in blue. To avoid clutter, edges of $\VD^*(S)$ incident to the infinite face are not shown to have a common endpoint. In this example $\VD^*(S)$ has 3 Voronoi vertices which are indicated by solid blue squares (all the blue squares in the infinite face are in fact the same vertex). In this example, the boundary between the cells of the magenta site $m$ and the orange site $o$ happens to be the entire bisector $\beta^*(m,o)$, which is a simple dual cycle that passes through the infinite face.
\label{fig:VD}}
\end{center}
\end{figure}

\begin{restatable}{lemma}{vorsubtree}
\label{lem:vorsubtree}
For each $u\in S$, the vertices in $\Vor(u)$ form a connected subtree (rooted at $u$) of $T_u$.
\end{restatable}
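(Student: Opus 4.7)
The plan is to establish the following closure property: for every $v \in \Vor(u)$ and every vertex $w$ lying on the path from $u$ to $v$ in $T_u$, we have $w \in \Vor(u)$. Once this is proven, the lemma follows immediately, since $T_u$ is a tree rooted at $u$, and a subset of its vertices that is closed under taking ancestors is precisely a connected subtree rooted at $u$. So the bulk of the work is this closure property, and the rest is a one-line observation.

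To prove the closure property, I will argue by contradiction. Suppose $w$ lies on the $u$-to-$v$ path in $T_u$ but $w \notin \Vor(u)$. Since $T_u$ is a shortest-path tree, the prefix of this path from $u$ to $w$ is $\pi(u,w)$ and the suffix from $w$ to $v$ is $\pi(w,v)$, so $d(u,v) = d(u,w) + d(w,v)$. Because $w \notin \Vor(u)$, there exists a site $u' \neq u$ with
\[
\wt(u') + d(u',w) \;\le\; \wt(u) + d(u,w).
\]
Applying the triangle inequality $d(u',v) \le d(u',w) + d(w,v)$ and adding $\wt(u')$, I get
\[
\wt(u') + d(u',v) \;\le\; \wt(u') + d(u',w) + d(w,v) \;\le\; \wt(u) + d(u,w) + d(w,v) \;=\; \wt(u) + d(u,v).
\]
This contradicts the fact that $u$ is the \emph{unique} minimizer of $\wt(\cdot) + d(\cdot, v)$ over $S$, which is guaranteed by the uniqueness-of-shortest-paths assumption described in the Preliminaries (tie-breaking makes the minimizer at every vertex unique).

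The main, though fairly mild, obstacle is handling the asymmetry between the non-strict inequality available at $w$ (we only know $w$ is \emph{not in} $\Vor(u)$, not that another site is strictly better) and the strict inequality we need to contradict at $v$ (where $u$ is the unique minimizer). The chain of inequalities above is set up so that a non-strict input at $w$ produces a non-strict conclusion at $v$, which is already enough to violate uniqueness at $v$. With that settled, the closure property holds, and the lemma follows: $\Vor(u)$ is a union of $T_u$-paths from $u$, hence a connected subtree of $T_u$ rooted at $u$.
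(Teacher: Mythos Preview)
Your proof is correct and is essentially the explicit version of the paper's one-sentence argument. The paper simply invokes the property that shortest paths from a single source meet only at a common prefix; your closure argument (if $v\in\Vor(u)$ then every $T_u$-ancestor of $v$ is in $\Vor(u)$) is precisely the unpacking of that optimal-substructure property in the additively-weighted setting, with the role of the uniqueness assumption made explicit.
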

\begin{proof}
This is an immediate consequence of the property that shortest paths
from a single source cannot cross one another
(under our non-degeneracy assumption); in fact, they cannot even meet one another except at a common prefix.
\hfill \fullqed \end{proof}

Let $B\subseteq E$ be  the set of edges $vw\in E$ such that the sites $u_1$, $u_2$,
for which $v\in \Vor(u_1)$ and $w\in \Vor(u_2)$, are distinct.
Let $B^*$ denote the set of the edges that are dual to the edges of $B$,
and let $\VD^*(S,\wt)$ (or $\VD^*(S)$ for short) denote the subgraph of $P^*$ with $B^*$ as a set of edges.
The following consequence of Lemma~\ref{lem:vorsubtree}
gives some structural properties of $\VD^*(S)$.

\begin{restatable}{lemma}{dualvor}
\label{lem:dualvor}
The graph $\VD^*(S)$ consists of at most $|S|$ faces, so that each of its faces corresponds to a
site $u\in S$ and is the union of all faces of $P^*$ that are dual to the vertices of $\Vor(u)$.
\end{restatable}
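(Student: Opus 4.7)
The plan is to exhibit a bijection between the non-empty Voronoi cells and the faces of the planar embedding of $\VD^*(S)$ inherited from $P^*$. For each site $u$, let $F_u$ denote the union of the closed faces of $P^*$ dual to the vertices of $\Vor(u)$, with the dual edges interior to $\Vor(u)$ (i.e., duals of primal edges both of whose endpoints lie in $\Vor(u)$) removed from the interior. I would prove that each $F_u$ (with $\Vor(u)\neq\emptyset$) is a single face of $\VD^*(S)$ and that every face arises this way; since $\Vor(u)$ can be empty, this yields at most $|S|$ faces.

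First, I would show that any two dual faces $v_1^*, v_2^*$ with $v_1,v_2\in \Vor(u)$ lie in the same face of $\VD^*(S)$. By Lemma~\ref{lem:vorsubtree}, $\Vor(u)$ induces a connected subgraph of $P$, so there is a primal path $v_1=w_0,w_1,\ldots,w_k=v_2$ with all $w_i\in \Vor(u)$. For every edge $w_iw_{i+1}$ on this path, both endpoints belong to $\Vor(u)$, so by the definition of $B$ the dual edge $(w_iw_{i+1})^*$ is \emph{not} in $\VD^*(S)$. Using the planar embedding, one can concatenate small curves inside the successive dual faces $w_i^*$, crossing each such dual edge transversally; this produces a curve from $v_1^*$ to $v_2^*$ avoiding $\VD^*(S)$, so they lie in the same face.

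Second, for the converse direction, I would take two dual faces $v_1^*, v_2^*$ that lie in the same face of $\VD^*(S)$ and show that $v_1,v_2$ must belong to the same Voronoi cell. Any curve from an interior point of $v_1^*$ to an interior point of $v_2^*$ avoiding $\VD^*(S)$ can, by a standard transversality perturbation inside the embedding of $P^*$, be taken to cross only primal edges transversally. Each such crossing corresponds to a dual edge not in $B^*$, whose primal endpoints must therefore share a Voronoi cell by the definition of $B$. Walking along the curve and using induction on the number of primal-edge crossings (the currently-visited dual face's primal vertex stays in a fixed cell because adjacent primal vertices across a non-$B$ edge share a cell) yields that $v_1$ and $v_2$ are in the same Voronoi cell. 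The two directions together show that $F_u$ is exactly one face of $\VD^*(S)$ and that these exhaust all faces.

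I expect the main obstacle to be the topological bookkeeping in the second step: ensuring that the avoiding curve can be chosen to cross only primal edges transversally and that no ``pinch'' at a vertex of $P$ or at a degenerate intersection with the 1-skeleton of $P^*$ causes an improper crossing. This is standard for the combined planar embedding of $P$ and $P^*$ (in which each primal edge meets only its dual, and only once), but it is the step that must be stated with care. Once this is handled, the face count follows immediately, since distinct non-empty Voronoi cells give interior-disjoint regions whose union covers the plane, and therefore yield at most $|S|$ faces of $\VD^*(S)$.
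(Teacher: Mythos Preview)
Your proof is correct and follows essentially the same approach as the paper. Both arguments hinge on Lemma~\ref{lem:vorsubtree} (connectedness of $\Vor(u)$) to show that the dual faces corresponding to vertices of a single cell merge into one connected region, and on the definition of $B$ to show that edges internal to such a region are absent from $\VD^*(S)$; the paper then invokes the partition of $V$ by the cells to conclude the 1--1 correspondence, whereas you make the converse direction explicit via a transversality/curve-tracking argument, which is a routine unpacking of the same idea.
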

\begin{proof}
For each $u\in S$, the union of all faces of $P^*$ that are dual to the vertices of $\Vor(u)$
is connected, since $\Vor(u)$ is a tree.
Moreover, by construction, a dual edge that separates two adjacent such faces is not an edge of $\VD^*(S)$. 
 Each face of $P^*$ belongs to exactly one face of $\VD^*(S)$,
because the trees $\Vor(u)$ are pairwise disjoint, and form a partition of $V$. Hence the faces of $\VD^*(S)$
stand in 1-1 correspondence with the trees $\Vor(u)$, for $u\in S$, in the sense asserted in the lemma, and the claim follows.
\hfill \fullqed \end{proof}

We refer to the face of $\VD^*(S)$ corresponding to a site $u\in S$ as the \emph{dual Voronoi cell} of $u$, and denote it as $\Vor^*(u)$. ($\Vor^*(u)$ is empty when $\Vor(u)$ is empty.) By Lemma \ref{lem:dualvor},  $\Vor^*(u)$ is the union of the faces dual to the vertices of $\Vor(u)$.
Once we fix a concrete way in which we draw the dual edges in $B^*$ in the plane, we can regard
each $\Vor^*(u)$ as a concrete embedded  planar region. Since the sets $\Vor(u)$ form a partition of $V$, it follows that the sets $\Vor^*(u)$ induce a partition of the sets of dual faces of $P^*$. 

We define a vertex $f^*\in \VD^*(S)$ to be a {\em Voronoi vertex} if its degree in $\VD^*(S)$ is at least $3$.
This means that there exist at least three distinct sites whose primal Voronoi cells contain vertices
incident to the primal face $f$ dual to $f^*$. The next corollary follows directly from Euler's formula for planar graphs.

\begin{corollary}
The graph  $\VD^*(S)$ consists of $O(|S|)$ vertices of degree $\ge 3$ (which are the Voronoi vertices); all other vertices are of degree $2$.
The only vertices of degree strictly larger than $3$ are those corresponding to the non-triangular
holes among $h_1,\ldots, h_t$.
\end{corollary}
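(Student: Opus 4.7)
The plan is to combine a local analysis of the degree of each vertex of $\VD^*(S)$ with a standard Euler-formula counting argument. First, I will observe that every vertex $f^* \in \VD^*(S)$ is dual to some primal face $f$ of $P$, and its degree in $\VD^*(S)$ equals the number of bounding edges of $f$ whose two endpoints belong to distinct Voronoi cells. Since, by our triangulation assumption, every non-hole face of $P$ is a triangle, such an $f^*$ has at most three incident dual edges in $P^*$, and hence at most three in $\VD^*(S)$. This immediately yields the final sentence of the corollary: any vertex of degree strictly greater than $3$ must be dual to one of the $t = O(1)$ non-triangular holes $h_1,\ldots,h_\NH$.

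Next, I will rule out the possibility of degree $1$ (degree $0$ is automatic, since vertices of $\VD^*(S)$ are by definition endpoints of edges in $B^*$). For a triangle $f$ with vertices $p,q,r$, belonging to the same Voronoi cell is an equivalence relation on $\{p,q,r\}$, so the number of edges among $pq$, $qr$, $pr$ that lie in $B$ is $0$, $2$, or $3$, but never $1$. For a hole $h$, a closed walk along $\bd h$ cannot exhibit exactly one cell-transition, since then the starting and ending vertex would lie in different cells; hence the number of incident edges of $h^*$ in $B^*$ is $0$ or at least $2$. Combined with the first step, this shows that every vertex of $\VD^*(S)$ has degree $2$, degree $3$, or (only for a hole) degree larger than $3$, which establishes the middle claim of the corollary.

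Finally, to bound $V_{\ge 3}$, the number of vertices of $\VD^*(S)$ of degree at least $3$, I will suppress every degree-$2$ vertex by replacing its two incident edges with a single subdivision edge. The result is a planar multigraph $H$ on $V_{\ge 3}$ vertices, whose faces are in bijection with the faces of $\VD^*(S)$, so by Lemma \ref{lem:dualvor} it has at most $b$ faces, and in which every vertex has degree at least $3$. Plugging $2E(H) \ge 3V_{\ge 3}$ into Euler's formula $V_{\ge 3} - E(H) + F(H) = 1 + C$ (where $C$ is the number of connected components of $H$) gives $V_{\ge 3} \le 2F(H) - 2 - 2C = O(b)$, as required. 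The only minor subtlety is that $H$ may be disconnected or contain multi-edges and self-loops, but Euler's formula in its $C$-component form is valid in the planar multigraph setting, so this presents no real obstacle.
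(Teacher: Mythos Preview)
Your proof is correct and follows exactly the approach the paper indicates: the paper does not spell out any argument for this corollary, merely stating that it ``follows directly from Euler's formula for planar graphs,'' and your write-up supplies precisely those details (local degree analysis via the triangulation assumption, ruling out degree~$1$, and the standard $2E\ge 3V$ Euler count after suppressing degree-$2$ vertices, using Lemma~\ref{lem:dualvor} for the face bound). One tiny caveat worth adding for completeness: if some connected component of $\VD^*(S)$ is a pure cycle (all vertices of degree~$2$), it contributes no degree-$\ge 3$ vertices and can simply be discarded before forming $H$; this only decreases the face count, so your inequality $V_{\ge 3}\le 2F(H)-2-2C$ remains valid.
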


\begin{restatable}{lemma}{twovvert}
\label{lem:2vvert}
For any three distinct sites $u$, $v$, $w$ in $S$
there are at most two faces $f$ of $P$
such that each of the cells $\Vor(u)$, $\Vor(v)$, $\Vor(w)$ contains a vertex of $f$.
\end{restatable}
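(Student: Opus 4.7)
The plan is to reduce the statement to a simple Euler-formula count on the dual Voronoi diagram of only the three sites $u, v, w$.

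First I would show that it suffices to bound the number of such faces in the three-site diagram $\VD(\{u,v,w\})$. Indeed, if $p\in\Vor(x)$ in $\VD(S)$ for some $x\in\{u,v,w\}$, then by definition $\wt(x)+d(x,p)<\wt(s)+d(s,p)$ for every $s\in S\setminus\{x\}$, and in particular for $s\in\{u,v,w\}\setminus\{x\}$. Hence $p$ lies in the cell of $x$ also in $\VD(\{u,v,w\})$, and every face $f$ whose vertices meet each of $\Vor(u), \Vor(v), \Vor(w)$ with respect to $S$ is still trichromatic with respect to $\{u,v,w\}$. So it is enough to prove the upper bound of two in the three-site setting.

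Next I would count Voronoi vertices of $\VD(\{u,v,w\})$ via Euler's formula. By Lemma~\ref{lem:dualvor}, $\VD^*(\{u,v,w\})$ partitions the plane into at most three regions (one per non-empty cell), so its number of faces satisfies $F\leq 3$. Let $V$, $E$, and $C$ denote the number of vertices, edges, and connected components of $\VD^*(\{u,v,w\})$, and let $V'$ be the number of Voronoi vertices (those of degree $\geq 3$). By the handshake lemma and the preceding corollary (each non-Voronoi vertex has degree exactly $2$),
\begin{equation*}
2E \;\geq\; 3V' + 2(V-V') \;=\; 2V + V',
\end{equation*}
so $V'\leq 2(E-V)$. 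Euler's formula for possibly disconnected planar graphs gives $V-E+F=1+C$, i.e., $E-V=F-1-C$. Combining these with $F\leq 3$ and $C\geq 1$ yields $V'\leq 2(F-1-C)\leq 2$, as required.

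The main subtlety I foresee is the proper bookkeeping of degenerate configurations, each of which only strengthens the estimate: empty Voronoi cells lower $F$; hole vertices that become Voronoi vertices may have degree strictly greater than $3$, which only strengthens the handshake inequality; and connected components consisting purely of degree-$2$ cycles (for instance, nested arrangements where one bisector encloses another cell entirely) contribute to $C$ while contributing nothing to $V'$. In every case the inequality chain above remains valid, so the conclusion $V'\leq 2$ holds.
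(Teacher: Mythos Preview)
Your proof is correct and takes a different route from the paper. The paper argues by contradiction: assuming three trichromatic faces $f_1,f_2,f_3$, it connects each site $x\in\{u,v,w\}$ to a point inside each $f_i$ via the shortest path $\pi(x,\cdot)$ to the vertex of $f_i$ lying in $\Vor(x)$, capped by a short segment into the interior of $f_i$; since shortest paths from a single source do not cross one another, this yields a plane drawing of $K_{3,3}$, which is impossible. Your approach instead reduces to the three-site diagram and applies an Euler count to $\VD^*(\{u,v,w\})$. The $K_{3,3}$ argument is self-contained---it uses only the non-crossing of shortest paths---whereas yours leans on the already-established Lemma~\ref{lem:dualvor} (and implicitly on Lemma~\ref{lem:bicycle} to rule out bichromatic holes of degree $\ge 4$). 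The paper's version has the practical advantage that it extends verbatim to Lemma~\ref{lem:2v3vert}, where each of $u,v,w$ is replaced by a contiguous \emph{set} of sites on a hole: one simply routes the $K_{3,3}$ edges to a super-site placed inside each hole. Your Euler count would also go through there, but only after introducing the same super-site construction to collapse each set to a single site.
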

\begin{proof}
Assume to the contrary that there are three such faces $f_1$, $f_2$, $f_3$.
Let $p_i$, $q_i$, $r_i$, for $i=1,2,3$, denote the vertices of $f_i$ satisfying
$p_i\in\Vor(u)$, $q_i\in\Vor(v)$, and $r_i\in\Vor(w)$. Let $f_i^*$ be an arbitrary
point inside $f_i$, for $i=1,2,3$. We construct the following embedding of $K_{3,3}$
in the plane. One set of vertices is $\{f_1^*,f_2^*,f_3^*\}$ and the other is $\{u,v,w\}$.
To connect $f_i^*$ with $u$, say, we connect $u$ to $p_i$ via the shortest path
$\pi(u,p_i)$, concatenated with the segment $p_if_i^*$. The connections with
$v$, $w$ are drawn analogously. We calim that  the edges in this
drawing do not cross one another.
To see this note that, under our non-degeneracy assumption, shortest paths from a single site do not cross, and shortest paths within distinct Voronoi cells are vertex disjoint since distinct Voronoi cells are vertex disjoint.
We thus get an impossible planar embedding of $K_{3,3}$, a contradiction that implies the claim.
See Figure~\ref{2verts}.
\hfill \fullqed \end{proof}

\begin{figure}[htb]
\begin{center}
\centerline{ \input{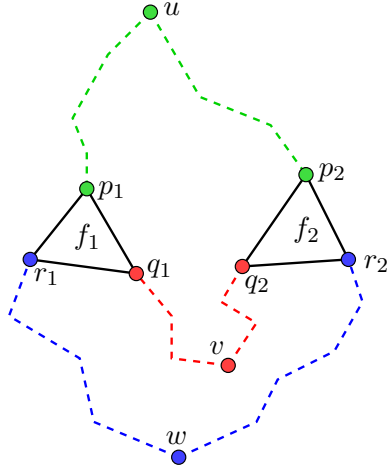}}
\caption{\small Illustration of the proof of Lemma~\ref{lem:2vvert}.
\label{2verts}}
\end{center}
\end{figure}

Let $u$ and $v$ be two distinct sites in $S$. 
Let $D$ denote the doubleton $\{u,v\}$.
The \emph{bisector} between $u$ and $v$ (with respect
to their assigned weights), denoted as $\beta^*(u,v)$, is defined to be the set of arcs of $P^*$ whose corresponding primal arcs have their tail 
in $\Vor_{D,\wt}(u)$ and their head in $\Vor_{D,\wt}(v)$. 
In other words, $\beta^*(u,v)$ is the set of duals of arcs of $P$ whose tail is closer (with respect to $\wt$) to $u$ than to $v$, and whose head is closer to $v$ than to $u$.
Note that, unless we explicitly say otherwise, the bisector $\beta^*(u,v)$ is a directed object, and  $\beta^*(v,u)$ consists of the reverses of the arcs of $\beta^*(u,v)$. Bisectors satisfy the following crucial property.
 
\begin{restatable}{lemma}{bicycle}
\label{lem:bicycle}
$\beta^*(u,v)$ is a simple cycle of arcs of $P^*$. If $u$ and $v$ are incident to the same hole $h$ and $\beta^*(u,v)$ is nonempty then $\beta^*(u,v)$ is incident to $h^*$.
\end{restatable}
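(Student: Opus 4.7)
The plan is to reduce both claims to the planar-duality fact recalled in the Preliminaries, namely that for any partition $(X,\bar X)$ of the vertex set of a planar graph such that each of the two parts induces a connected subgraph, the set of duals of the arcs with tail in $X$ and head in $\bar X$ forms a simple directed cycle in the dual graph. I would apply this with $X=\Vor_{\{u,v\},\wt}(u)$ and $\bar X=\Vor_{\{u,v\},\wt}(v)$. The uniqueness of shortest paths (our non-degeneracy assumption) ensures that every vertex of $P$ lies in exactly one of the two cells, so $(X,\bar X)$ is indeed a partition of $V$. Lemma~\ref{lem:vorsubtree} supplies the required connectivity: each cell is spanned by a subtree of the corresponding shortest-path tree $T_u$ or $T_v$ whose edges are edges of $P$, so the induced subgraph $P[X]$ (resp.\ $P[\bar X]$) contains this spanning subtree and is therefore connected. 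Invoking the duality fact then yields the first conclusion. The degenerate case in which one of the two cells is empty is vacuous, because then $\beta^*(u,v)$ is itself empty.

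For the second claim, assume $\beta^*(u,v)\neq\emptyset$, so that both cells are nonempty. By Lemma~\ref{lem:vorsubtree}, $\Vor(u)$ is a subtree of $T_u$ rooted at $u$, which forces $u\in\Vor(u)$; symmetrically $v\in\Vor(v)$. Since $u$ and $v$ are both incident to the hole $h$, they both lie on the closed boundary walk of $h$. Traversing this walk from $u$ to $v$, the membership in the two cells must switch at least once, so there exists an arc $pq$ on $\bd h$ with $p\in\Vor(u)$ and $q\in\Vor(v)$. The dual arc $(pq)^*$ lies in $\beta^*(u,v)$ by the very definition of the bisector, and it is incident to $h^*$ because $h$ is one of the two primal faces on the two sides of $pq$. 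Consequently the cycle $\beta^*(u,v)$ passes through $h^*$.

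The only mildly delicate point in this plan is to confirm that Lemma~\ref{lem:vorsubtree} really gives \emph{induced} connectivity of each cell within $P$, as demanded by the duality fact, rather than only connectivity through edges of the shortest-path tree. This is immediate, however, because the spanning subtree of $T_u$ is already a subgraph of $P$ whose vertex set is precisely $\Vor(u)$, so $P[\Vor(u)]$ inherits its connectivity. No other step seems to pose any obstacle; both claims follow directly from Lemma~\ref{lem:vorsubtree} together with the standard cut/cycle correspondence in planar duality.
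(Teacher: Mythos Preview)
Your proof is correct and follows essentially the same approach as the paper: both invoke the cut/cycle duality with the partition $(\Vor(u),\Vor(v))$, relying on Lemma~\ref{lem:vorsubtree} for connectivity, and both observe that an arc of $\bd h$ must cross the cut when $u$ and $v$ lie on the same hole. Your version is simply more explicit about why such a boundary arc exists (via the $u\in\Vor(u)$, $v\in\Vor(v)$ argument) and about the empty-cell degenerate case.
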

\begin{proof}
In the diagram of only the two sites $u$, $v$,
$\Vor(u)$ and $\Vor(v)$ form a partition of the vertices of $P$ into two connected sets. Therefore, the set $B$ of arcs with tail in $\Vor(u)$ and head in $\Vor(v)$ is a simple cut in $P$. By the duality of simple cuts and simple cycles (cf.~\cite{planarbook}), $B^* = \beta^*(u,v)$ is a simple cycle.
If $u$ and $v$ are incident to the same hole and $\beta^*(u,v)$ is nonempty,  the simple cut defined by the partition $(\Vor(u),\Vor(v))$ must contain an arc $e$ on the boundary of $h$. Therefore, $e^*$ is an arc of $\beta^*(u,v)$ that is incident to $h^*$.
\hfill \fullqed \end{proof}

By our conventions about the relation between primal and dual arcs, the bisector $\beta^*(u,v)$ is a directed clockwise cycle around $u$, and $\beta^*(v,u)$ is a directed clockwise cycle around $v$. 
Viewed as an undirected object, $\beta^*(u,v)$ corresponds to the dual Voronoi diagram $\VD^*(\{u,v\},\wt)$. See Figure~\ref{fig:VD}.

\section{Computing bisectors during preprocessing} \label{sec:bisectors}

To facilitate an efficient implementation of the algorithm, we carry out
a preprocessing stage, in which  we compute the bisectors
$\beta^*(u,v)$ for every pair of sites $u,v\in S$ and for every pair of weights that can
be assigned to $u$ and $v$. To clarify this statement, note that $\beta^*(u,v)$
only depends on the \emph{difference} $\delta = \wt(v)-\wt(u)$ between the weights
of $u$ and $v$, and that, due to the discrete nature of the setup, $\beta^*(u,v)$
changes only at a discrete set of differences. The preprocessing stage computes,
for each pair $u,v\in S$, all possible bisectors, by varying $\delta$ from $+\infty$
to $-\infty$. As we do this, $\beta^*(u,v)$ ``sweeps'' over $P^*$, moving farther
from $u$ and closer to $v$, in a sense that will be made more precise shortly.
We find all the critical values of $\delta$ at which $\beta^*(u,v)$ changes,
and store all versions of $\beta^*(u,v)$ in a (partially) persistent binary search tree~\cite{persistence}.
Each version of the bisector is represented as a binary search tree on the (cyclic) list
of its dual vertices and edges (which we cut open at some arbitrary point, to make the list linear). Hence, we can find the
$k$-th edge on any bisector $\beta^*(u,v)$ in $O(\log |\beta^*(u,v)|) = O(\log r)$ time,
for any $1\le k \le |\beta^*(u,v)|$
(where $|\beta^*(u,v)|$ denotes the number of edges on the bisector $\beta^*(u,v)$).

 Consider a critical value of $\delta=\wt(v)-\wt(u)$
at which $\beta^*(u,v)$ changes.
We assume (see Section \ref{sec:prelims}) that
there is a unique arc
$yz\in T_v$ such that $y\in \Vor(v)$, $z\in \Vor(u)$, and
$\delta = d(u,z) - d(v,z)$. We say that $yz$ is {\em tense} at $\delta$.
For $\delta' > \delta$, $z$ was a node in the shortest-path (sub)tree $\Vor(u)$,
and for $\delta' < \delta$ it becomes a node of $\Vor(v)$, as a child of $y$. If $z$ was not a leaf of $\Vor(u)$
(at the time of the switch), then the entire subtree rooted at $z$ moves with $z$ from
$\Vor(u)$ to $\Vor(v)$ (this is an easy consequence of the property that, under our unique shortest paths assumption, shortest paths from a single source do not
meet or cross one another). See Figure~\ref{fig:pivot}. These dynamics imply the following crucial property of bisectors (the proof is deferred to Section~\ref{sec:structure}).

\begin{restatable}{lemma}{contigbisector}
\label{lem:contig-bisector}
Consider some critical value $\delta$ of $\wt(v)-\wt(u)$ where $\beta^*(u,v)$ changes.
The dual edges that newly join $\beta^*(u,v)$ at $\delta$ form a contiguous portion of the new bisector, and
the dual edges that leave $\beta^*(u,v)$ at $\delta$ form a contiguous portion of the old bisector.
\end{restatable}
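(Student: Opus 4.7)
The plan is to characterize the leaving and joining edges explicitly, count the transitions between leaving and non-leaving edges along the (simple cycle) bisector, and then bound the number of such transitions by a $K_{3,3}$-planarity argument analogous to Lemma~\ref{lem:2vvert}. Let $V_z$ denote the set of vertices that switch from $\Vor(u)$ to $\Vor(v)$ at $\delta$. By the lemma's preamble and Lemma~\ref{lem:vorsubtree} (so that $\Vor(u)$ is a connected subtree of $T_u$), $V_z$ equals the subtree of $\Vor(u)$ rooted at $z$, i.e., the set of $T_u$-descendants of $z$ that lie in $\Vor(u)$. Since only vertices in $V_z$ change cell-membership, an arc $pq$ is a \emph{leaving} edge (dual to an edge of the old $\beta^*(u,v)$ but not of the new one) iff $p\in V_z$ and $q\in\Vor(v)$, and it is a \emph{joining} edge iff $p\in\Vor(u)\setminus V_z$ and $q\in V_z$.

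Next, I traverse the old $\beta^*(u,v)$: two consecutive bisector edges share a dual vertex $f^*$ dual to a triangular primal face $f$, and a case analysis on how the three vertices of $f$ are distributed among $V_z$, $\Vor(u)\setminus V_z$, $\Vor(v)$ shows that a transition between a leaving and a non-leaving edge occurs at $f^*$ iff the three vertices of $f$ belong to three \emph{distinct} sets among these, i.e., $f$ is \emph{trichromatic} with respect to this three-set partition. The identical case analysis applied to the new bisector shows that transitions between joining and non-joining edges occur at exactly the same trichromatic faces. Thus both tasks reduce to bounding the number of trichromatic faces.

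The heart of the argument is to show that there are at most $2$ such trichromatic faces, by a $K_{3,3}$ construction modeled on the proof of Lemma~\ref{lem:2vvert} with ``sites'' $z,u,v$. Suppose for contradiction there are three trichromatic faces $f_1,f_2,f_3$ with vertices $p_i\in V_z$, $q_i\in\Vor(u)\setminus V_z$, $r_i\in\Vor(v)$. I would connect $z$ to each $p_i$ by the $T_u$-subpath from $z$ to $p_i$ (which stays in $V_z$, since $V_z$ contains both endpoints and the $T_u$-path between them stays inside any connected subtree), $u$ to each $q_i$ by the $T_u$-path (which avoids $V_z$ entirely, since $q_i\notin V_z$ is not a $T_u$-descendant of $z$), and $v$ to each $r_i$ by the $T_v$-path (which stays in $\Vor(v)$). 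Each of these nine curves is a subpath of a shortest path in $P$ emanating from $u$ or from $v$; by the non-crossing property of shortest paths invoked in Lemma~\ref{lem:2vvert}, no two of them cross transversally. Extending each curve into a distinct interior point of $f_i$ yields a planar embedding of $K_{3,3}$, the desired contradiction.

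Since the number of leaving/non-leaving (resp., joining/non-joining) transitions around a simple cycle is even, and we have shown it is at most $2$, it is either $0$ or $2$, and in both cases the leaving edges form a single (possibly empty, possibly full) contiguous arc of the old bisector, and the joining edges form such an arc on the new bisector. The main technical obstacle I anticipate is the $K_{3,3}$ step, specifically verifying that the $T_u$-subpath from $z$ (which is not itself a shortest $z$-to-$p_i$ path) does not cross the $T_v$-paths from $v$: this is fine because the subpath is a suffix of the shortest $u$-to-$p_i$ path, so any transversal crossing of the subpath with a $T_v$-path would also be a crossing of two shortest paths in $P$, contradicting the non-crossing property.
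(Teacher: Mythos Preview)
Your argument is correct and takes a genuinely different route from the paper's proof. The paper does not use a $K_{3,3}$ argument here; instead it invokes Lemma~\ref{lem:consecutive} (proved later, in Section~\ref{sec:arclabels}): the preorder labels $\pre_u(\cdot)$ are strictly monotone along the old bisector, and the leaving arcs are exactly those whose tail lies in the $T_u$-subtree rooted at $z$, i.e., whose $\pre_u$-label lies in the interval $(\pre_u(yz),\pre_u(zy))$. Monotonicity immediately gives contiguity; the joining side is symmetric via $T_v$. So the paper's proof is a one-liner once the preorder-monotonicity lemma is in hand, at the cost of a forward reference to machinery developed for a different purpose. Your proof, by contrast, is self-contained: it reuses only the idea behind Lemma~\ref{lem:2vvert}, applied to the three connected sets $V_z,\ \Vor(u)\setminus V_z,\ \Vor(v)$ rather than to three sites. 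That is a nice observation and arguably more elementary.

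Two small remarks. First, your worry at the end is easier to dispel than you suggest: the $z$-to-$p_i$ path lies in $V_z\subseteq\Vor(u)$ and the $v$-to-$r_j$ path lies in $\Vor(v)$, so they are vertex-disjoint, and vertex-disjoint paths in a planar embedding cannot cross (edges do not cross). No appeal to any ``shortest paths don't cross'' principle across different sources is needed; indeed that principle is false in general, and even in Lemma~\ref{lem:2vvert} the real reason is disjointness of the Voronoi cells. The same disjointness handles all cross-color pairs, and within a color all paths live in a single embedded tree ($T_u$ or $T_v$), hence do not cross. Second, your case analysis at a face assumes a triangle; the bisector may pass through a hole $h^*$. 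This is harmless: a leaving/non-leaving transition at $h^*$ still forces $h$ to have boundary vertices in all three sets, so $h$ counts as a trichromatic face and the $K_{3,3}$ bound applies unchanged.
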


\begin{figure}[h]
\begin{center}
\includegraphics[width=0.4\textwidth, clip=true, trim = 0mm 0mm 150mm 0mm]{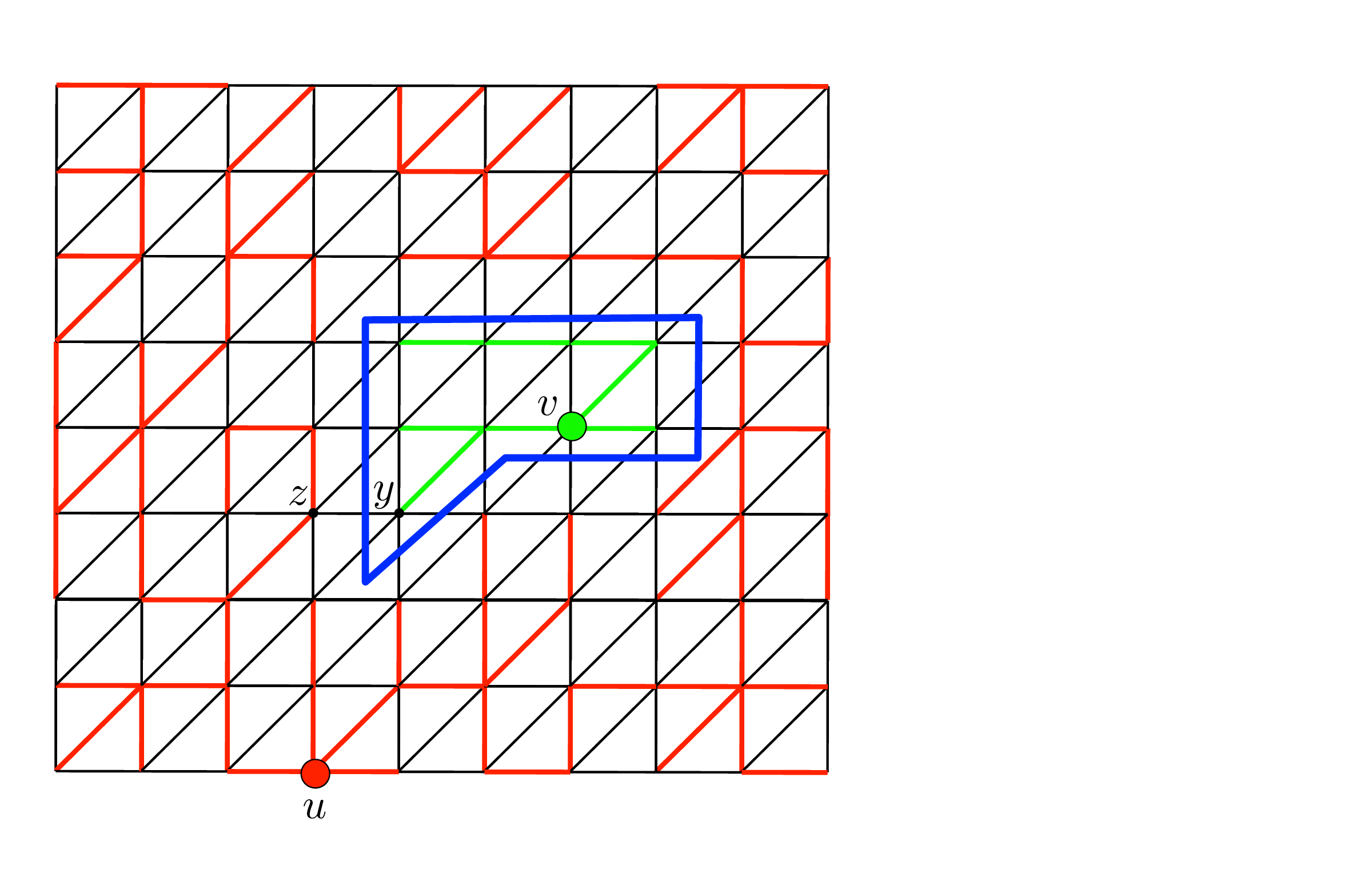}
\hspace{0.1\textwidth}
\includegraphics[width=0.4\textwidth,, clip=true, trim = 0mm 0mm 150mm 0mm]{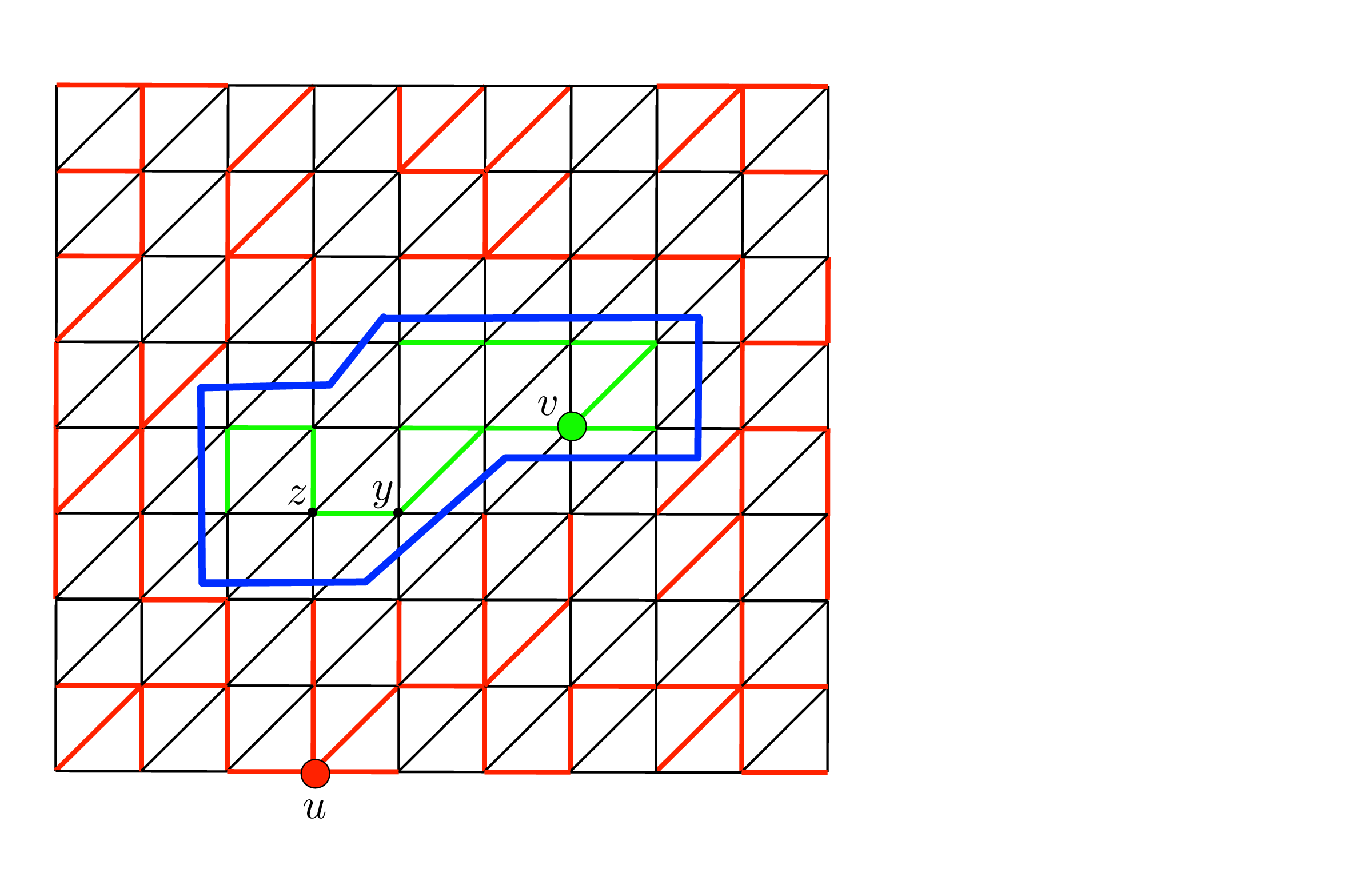}
\caption{\small Illustration of the changes in a bisector $\beta^*(u,v)$ at some critical value $\delta$. $T_u$ is red, $T_v$ is green, and $\beta^*(u,v)$ is blue. The tense arc at $\delta$ is $yz$. Left: the situation just above the critical value $\delta$. The vertex $z$ belongs to $\Vor(u)$. Right: the situation just below $\delta$. The vertex $z$ and all its descendants in $T_u$ switch into $\Vor(v)$. The edges that leave and enter the bisector form a single subpath of the old and new bisector, respectively.  
\label{fig:pivot}}
\end{center}
\end{figure}

We compute and store,  for each vertex $p$, and for each pair of sites $u$, $v$, the unique value 
of $\delta$ at which $p$ moves from $\Vor(u)$ to $\Vor(v)$.
Denote this value as $\delta^{uv}(p)$. Namely, $\delta^{uv}(p) = d(u,p)-d(v,p)$. Thus, for $\delta > \delta^{uv}(p)$, $p \in \Vor(u)$, and for $\delta < \delta^{uv}(p)$, $p \in \Vor(v)$.
Consider an
 arc $pq$ of $P$.
If $\delta^{uv}(p)=\delta^{uv}(q)$ (i.e.\ $pq$ is not the tense arc) then both endpoints of the arc move together
from $\Vor(u)$ to $\Vor(v)$, so the dual arc never becomes an arc of
$\beta^*(u,v)$.
If $\delta^{uv}(p)<\delta^{uv}(q)$ then $q$ moves first (as $\delta$ decreases) and right after time
$\delta^{uv}(q)$ the arc dual to $pq$ becomes an arc of $\beta^*(u,v)$.
It stops being an arc of $\beta^*(u,v)$ at $\delta^{uv}(p)$.
Similarly,
if $\delta^{uv}(q)<\delta^{uv}(p)$ then $p$ moves first and right after time
$\delta^{uv}(p)$ the arc dual to $qp$ becomes an arc of $\beta^*(u,v)$.
It stops being an arc of $\beta^*(u,v)$ at $\delta^{uv}(q)$.

We compute the bisectors $\beta^*(u,v)$ by adding and deleting arcs at the appropriate values of $\delta$.
By Lemma~\ref{lem:contig-bisector}, the arcs that leave and enter $\beta^*(u,v)$ at any particular value of $\delta$ form a single subpath of $\beta^*(u,v)$. We can infer the appropriate position of each arc by ensuring that the order of tails of the arcs of $\beta^*(u,v)$ respects the preorder traversal of $T_u$.\footnote{By preorder traversal we mean breadth first search where descendants of a node are visited according to their cyclic order in the embedding.} 
Note that there are $O(b^2)$ pairs of sites in $S$,
so for each vertex $p \in V$ we compute and store $O(b^2)$  values of $\delta$,
 for a total of $O(rb^2)$ storage.
To compute  the bisectors
for each pair of sites, we perform $O(r)$ updates to the corresponding persistent search tree. 
We have thus established the following towards the preprocessing part of Theorem~\ref{thm:vor}.
\begin{theorem}\label{thm:bisectors}
Consider the settings of Theorem~\ref{thm:vor}. One can compute in $\tilde O(rb^2)$ time and space the persistent binary search tree representations of all possible bisectors for all pairs of sites in $S$. 
\end{theorem}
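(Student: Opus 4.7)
The plan is to execute, for each of the $O(b^2)$ ordered pairs $(u,v)$ of sites, a sweep of the parameter $\delta=\wt(v)-\wt(u)$ from $+\infty$ down to $-\infty$, while maintaining the current bisector $\beta^*(u,v)$ inside a (partially) persistent balanced BST whose in-order keys are the dual arcs along the cycle $\beta^*(u,v)$, cut open at an arbitrary fixed position. Committing a new persistent version after each critical event yields all $O(r)$ snapshots of the bisector for that pair, each supporting $O(\log r)$ indexed access as required by the rest of the algorithm.

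Before the sweeps I would compute $d(u,\cdot)$ for every $u \in S$ by a single-source shortest path computation in $P$, for a total cost of $\tilde O(br)$, and then record $\delta^{uv}(p)=d(u,p)-d(v,p)$ for every vertex $p$ and every ordered pair $(u,v)$. This alone accounts for the $O(rb^2)$ space bound. For a fixed pair $(u,v)$, sorting the vertices by $\delta^{uv}(\cdot)$ produces exactly the sequence of critical values at which $\beta^*(u,v)$ changes; the event at $\delta=\delta^{uv}(z)$ is associated with the unique tense arc $yz \in T_v$, whose processing causes $z$, together with its subtree in $T_u$ restricted to $\Vor(u)$, to migrate from $\Vor(u)$ into $\Vor(v)$.

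The work at each event is a structural update to the persistent BST. By Lemma~\ref{lem:contig-bisector}, the arcs that leave the bisector form a contiguous portion of the old cycle and the arcs that enter form a contiguous portion of the new cycle; I would locate the two endpoints of the leaving portion in the BST, splice it out with a split/concatenate, and then splice in the entering portion, assembled in the canonical order that puts the tails of bisector arcs in preorder of $T_u$. Because each primal arc can enter and leave $\beta^*(u,v)$ at most once during the entire sweep (its interval of membership is determined by the $\delta^{uv}$-values of its two endpoints, and vertices migrate only in one direction), the cumulative cost of one sweep is $\tilde O(r)$, and summing over the $O(b^2)$ pairs gives the claimed $\tilde O(rb^2)$ bound.

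The main obstacle I anticipate is that the keys of the bisector BST are positions along the cycle itself, so the endpoints of the leaving subpath are not directly indexed by the tense arc $yz$. I would resolve this by maintaining, alongside the bisector BST, an auxiliary persistent map from each primal vertex currently on the $\Vor(u)$-side of the bisector to the positions of its incident bisector arcs. Combined with the preorder-of-$T_u$ ordering convention, this map allows the tense arc $yz$ and the recorded boundary of the migrating subtree of $T_u$ to pinpoint the two extremes of both the leaving and the entering subpaths in $O(\log r)$ time, after which the split/concatenate update of the persistent BST proceeds straightforwardly, completing the time and space bounds stated in the theorem.
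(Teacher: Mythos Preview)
Your proposal is correct and follows essentially the same approach as the paper: sweep $\delta$ for each pair, maintain the bisector in a persistent BST, use Lemma~\ref{lem:contig-bisector} to guarantee the updates are contiguous subpaths, and charge $\tilde O(r)$ per pair because every arc enters and leaves the bisector at most once.

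The one place where you make life harder than necessary is the ``main obstacle'' paragraph. The paper sidesteps the positional-key problem entirely by using the preorder label $\pre_u(\cdot)$ of the tail of each bisector arc (in the shortest-path tree $T_u$) as the BST key itself. Lemma~\ref{lem:consecutive} guarantees these labels are strictly monotone along $\beta^*(u,v)$, so the BST order automatically coincides with the cyclic order of the bisector. With that keying, each arc to be inserted or deleted carries its own key, and no auxiliary vertex-to-position map is needed; the leaving and entering subpaths are located directly by two key lookups. Your auxiliary-map solution would work, but it is unnecessary once the preorder labels are used as keys.
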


We remark that the process we described in this section is in fact a special case of the algorithm for multiple source shortest paths~\cite{CabelloCE13} in the case of just two sources.

\section{Additional structure of Voronoi diagrams}\label{sec:structure}
In this section we study the structural properties that will be used in the fast construction of a Voronoi diagram from the precomputed bisectors (Section~\ref{sec:voronoi}).

We first define labels for the arcs of a piece $P$. This is an extension of preorder traversal labels of a spanning tree of $P$ to labels to all arcs of $P$. We will use these labels to prove Lemma~\ref{lem:contig-bisector}, as well as in the weakly bitonic search in Section~\ref{sec:trichrom}.

Let $c \in S$ be a site, and let $T_c$ be the shortest path tree rooted at $c$. 
We define \emph{labels} for the arcs of $P$ with respect to the site $c$.
For the sake of definition only, we modify $P$ as follows. 
For each arc $uv$ of $P$, we create an
artificial vertex $x_{uv}$, and an artificial arc $ux_{uv}$.
The arc $ux_{uv}$ is embedded so that it immediately precedes $uv$ in the counterclockwise cyclic order of arcs around $u$. See Figure~\ref{fig:preorder}.
Let $T'_c$ be the tree obtained by adding to $T_c$ all the artificial arcs. Note that the artificial arcs are leaf arcs in $T'_c$. For each arc $uv\in P$, define its
\emph{label} $\pre_c(uv)$ to be the preorder index of the artificial vertex $x_{uv}$ in $T'_c$,
in a \emph{CCW-first} search of $T'_c$.\footnote{%
  A CCW-first search is a DFS that visits the next unvisited child of a node $u$  in counter-clockwise order. The counter-clockwise order starts and ends at the incoming arc of $u$. For the root vertex $c$, which has no incoming arc, the traversal starts from an imaginary artificial incoming arc embedded in the hole on which $c$ lies.}
Similarly, define the label $\pre_c(vu)$ of the reverse arc $vu$ to be the preorder index of the vertex $x_{vu}$ in $T'_c$.
We define $\pre_c(e^*)$ to be $\pre_c(e)$. The goal of the artificial arcs and vertices is to enable
us to extend the definition of the preorder labels to all arcs of $P$ and their reverses.
Note that this order is consistent with the usual preorder on just the arcs of $T_c$.

\begin{figure*}[h]
\begin{center}
\includegraphics[scale=0.7]{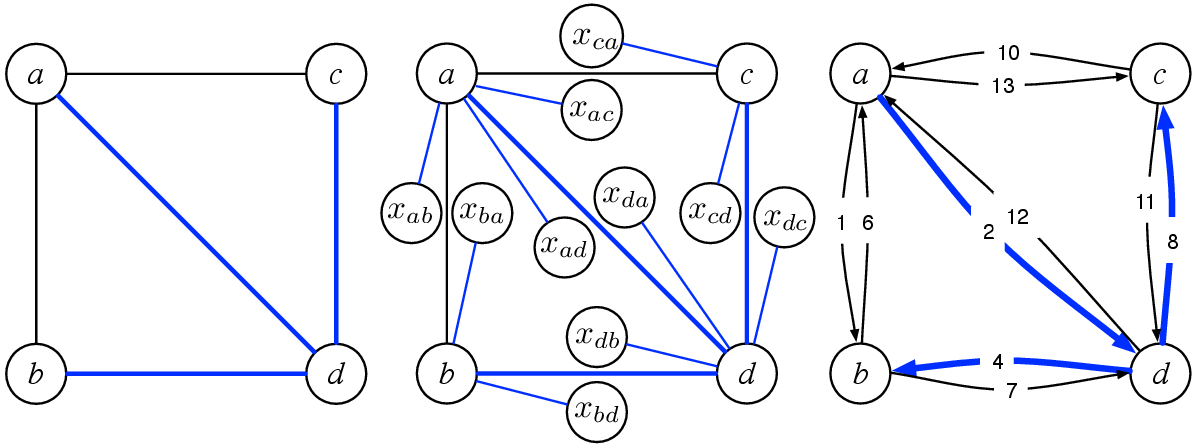}
\caption{\small Illustration of the definition of labels $\pre_c(\cdot)$ for all arcs.
Left: A graph with a spanning tree $T_a$ (in blue), rooted at vertex $a$.
Middle: The same graph with artificial arcs and vertices. The tree $T'_a$ is shown in blue.
Right: The labels $\pre_a(\cdot)$ for all arcs of the original graph.
\label{fig:preorder}}
\end{center}
\end{figure*}

\begin{lemma}
\label{lem:consecutive}
Consider $\VD(\{g,b\},\{\wt(g),\wt(b)\})$ for two sites $g,b$ and additive weights $wt(g),\wt(b)$. Let $c \in \{g,b\}$.  Let $u \neq c$ be a node in the cell $\Vor(c)$ in $\VD(\{g,b\},\{\wt(g),\wt(b)\})$. Let $p$ be the parent of $u$ in $T_c$. Then the following hold:
\begin{enumerate}
\item
The arcs in $R^*_u := \{ e^* \in \beta^*(g,b) \mid \pre_c(pu) < \pre_c(e^*) < \pre_c(up) \}$ form a subpath of $\beta^*(g,b)$.
\item
The labels $\pre_c(e^*)$ are strictly monotone along $R^*_u$.
\end{enumerate}
\end{lemma}

\begin{proof}
Consider the cells $\Vor(g)$ and $\Vor(b)$ of $g$ and $b$ in $\VD(\{g,b\},\{\wt(g),\wt(b)\})$.
They classify the faces of $P$ into three types: (i) those that are not incident to any vertex of $\Vor(g)$,
(ii) those that are not incident to any vertex of $\Vor(b)$, and (iii) those that are incident to a vertex of each set.
Faces of type (iii) are the faces dual to the vertices of $\beta^*(g,b)$. We denote by $f_g$ the union of the faces of
type (i) and (iii), and by $f_b$ the union of the faces of type (ii) and (iii). It follows that for every arc
$e^* \in \beta^*(g,b)$, such that $e=vw$ (so $v$ belongs to $\Vor(g)$), both $v$ and $w$ lie in the face $f_g$,
where $v$ lies on its boundary and $w$ in its interior. See Figure~\ref{fig:consecutive} for an illustration.
Viewed as sets of edges in $P$, $\bd f_g$ and $\bd f_b$ are cycles, which we denote by $\beta_g$ and $\beta_b$, respectively.
Note that the arcs with tail on $\beta_g$ and head on $\beta_b$ are exactly the dual arcs of $\beta^*(g,b)$.
Note also that, since the restriction of $T_g$ to $\Vor(g)$ is a connected subtree of $T_g$ (Lemma~\ref{lem:vorsubtree}),
a branch of $T_g$ that enters $f_g$ (through $\beta_g$) does not leave it.

\begin{figure*}[h]
\begin{center}
\includegraphics[scale=0.43, clip=true, trim = 15mm 15mm 15mm 15mm]{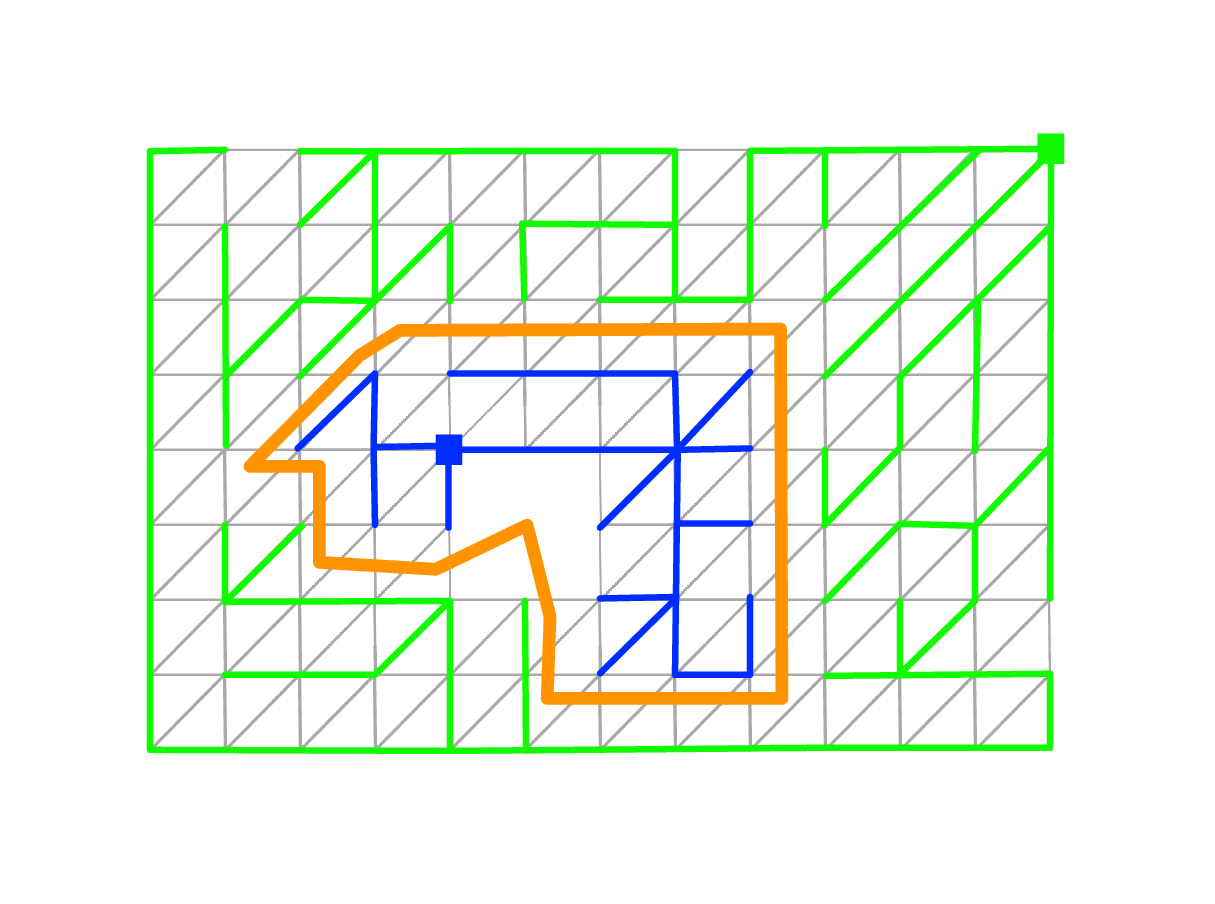}
\hspace{0.3in}
\includegraphics[scale=0.43, clip=true, trim = 15mm 15mm 15mm 15mm]{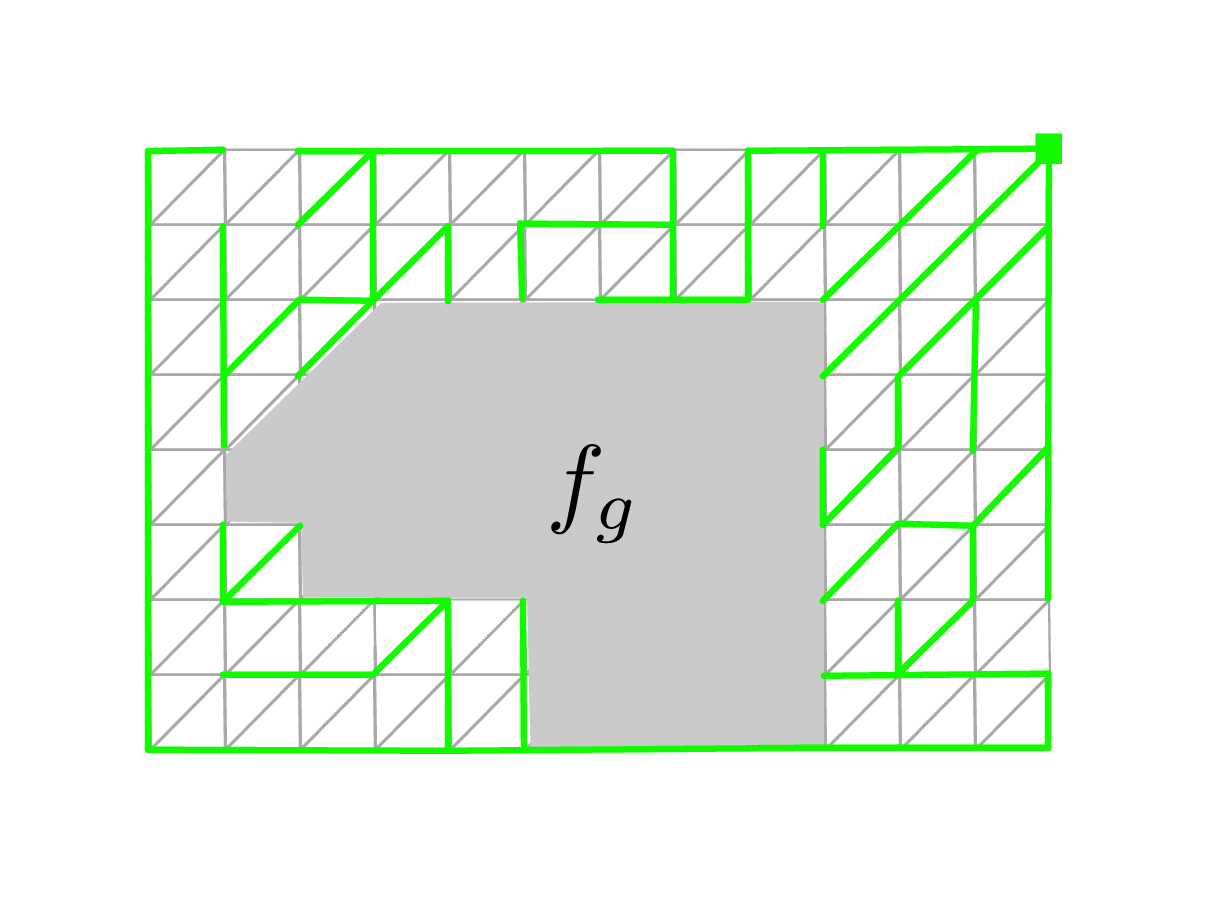}
\includegraphics[scale=0.43, clip=true, trim = 15mm 15mm 15mm 15mm]{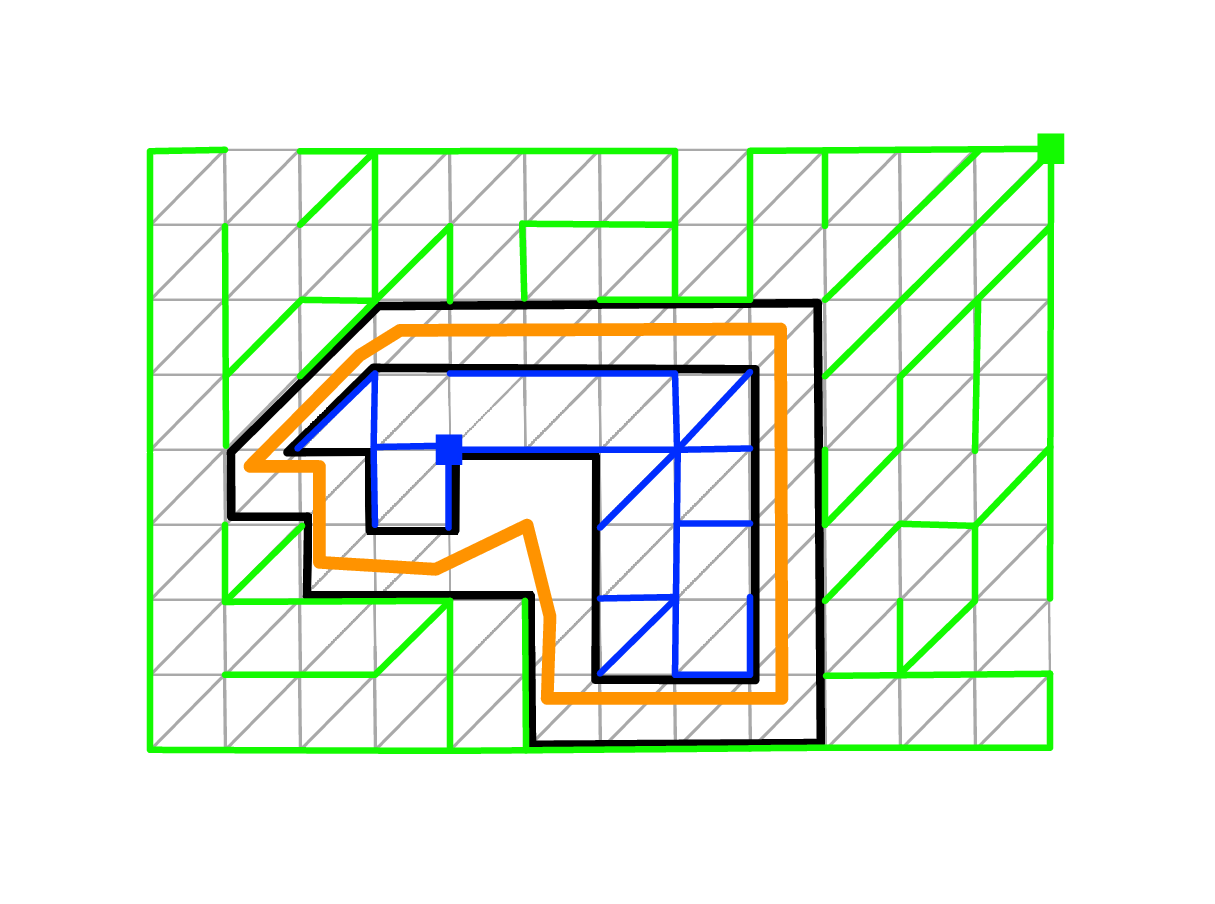}
\hspace{0.3in}
\includegraphics[scale=0.43, clip=true, trim = 15mm 15mm 15mm 15mm]{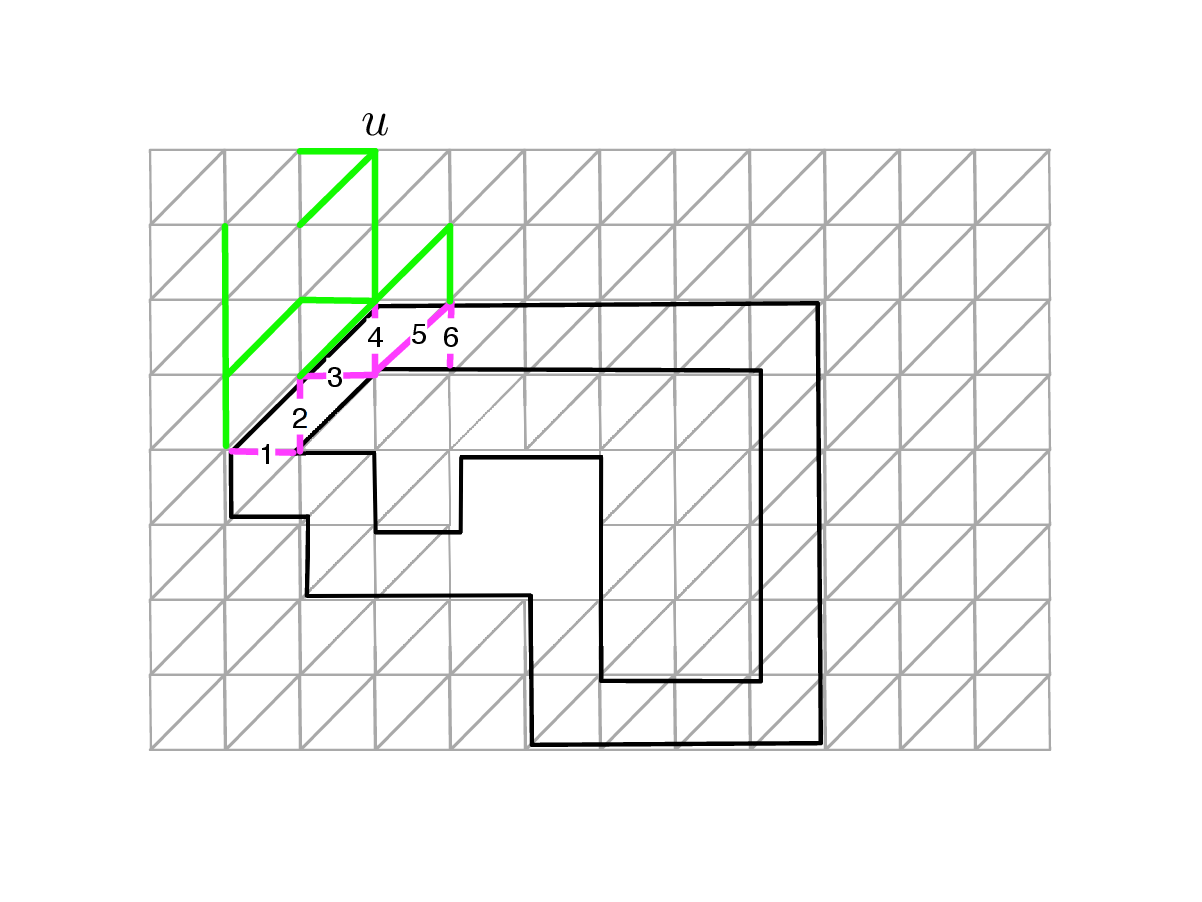}
\caption{\small Illustration of the proof of Lemma~\ref{lem:consecutive}.
Top left: A graph $P$ with two sites $g$ (green square) and $b$ (blue square) lying on two holes. The bisector $\beta^*(g,b)$ is shown in orange.
The subtrees of $T_g$ and $T_b$ spanning the nodes of $\Vor(g)$ and $\Vor(b)$ are shown in green and blue, respectively.
Top right: The Voronoi region $\Vor(g)$ is shown in green. The face $f_g$ is grey.
Bottom left: The cycles $\beta_g$ and $\beta_b$ corresponding to the boundaries of
$f_g$ in $\Vor(g)$ and of $f_b$ in $\Vor(b)$, respectively, are indicated in black.
Bottom right: A node $u \in \Vor(g)$ is indicated along with the subtree of $T_g$ rooted at $u$.
The primal edges of $R^*_u$ are shown in magenta. Their relative order with respect to
the labels $\pre_g(\cdot)$ is indicated.}
\label{fig:consecutive}
\end{center}
\end{figure*}

Assume, without loss of generality, that $c=g$. Consider the pair of arcs
$e^*_1, e^*_3 \in \beta^*(g,b)$ such that $\pre_g(pu) < \pre_g(e_1) < \pre_g(e_3) < \pre_g(up)$
(that is, $e_1^*$, $e_3^*\in R^*_u$), so that $\pre_g(e_1)$ is smallest and $\pre_g(e_3)$ is largest
under the above constraints. Denote $e_i = v_iw_i$, for $i=1,3$ (so $v_i$ is the endpoint in $\Vor(g)$).
By definition of preorder, both $v_1$ and $v_3$ are descendants of $u$ in $T_g$.
Consider the (undirected) cycle $\gamma$ formed by the $u$-to-$v_1$ path in $T_g$, the $u$-to-$v_3$ path in $T_g$,
and one of the two $v_1$-to-$v_3$ subpaths of $\beta_g$, chosen so that $\gamma$ does not enclose the
blue site $b$. 
Denote this subpath of $\beta_g$ by $\beta_g^u$. Since no branch of $T_g$ exits $f_g$, and no pair of paths in $T_g$ cross each other, it follows that the vertices of $\beta_g^u$ are precisely the descendants of $u$ in $T_g$ that belong to $\beta_g$. Hence, by definition of the preorder labels, $R^*_u$ is exactly the set of arcs of $\beta^*(g,b)$ whose corresponding primal edges have their tail in $\beta_g^u$. I.e., a subpath of $\beta^*(g,b)$, showing (1).

Let $e,e'$ be two arcs of $R^*_u$ such that $e$ appears before $e'$ on $\beta^*(g,b)$. Let $vw$, $v'w'$ be the corresponding primal arcs, respectively. Suppose w.l.o.g.\ that $\beta^*(g,b)$ is a CCW cycle. Since no branch of $T_g$ exits $f_g$, and no pair of paths in $T_g$ cross each other, the fact that $e$ precedes $e'$ in $\beta^*(g,b)$ implies that the $g$-to-$x_{vw}$ path in $T'_g$ is CCW to the $g$-to-$x_{v'w'}$ path in $T'_g$, so $\pre_g(e')<\pre_g(e)$, which proves (2).
\hfill \fullqed \end{proof}

We first provide the deferred proof of Lemma~\ref{lem:contig-bisector}, repeated here for convenience, and then continue with additional properties and terminology.

\contigbisector*
\begin{proof}
Let $\Vor^-(u)$ (resp., $\Vor^+(v)$) be the primal Voronoi cell of $u$ (resp., $v$) right before (resp., after) $\delta$.
Let $yz$ be the tense edge that triggers the switch, so $z$ is the root of the subtree that
 moves from $\Vor^-(u)$ to $\Vor^+(v)$ at  $\delta$.
Let $\beta^{*-}(u,v)$ and $\beta^{*+}(u,v)$ denote, respectively, the $uv$-bisector immediately
before and after $\delta$.
By Lemma~\ref{lem:consecutive} the preorder numbers (in $T_u$, say) of the edges along
$\beta^{*-}(u,v)$ are monotonically increasing and therefore
the arcs of $\beta^{*-}(u,v)$ whose tails are in the subtree of $z$ must form a continuous portion of
$\beta^{*-}(u,v)$.
An analogous argument applies to the arcs of $\beta^{*+}(u,v)$ whose heads are in the subtree of $z$.
\hfill \fullqed \end{proof}

\begin{lemma} \label{lem:bisectors}
Let $f^*$ and $g^*$ be two Voronoi vertices of $\VD^*(S)$, which are consecutive on
the common boundary between the cells $\Vor^*(u)$ and $\Vor^*(v)$. Then the path
between $f^*$ and $g^*$ along this boundary in $\VD^*(S)$ is a subpath
of $\beta^*(u,v)$.
\end{lemma}
\ifdefined\fullver
\begin{proof}
Let $xy$ be any edge whose dual is in on the path of $\VD^*(S)$ between $f^*$ and $g^*$, such that
 $x\in \Vor(u)$ and $y\in \Vor(v)$. Then $x$ is closer to
$u$ than to $v$ and $y$ is closer to $v$ than to $u$. It follows that
$x\in \Vor(u)$ and $y\in \Vor(v)$ also when $S$ contains only the two sites $u$ and $v$. It follows that the path between $f^*$ and $g^*$ in $\VD^*(S)$ is also a subpath of $\beta^*(u,v)$.
\hfill \fullqed \end{proof}
\else

We omit the (trivial) proof.
\fi
We next generalize the notion of bisectors to sets of sites.
Let $h_g, h_b$ be (not necessarily distinct) holes of $P$.
Let $G \subset S$ be a set of ``green'' sites incident to $h_g$ and let
 $B \subset S$ be a set of ``blue'' sites incident to $h_b$; when $h_g=h_b$ we require that $G$ and $B$ be separated along $\bd h_g$.
Define $\beta^*(G,B)$ to be the set of edges of $P^*$ whose corresponding primal arcs have their tail in $\Vor(g_i)$ and head in $\Vor(b_j)$, for some $g_i \in G$, and $b_j \in B$.

\begin{lemma} \label{lem:GBcycle}
If $h_g \neq h_b$, or if $h_g = h_b$ and the sets $G,B$ are separated along the boundary of $h_g$, then
$\beta^*(G,B)$ is a non-self-crossing cycle of arcs of $P^*$. If $|G|>1$ (resp., $|B| > 1$) then $h^*_g$ (resp., $h^*_b$)  may have degree greater than 2 in $\beta^*(G,B)$.
 All other dual vertices have degree $0$ or $2$ in $\beta^*(G,B)$. Furthermore, if $h_g = h_b$ and if $\beta^*(G,B)$ is nonempty, then $\beta^*(G,B)$ contains $h^*_g$ (possibly multiple times).
\end{lemma}
\begin{proof}
Embed a ``super-green'' vertex $g$ inside $h_g$, and a ``super-blue'' vertex $b$ inside $h_b$, and assign weight $0$ to both $g$ and $b$. This can be done without violating planarity also when $h_g=h_b$,
since in this case $G,B$ are separated along $\bd h_g$.  Connect $g$ (resp., $b$) to the green (resp., blue) sites with arcs $gg_i$ (resp., $bb_i$) of weight $\wt(g_i)$ (resp., $\wt(b_i)$). By Lemma~\ref{lem:bicycle}, the bisector $\beta^*(g,b)$ is a simple cycle in this auxiliary graph. However, this cycle can go through the artificial faces created inside the holes $h_g, h_b$. Deleting the artificial arcs in the primal is equivalent to contracting them in the dual, which contracts all the artificial faces into the dual faces $h^*_g$  and $h^*_b$. This contraction turns $\beta^*(g,b)$ into $\beta^*(G,B)$, as is easily checked, and might give rise to non-simplicities of $\beta^*(G,B)$ at $h^*_g$ and $h^*_b$. See Figure~\ref{fig:nonsimple-bis}.

\begin{figure}[h]
\begin{center}
\includegraphics[scale=0.5, clip=true, trim = 0mm 150mm 70mm 0mm]{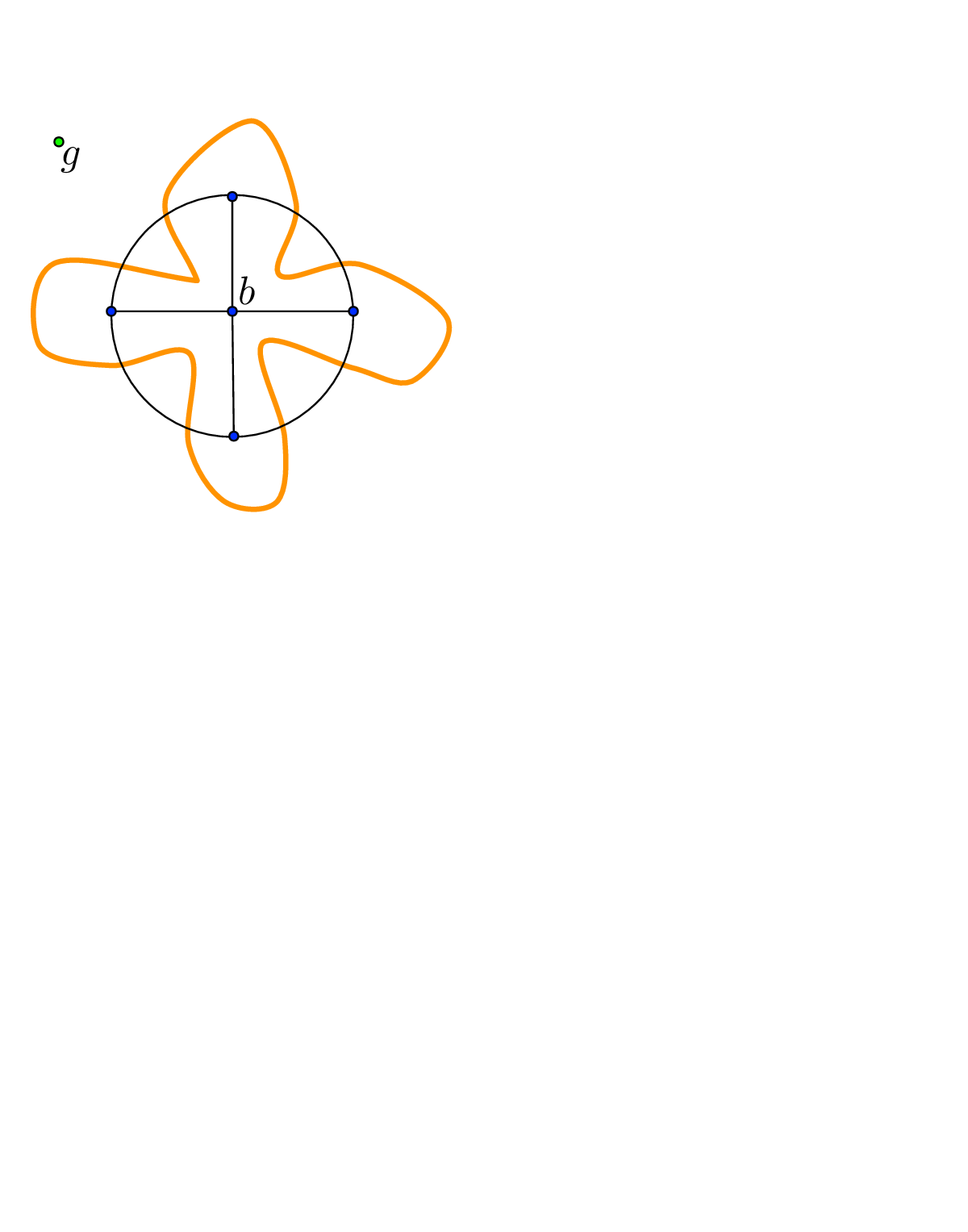}
\hspace{0.1in}
\includegraphics[scale=0.5, clip=true, trim = 0mm 150mm 70mm 0mm]{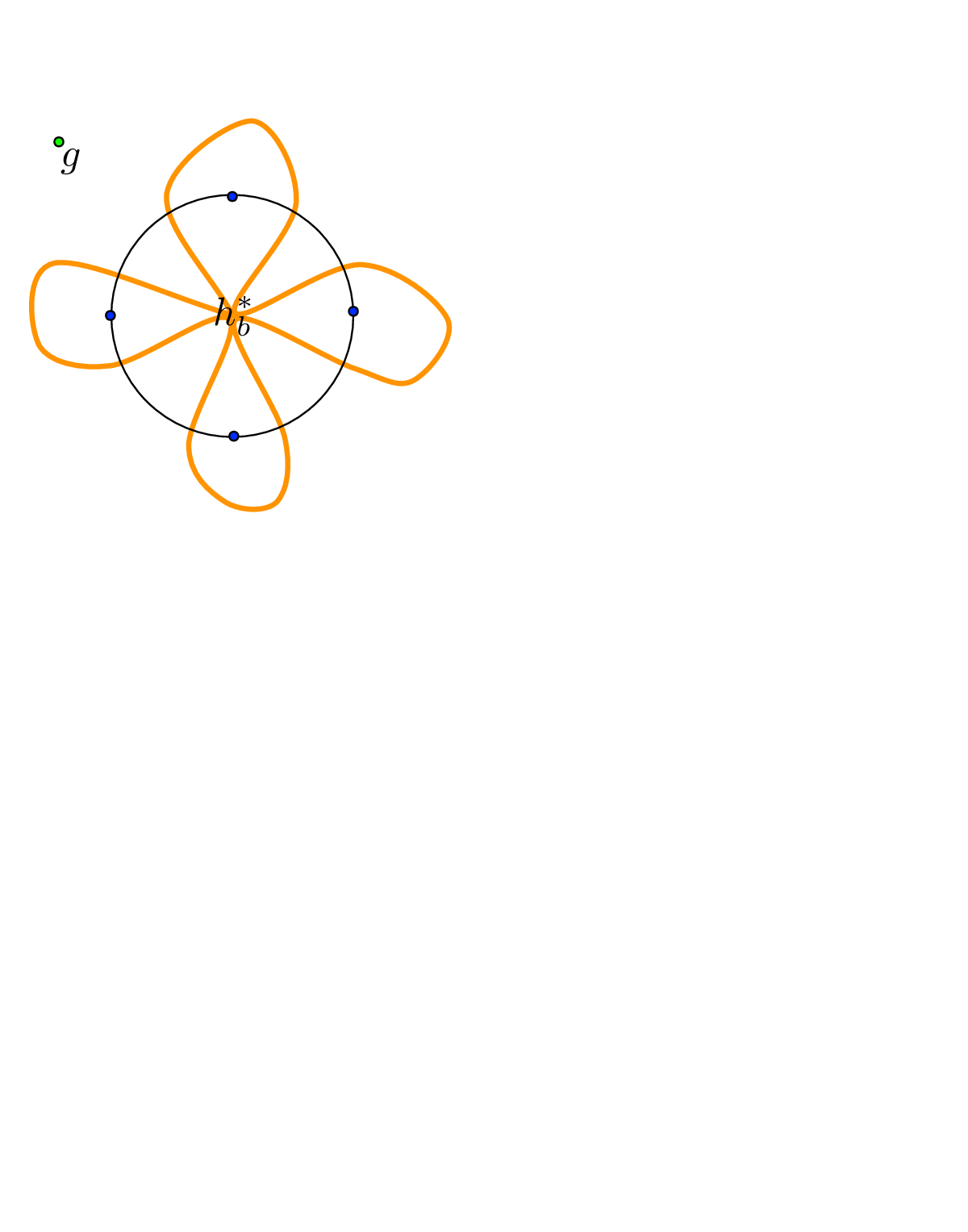}
\caption{\small The structure of the bisector $\beta^*(G,B)$ when $|G|=1$. Left: a set $B$ of four blue sites on the hole $h_b$ (black cycle). An artificial blue site $b$ is embedded in $h_b$ and connected to all blue sites. The bisector $\beta^*(g,b)$ is a simple cycle. Right: When removing the artificial arcs the bisector $\beta^*(G,B)$ visits $h^*_b$ multiple times.
\label{fig:nonsimple-bis}}
\end{center}
\end{figure}

The proof of the final property is identical to the one in Lemma~\ref{lem:bicycle}. Namely, If $h_g=h_b$ and $\beta^*(G,B)$ is nonempty, then the simple cut defined by the partition $(\Vor(g),\Vor(b))$ must contain at least one arc $e$ on the boundary of $h_g$. Therefore, $e^*$ is an arc of $\beta^*(u,v)$ that is incident to $h_g^*$, and multiplicities can arise when there are  several such arcs $e$.
\hfill \fullqed \end{proof}

Let $h_g,h_b$ be (not necessarily) distinct holes. Let $g\in S$ be a site on $h_g$, and let $B \subset S$ be a set of sites on $h_b$. Consider $\VD(\{g\}\cup B)$. Let $\beta^*(g,B)$ be the set of edges of $P^*$ whose corresponding primal arcs have their tail in $\Vor(g)$ and their head in $\Vor(b_j)$ for some $b_j \in B$.
\begin{lemma} \label{lem:gBcycle}
For any $b \in B$, $\beta^*(g,B)$ contains at most a single segment (contiguous subpath) of $\beta^*(b,B)$.
\end{lemma}
\begin{proof}
By Lemma~\ref{lem:GBcycle}, $\beta^*(g,B)$ is a non-self-crossing cycle.  Designate an arbitrary arc of $\beta^*(g,B)$ as its beginning (just to define a linear order on the arcs of $\beta^*(g,B)$).  
Let $e_1^*$ (resp., $e_2^*$) be the first (resp., last) arc of $\beta^*(g,B)$ such that the primal arc $e_1$ (resp., $e_2$) has an endpoint $x_1$ (resp., $x_2$) in $\Vor(b)$.  
Let $y_1$ (resp., $y_2$) be the other endpoint $e_1$ (resp., $e_2$), namely, the one belonging to $\Vor(g)$ in $\VD(\{g\}\cup B)$.
Consider the cycle $C$ formed by the $b$-to-$x_1$ path of $T_b$, $e_1$, the $y_1$-to-$g$ path in $T_g$, the $g$-to-$y_2$ path in $T_g$, $e_2$, and the $x_2$-to-$b$ path in $T_b$. By choice of $e_1^*,e_2^*$, $C$ encloses all arcs of $\beta^*(g,b) \cap \beta^*(g,B)$, but since the only sites in $\{g\}\cup B$ enclosed by $C$ are $g$ and $b$, and since every Voronoi cell is connected, all vertices enclosed by $C$ belong to either $\Vor(g)$ or to $\Vor(b)$ in $\VD(\{g\}\cup B)$. Therefore, all edges of $\beta^*(g,B)$ between $e^*_1$ and $e^*_2$ belong to $\beta^*(g,b)$, proving the claim. See Figure~\ref{fig:beta-gB}. 
\end{proof}

\begin{figure}[h]
\begin{center}
\includegraphics[scale=0.7]{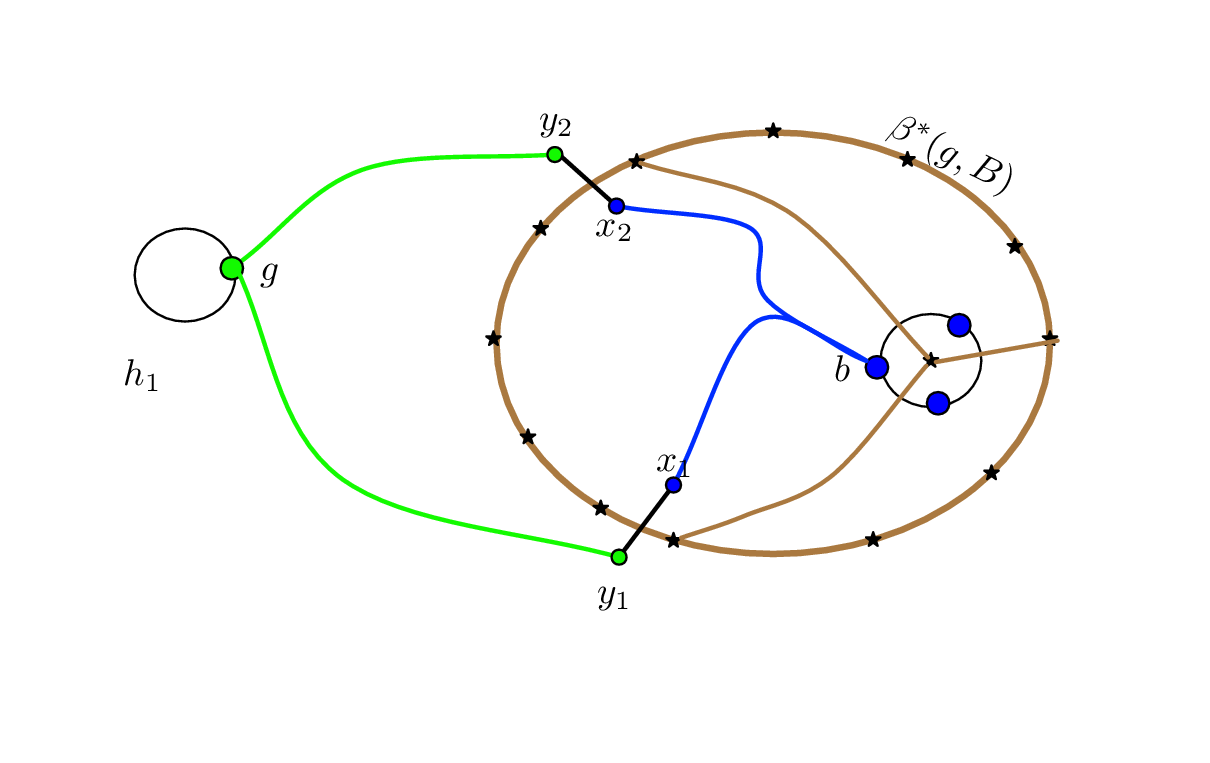}
\caption{\small Illustration of the proof of Lemma~\ref{lem:gBcycle}. $\VD^*(g,B)$ is shown in brown. The paths of $T_g$ and $T_b$ comprising the cycle $C$ are shown in green and blue, respectively.
\label{fig:beta-gB}}
\end{center}
\end{figure}

\section{Computing Voronoi vertices}\label{sec:trichrom}

Consider the settings of Theorem~\ref{thm:vor}. We will henceforth only deal with a single subgraph $P$, so to simplify notation we denote the size of $P$ by $n$ (rather than $r$). 
Recall that, by Lemma~\ref{lem:2vvert}, a Voronoi diagram with three sites has at most two Voronoi vertices. In this section we prove the following theorem. 

\begin{theorem}\label{thm:tri}
Let $P$ be a directed planar graph with real arc lengths, $n$ vertices, and no negative length cycles. Let $S$ be a set of  sites that lie on the boundaries of a constant number of faces (holes) of $P$. 
One can preprocess $P$ in $\tilde O(n|S|^2)$ time so that, given any three sites $r,g,b \in S$ with additive weights 
$\wt(\cdot)$, one can find the (at most two) Voronoi vertices of $\VD^*(\{r,g,b\})$ in $\tilde O(1)$ time.
\end{theorem}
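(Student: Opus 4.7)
The plan is to reduce each query to a search on a single precomputed bisector. Given three sites $r,g,b$ with weights, I fix $\wt(g)$ and $\wt(b)$ and view $\wt(r)$ as descending from $+\infty$ to its queried value. Retrieving $\beta^*(g,b)$ from the persistent BST of Theorem~\ref{thm:bisectors} (at the version keyed by $\wt(b)-\wt(g)$) gives a simple oriented cycle of dual arcs. As $\wt(r)$ decreases, the cell $\Vor(r)$ in $\VD(\{r,g,b\})$ starts empty and grows as a subtree of $T_r$, gradually ``annexing'' arcs of $\beta^*(g,b)$; the (at most two) trichromatic Voronoi vertices sought by the theorem are the duals of the primal faces at the two endpoints of the surviving (un-annexed) sub-path of $\beta^*(g,b)$ at the queried $\wt(r)$.

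During preprocessing I store, in addition to the persistent BSTs of Theorem~\ref{thm:bisectors}, the table $\delta^{uv}(p) = d(u,p) - d(v,p)$ for every vertex $p$ and every ordered pair $u,v \in S$; this stays within $\tilde O(n|S|^2)$ time and space. For any arc $pq$ on $\beta^*(g,b)$, with $p$ on the $g$-side and $q$ on the $b$-side, a short calculation shows that the arc is annexed by $\Vor(r)$ iff $\wt(r) < T(pq)$, where $T(pq) := \max\{\wt(g)-\delta^{rg}(p),\ \wt(b)-\delta^{rb}(q)\}$ is evaluable in $O(1)$ from stored data; an arc at any given cyclic position can be retrieved in $O(\log n)$ via the persistent BST, making the annexation test $O(\log n)$ in total.

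Next I prove a contiguity lemma for the sweep: at every instant the annexed arcs form a contiguous sub-path of the cycle $\beta^*(g,b)$. The argument mirrors Lemma~\ref{lem:contig-bisector}: each critical $\wt(r)$-event relocates exactly one vertex $p$, together with its entire $T_r$-subtree (by the non-crossing property of single-source shortest paths), out of its previous cell into $\Vor(r)$, so the newly annexed arcs of $\beta^*(g,b)$ adjoin the already-annexed ones and contiguity is preserved inductively. Consequently the surviving arcs form a contiguous sub-path too, with at most two endpoints, and these are the claimed Voronoi vertices.

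The main obstacle is to locate these two endpoints in $\tilde O(1)$ time, because $T(\cdot)$ is not monotone around $\beta^*(g,b)$ and a textbook binary search does not apply; all I know structurally is that the annexed set is some contiguous cyclic interval (possibly empty, possibly the whole cycle). My plan is a two-phase variant: first, find a witness arc inside the annexed interval (or certify the interval is empty) by a guided descent through the persistent BST for $\beta^*(g,b)$ that uses the $\delta$-tables and the structural properties of the sweep to decide, at each level, which child subtree contains an annexed arc; then, anchored at the witness, run two monotone one-sided bisections outward along the cycle until each of the two boundary arcs is reached, each step costing $O(\log n)$ for a BST access plus $O(1)$ for a $T$-evaluation. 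The whole query uses $O(\log^2 n) = \tilde O(1)$ time. The principal technical difficulty is the splitter-side decision rule of the first phase, together with correctly handling the two degenerate cases where the annexed interval is empty (no trichromatic vertex exists) or equals the entire cycle (so $\Vor(r)$ fully absorbs $\beta^*(g,b)$).
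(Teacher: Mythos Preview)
Your high-level plan matches the paper's: retrieve $\beta^*(g,b)$, prove that the set of arcs ``annexed'' by $\Vor(r)$ at the queried weight forms a contiguous cyclic interval, and output its two endpoints as the trichromatic vertices. Your threshold $T(pq)$ is exactly the paper's $\Delta^r(e^*)$, and your contiguity claim is the paper's Lemma~\ref{lem:Delta-consecutive} (though your inductive sketch via ``the $T_r$-subtree moves'' is not quite right---what moves at a critical event is the subtree of $v$ in its \emph{former} cell $\Vor_x^+(c)$, not in $T_r$; the paper instead derives contiguity directly from the $K_{3,3}$ bound of Lemma~\ref{lem:2vvert}).

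The genuine gap is your first phase. You correctly flag ``the splitter-side decision rule'' as the principal difficulty, but you do not supply one, and the obvious candidates fail within the stated budget. Decorating each BST node of $\beta^*(g,b)$ with subtree-minima of $\delta^{rg}$ and $\delta^{rb}$ would let you descend to an annexed arc, but those keys depend on the third site $r$, so you would need $\Theta(|S|)$ decorations per node and hence $\tilde O(n|S|^3)$ preprocessing, overshooting the theorem. The paper's fix is a structural observation you are missing: if $c\in\{g,b\}$ is the site whose two-site cell contains $r$, then an arc of $\beta^*(g,b)$ maximizing $\Delta^r$ is incident to the dual vertex of $\beta^*(g,b)$ that is \emph{closest in the cotree $T_c^*$ to the hole $h_r$ containing $r$} (Lemma~\ref{lem:trimax}). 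Because there are only $O(1)$ holes, one stores $O(1)$ cotree-distance labels $\ell_c^{h}$ per BST node, staying within $\tilde O(n|S|^2)$, and the witness is found by a single subtree-min query. Once you have this witness your second phase (two outward 0/1 bisections for the annexed/non-annexed boundary) is sound and arguably cleaner than the paper's weakly-bitonic search with {\sc GetInterval}, but without the witness step the proposal is incomplete.
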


We will actually prove a more general theorem that extends the single site $b$ to a subset of sites $B$ on the same hole. 
Let $r,g$ be two sites, and let $B \subseteq S \setminus\{r,g\}$ be a set of sites on some hole $h$. Consider adding an artificial site $v_B$ embedded inside $h$ and connected to all sites in $B$ with artificial arcs whose lengths are the corresponding additive weights of the sites in $B$. Denote by $P'$ be the resulting graph. We define the bisector  $\beta^*(g,B)$ in $P$ to be the bisector $\beta^*(g,v_B)$ in $P'$ (ignoring the artificial arcs). Similarly, we define the diagram $\VD(r,g,B)$ in $P$ to be the diagram $\VD(\{r,g,v_B\})$ in $P'$. The cell $\Vor(B)$ contains all vertices closer (by additively weighted distance) to some $b_i \in B$ than to any other site $u' \in \{r,g\}$. By Lemma~\ref{lem:2vvert}, the dual diargram $\VD^*(r,g,B)$ consists of at most two vertices, each corresponding to a face in $P$ that contains a vertex in $\Vor(r)$, a vertex in $\Vor(g)$, and a vertex in $\Vor(v_B)$.

\begin{theorem}\label{thm:tri-extended}
Consider the settings of Theorem~\ref{thm:tri}. One can preprocess $P$ in $\tilde O(n|S|^2)$ time so that the following procedure takes $\tilde O(1)$ time. The inputs to the procedure are: (i) two sites $r,g\in S$, and a set $B\subset S\setminus\{r,g\}$ of sites on a single hole $h$, with respective additive weights $\wt(\cdot)$, (ii) the site $x \in \{g\} \cup B$ minimizing the (additive) distance from $x$ to $r$, and (iii) a representation of the bisector $\beta^*(g,B)$ that allows one to retrieve the $k$-th vertex of $\beta^*(g,B)$ in $\tilde O(1)$ time. The output of the procedure are the (at most two) vertices of $\VD^*(r,g,B)$.
\end{theorem}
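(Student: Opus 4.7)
The plan is to prove Theorem~\ref{thm:tri-extended} by combining a structural lemma about how $\Vor(r)$ evolves when only $\wt(r)$ varies with a binary-search procedure that locates the trichromatic vertices on the cycle $\beta^*(g,B)$.

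\textbf{Structural lemma.} Fix $\wt(g)$ and the weights of every $b_i\in B$, and consider $\VD(r,g,B)$ as a one-parameter family in which $\wt(r)$ decreases from $+\infty$. By Lemma~\ref{lem:vorsubtree}, $\Vor(r)$ is always a subtree of $T_r$, so $\Vor^*(r)$ is a connected region of the dual embedding. I would prove that throughout this sweep, the set of dual vertices of the cycle $\beta^*(g,B)$ whose primal face contains a vertex of $\Vor(r)$ forms a contiguous arc of the cycle, growing monotonically as $\wt(r)$ decreases. The two endpoints of the annexed arc for the given $\wt(r)$ are exactly the (at most two) Voronoi vertices of $\VD^*(r,g,B)$. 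Contiguity is proved by induction on the discrete critical events of the sweep: at each event a tense arc transfers a subtree of $T_r$ into $\Vor(r)$, and an adaptation of Lemma~\ref{lem:contig-bisector} implies that the dual faces of $\beta^*(g,B)$ this new subtree adjoins form a sub-path that extends, rather than splits, the previously annexed arc.

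\textbf{Binary search using the anchor $x$.} A short triangle-inequality argument shows that, as $\wt(r)$ decreases from $+\infty$, the first vertex of $P$ to enter $\Vor(r)$ is $r$ itself, at the critical value $\wt(r)=\wt(x)+d(x,r)$. Consequently the first face of $\beta^*(g,B)$ ever annexed lies on the side of the bisector bordering $\Vor(x)$ in the two-site diagram $\VD(\{g,v_B\})$. During preprocessing we record, for each pair of sites $(u,v)$ and each reference site $w$, a pointer into the persistent representation of $\beta^*(u,v)$ marking where $w$'s two-site cell meets the bisector; this lets us retrieve an anchor index $k_0$ in $\tilde O(1)$ from the supplied bisector representation. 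From $k_0$, two binary searches along the cycle, one in each direction, each find an endpoint of the annexed arc in $O(\log n)$ iterations. If the anchor face is not annexed at the queried $\wt(r)$, the annexed arc is empty and the procedure returns no Voronoi vertices.

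\textbf{Predicate evaluation in $\tilde O(1)$.} At each binary-search index $k$ we retrieve $f_k$ and test whether any of its (constantly many) vertices $p$ lies in $\Vor(r)$ in $\VD(r,g,B)$. For $p$ on the $g$-side of $\beta^*(g,B)$, the fact that $g$ additively beats every $b_i$ at $p$ reduces the test to $\delta^{rg}(p)<\wt(g)-\wt(r)$, a direct preprocessed lookup. For $p$ on the $v_B$-side, let $b_j\in B$ be the site that owns $p$ in the monochromatic diagram $\VD(B)$; then $p\in\Vor(r)$ iff $\delta^{rb_j}(p)<\wt(b_j)-\wt(r)$. The owner $b_j$ is looked up in $\tilde O(1)$ via an auxiliary persistent point-location structure over the monochromatic diagrams of the holes, built within the preprocessing budget of Theorem~\ref{thm:bisectors}, or equivalently retrieved from the monochromatic diagram of $B$'s hole that is already available to the outer Voronoi-construction algorithm of Section~\ref{sec:voronoi}.

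\textbf{Main obstacle.} The crux is the structural lemma. When a tense-arc event transfers a subtree of $T_r$ into $\Vor(r)$, this subtree can simultaneously touch several dual faces of $\beta^*(g,B)$. Proving that these touches extend the previous annexed arc without spawning a second component is the main planarity argument, and requires careful use of the tree structure of $T_r$ together with the incremental guarantee of Lemma~\ref{lem:contig-bisector} applied at each event.
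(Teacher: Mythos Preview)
Your overall plan—prove that the portion of $\beta^*(g,B)$ annexed by $\Vor(r)$ is a single contiguous arc, then binary-search for its two endpoints—matches the paper's, but the anchor-finding step has a genuine gap. You propose to precompute, for each pair $(u,v)$ and each reference site $w$, a pointer ``marking where $w$'s two-site cell meets the bisector $\beta^*(u,v)$''. This cannot locate $k_0$ in the setting of Theorem~\ref{thm:tri-extended}: the cycle $\beta^*(g,B)$ is not a precomputed pairwise bisector but a query-time concatenation of segments of various $\beta^*(g,b_j)$, determined by the weights of all sites in $B$; the position along it where $\Delta^r$ is maximal depends on those weights and on which sub-interval of each $\beta^*(g,b_j)$ actually survives into $\beta^*(g,B)$. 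No fixed preprocessed pointer encodes this. The paper resolves the problem with a different, weight-\emph{independent} decoration: for each site $c$ and each hole $h$ it labels every dual vertex $f^*$ by the edge-distance $\ell_c^{h}(f^*)$ from $f^*$ to $h^*$ in the cotree $T_c^*$, and proves (Lemma~\ref{lem:trimax}) that the vertex of the bisector minimising $\ell_c^{h_r}$—where $c\in\{g\}\cup B$ is the site nearest $r$ and $h_r$ is $r$'s hole—is where $\Delta^r$ attains its maximum. Because these labels do not depend on the weights they can be stored as secondary keys in the persistent bisector trees, so a range-minimum over the supplied representation of $\beta^*(g,B)$ is an $\tilde O(1)$ query; the extension from a single $b$ to a set $B$ (Section~\ref{subseq:trimax-multi}) then needs a short case analysis on whether the nearest site is $g$, a $b_i$ contributing a segment to $\beta^*(g,B)$, or a $b_i$ that does not.

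Two smaller issues. Your predicate test on the $B$-side seeks the owner $b_j$ of $p$ via a point-location structure over $\VD(B)$; such a structure cannot be precomputed since that diagram depends on the query-time weights, but the owner is in fact immediate from the segment-of-$\beta^*(g,b_j)$ structure of the supplied bisector representation, so this is easily repaired. More substantively, your inductive proof of contiguity via sweep events does not address the fact that $\beta^*(g,B)$ may be non-simple, passing through $h^*$ several times (Lemma~\ref{lem:GBcycle}); the paper handles this with a direct crossing argument (its re-proof of Lemma~\ref{lem:Delta-consecutive} for $\beta^*(g,B)$), which also avoids the need to track the sweep inductively. Finally, once the maximum is located the sequence $\Dbisr$ is only \emph{weakly} bitonic, so two naive monotone binary searches from $k_0$ do not quite suffice; the paper's {\sc GetInterval} procedure, based on the arc labels $\pre_c(\cdot)$ and the tense-edge structure (Sections~\ref{sec:arclabels}--\ref{sec:mechanism}), is what makes the search terminate in $O(\log n)$ steps despite plateaus.
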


\subsection{Overview} \label{sec:overview}
\ifdefined\fullver
We begin with the case of Theorem~\ref{thm:tri}, when $B$ consists of a single site $b$. In this case, the Voronoi diagram has at most three cells.
\else
Due to space constraints, we only prove Theorem~\ref{thm:tri} in this extended abstract. The modifications to the proof required for establishing Theorem~\ref{thm:tri-extended} will appear in the full version. 
In the case of Theorem~\ref{thm:tri}, there are three sites, and the Voronoi diagram has at most three cells.
\fi
We call a face $f$ of $P$
\emph{trichromatic} if it has an incident vertex in each of the three  cells of the  diagram.
\emph{Monochromatic} and \emph{bichromatic} faces are defined similarly.
(This definition also includes faces that are holes.) We say that a vertex $v$ of $P$ is
\emph{red}, \emph{green}, or \emph{blue} if $v$ is in the Voronoi cell of $r$, $g$, or $b$, respectively.
By definition, the Voronoi vertices that we seek are precisely those dual to the trichromatic faces of $P$.
Let $\beta^*(g,b)$ denote the bisector of $g$ and $b$ (with respect to the additive weights $\wt(g),\wt(b)$).
By Lemma~\ref{lem:bicycle}, $\beta^*(g,b)$ is a simple cycle in $P^*$.
For $c \in \{g,b\}$, and for each vertex $v \in V(P)$, define $\tilde{\delta}^{rc}(v) := \wt(c)+d(c,v) - d(r,v)$. 
Equivalently, $\tilde{\delta}^{rc} = \wt(c) - \delta^{rc}$.
Define $\Delta^{r}(v) := \min\{\tilde{\delta}^{rg}(v),\tilde{\delta}^{rb}(v)\}$. That is, $\Delta^{r}(v)$ is the maximum
weight that can be assigned to $r$ so that $v$ is red (and $v$ will be red also for any smaller assigned weight).
Indeed, if $\wt(r) > \tilde{\delta}^{rg}(v)$, say,  then $\wt(r) + d(r,v) > \wt(g) + d(g,v)$, then $v$ is not red.
For each edge $uv$ of $P$, define $\Delta^{r}(uv) := \max \{\Delta^{r}(u),\Delta^{r}(v)\}$. That is,
$\Delta^{r}(uv)$ is the maximum weight that can be assigned to $r$ so that at least one endpoint of
$uv$ is red. For an edge $e^*$, dual to a primal edge $e$, we put $\Delta^r(e^*) = \Delta^r(e)$.

For any real $x$, we denote by $\VD_x$ the Voronoi diagram of $r,g,b$, with
respective additive weights $x,\wt(g),\wt(b)$. 
We  define
\begin{align*}
Q^*_{\geq x} & := \{ e^* \in \beta^*(g,b) \mid  \Delta^{r}(e^*) \geq x \} \ , \\
Q^*_{= x} & := \{ e^* \in \beta^*(g,b) \mid  \Delta^{r}(e^*) = x \} \ .
\end{align*}
The following lemma proves that
$Q^*_{\geq x}$ is a subpath of $\beta^*(g,b)$.
\begin{lemma} \label{lem:Delta-consecutive}
For any $\infty > x > -\infty$, the
edges  of $Q^*_{\geq x}$
form a subpath  of $\beta^*(g,b)$. Furthermore, if $Q^*_{\geq x}$ is non-empty and
does not contain all the edges of $\beta^*(g,b)$, then the trichromatic faces in $\VD_x$ are the duals
of the endpoints of $Q^*_{\geq x}$; that is, these are the faces whose duals have exactly one incident edge in $Q^*_{\ge x}$.
\end{lemma}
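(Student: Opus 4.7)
The plan is to establish contiguity of $Q^*_{\geq x}$ by a counting argument based on Lemma~\ref{lem:2vvert}, and then identify its endpoints as the trichromatic faces.

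By Lemma~\ref{lem:bicycle}, $\beta^*(g,b)$ is a simple cycle in $P^*$, so at every vertex $f^*$ on it there are exactly two incident cycle edges. I call $f^*$ a \emph{transition} if exactly one of these two edges lies in $Q^*_{\geq x}$. Traversing the cycle once, the indicator of membership in $Q^*_{\geq x}$ flips at each transition and must return to its starting value, so the number of transitions is even. If it is $0$ then $Q^*_{\geq x}$ is either empty or all of $\beta^*(g,b)$; if it is $2$ then $Q^*_{\geq x}$ is a contiguous subpath whose two endpoints are the transition vertices. Thus contiguity reduces to showing that there are at most two transitions.

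The key claim is that every transition vertex $f^*$ is the dual of a trichromatic face. Let $e_1^*, e_2^*$ be the two edges of $\beta^*(g,b)$ incident to $f^*$, with $e_1^* \in Q^*_{\geq x}$ and $e_2^* \notin Q^*_{\geq x}$. Since $e_2^* \in \beta^*(g,b)$, one endpoint of the primal edge $e_2$ is green and the other is blue in the two-site diagram $\VD(\{g,b\})$; and since neither endpoint of $e_2$ is red in $\VD_x$, each retains its original color in $\VD_x$, yielding a green vertex $v_g$ and a blue vertex $v_b$ on $\bd f$. Because $e_1^* \in Q^*_{\geq x}$, some endpoint $v_r$ of the primal edge $e_1$ is red in $\VD_x$, and $v_r$ also lies on $\bd f$. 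The three vertices $v_r, v_g, v_b$ have pairwise distinct colors in $\VD_x$, so $f$ is trichromatic. Applying Lemma~\ref{lem:2vvert} to the three sites $r, g, b$ yields at most two trichromatic faces, hence at most two transitions, completing the proof of contiguity.

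For the second assertion, suppose $Q^*_{\geq x}$ is non-empty and proper. Then there are exactly two transitions, which are precisely the vertices on $\beta^*(g,b)$ with exactly one incident edge in $Q^*_{\geq x}$, i.e.\ the endpoints of the subpath $Q^*_{\geq x}$. By the key claim both are trichromatic, and Lemma~\ref{lem:2vvert} forbids any further trichromatic face. Hence the trichromatic faces of $\VD_x$ are exactly the duals of the endpoints of $Q^*_{\geq x}$. I expect the main subtlety to be making sure the argument is uniform over hole faces and triangulated interior faces; it goes through in both cases because $\beta^*(g,b)$ is a \emph{simple} cycle in $P^*$, so even a hole vertex $h^*$ lying on the cycle has exactly two incident cycle edges, and the distinct-color argument for $v_r, v_g, v_b$ uses only that these three vertices lie on $\bd f$, not that $f$ is a triangle.
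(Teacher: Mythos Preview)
Your proof is correct and follows essentially the same approach as the paper: both show that every boundary point of $Q^*_{\geq x}$ along the simple cycle $\beta^*(g,b)$ is dual to a trichromatic face (via the same primal-edge coloring argument), and then invoke Lemma~\ref{lem:2vvert} to bound the number of such points by two. Your ``transition vertex'' framing is a clean rewording of the paper's ``endpoints of a maximal contiguous subpath,'' and your remark that the argument is uniform for holes is exactly the paper's observation that the coloring argument does not use triangularity of $f$.
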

\begin{proof}
Let $Q^*$ be a maximal subset of edges in $Q^*_{\geq x}$ that form a (contiguous) subpath
of $\beta^*(g,b)$.
Assume that there exists at least one edge of $\beta^*(g,b)$ that is
not in $Q^*$; otherwise $Q^* = \beta^*(g,b)$, which we assume not to be the case.

Enumerate the dual vertices incident to the edges of $Q^*$ as $f^*_1, f^*_2, \ldots, f^*_j$
in their (cyclic) order along $\beta^*(g,b)$.
The vertex $f^*_1$ has an incident edge $f^*_1f^*_2$ in $Q^*_{\geq x}$, and another incident edge,
call it $f^*_0f^*_1$, that is in $\beta^*(g,b)$ but not in $Q^*_{\geq x}$.
In the primal, the face $f_1$, dual to $f^*_1$, has an incident edge $uv$ that is dual to $f^*_1f^*_2$,
such that, by construction, at least one of $u,v$ is red, and another incident edge $u'v'$, dual to
$f^*_0f^*_1$, such that none of $u'$, $v'$ is red (note that when $f_1$ is a triangle, one of $u',v'$
coincides with one of $u,v$). Moreover, since $f^*_0f^*_1$ is an edge of the bisector $\beta^*(g,b)$,
exactly one of $u',v'$ is blue and the other one is green. Therefore, the face $f_1$ is trichromatic.
A similar argument shows that $f_j$ is also trichromatic. Note that the argument does not rely on the faces being triangles, so it also applies in the presence
of holes. See Figure~\ref{fig:tri}.

\begin{figure}[htb]
\begin{center}
\includegraphics[scale=0.7, clip=true, trim = 0mm 20mm 0mm 20mm]{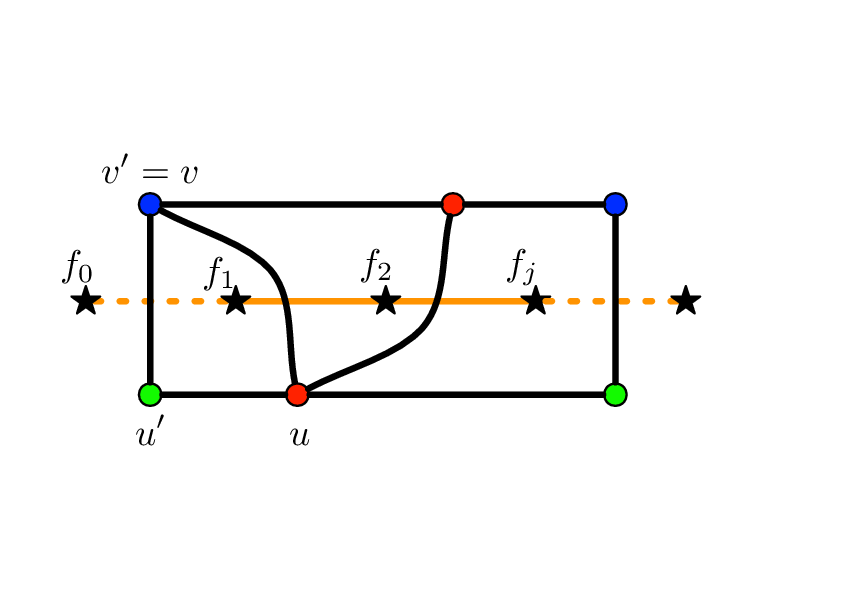}
\caption{\small Illustration of the proof of Lemma~\ref{lem:Delta-consecutive}. A set of consecutive edges in $\beta^*(g,b)$ is shown in orange. Edges of $Q^*$ are solid, and edges not in $Q^*$ are dashed. Dual vertices are indicated by stars.
Primal vertices (circles) are colored according to the Voronoi region they belong to. 
 Primal edges are shown in black. In this example $u,u'$ are distinct vertices, but $v'=v$. 
\label{fig:tri}}
\end{center}
\end{figure}

This shows that every maximal subset of edges in $Q^{*}_{\geq x}$ that forms a subpath of $\beta^*(g,b)$
gives rise to two distinct trichromatic faces.
Since, by Lemma~\ref{lem:2vvert}, there are at most two trichromatic faces we must have that $Q^* = Q^*_{\geq x}$ and the lemma follows.
\hfill \fullqed \end{proof}

By Lemma~\ref{lem:Delta-consecutive}, to find the trichromatic faces of $\VD_x$ it suffices to find the endpoints of $Q^*_{\geq x}$.
We begin with describing the case in which $r,g$ and $b$ are all incident to the same hole $h$. The multiple hole case follows the same principles, but is much more complicated. 

In the single hole case, if there exists a trichromatic face in $\VD_x$, then $h^*$ is trichromatic (because all three sites are incident to $h^*$. 
We therefore treat the cycle $\beta^*(g,b)$ as a path starting and ending at $h^*$. This defines a natural order on the arcs and vertices of $\beta^*(g,b)$.  
Lemma~\ref{lem:Delta-consecutive} implies, in the single hole case, that the
edges  of $Q^*_{\geq x}$
form a prefix or a suffix of $\beta^*(g,b)$.

We define $\Dbisr$ to be the cyclic sequence
$\Bigl(\Delta^{r}(e^*) \mid e^* \in \beta^*(g,b)\Bigr)$. Note that $\Delta^{r}(e^*)$ is not known at preprocessing time since it depends on three sites and on their relative weights.
\begin{corollary}\label{cor:monotone}
When $r,g,b$ all lie on a single hole, the sequence $\Dbisr$ is weakly monotone.
\end{corollary}
\begin{proof}
In the single hole case $Q^*_{\geq x}$ is a prefix or a suffix of $\beta^*(g,b)$, for each $\infty > x >-\infty$. The lemma follows because, by definition, we have
 $Q^*_{\geq x'}\subseteq Q^*_{\geq x}$ for every pair $x'\ge x$. 
\hfill \fullqed \end{proof}

To find the trichromatic vertices of $\VD_x$ one needs to find the endpoints of $Q^*_{\geq x}$. One of the endpoints is $h^*$. By Corollary~\ref{cor:monotone}, one can find the other endpoint by binary searching for $x$ in $\Dbisr$. Note that a single element of $\Dbisr$ can be computed on the fly in constant time given the shortest path trees rooted at $r,g$ and $b$. Also note that our representation of bisectors supports retrieving the $k$-th arc of the bisector in $\tilde O(1)$ time. Therefore, the binary search can be implemented in $\tilde O(1)$ time as well. This proves Theorem~\ref{thm:tri} for the case of a single hole.

In the remainder of this section we treat the general case where sites are not necessarily on the same hole. In this case $Q^*_{\geq x}$ is a subpath of $\beta^*(g,b)$, but not necessarily a prefix or a suffix. As a consequence, $\Dbisr$ is no longer weakly monotone, but \emph{weakly bitonic}. This makes the binary search procedure much more involved. We first establish the bitonicity of $\Dbisr$ and describe the bitonic search. We then elaborate on the various steps of the search. 

Using standard notation, we say
 that a linear sequence is strictly (weakly) \emph{bitonic} if it consists of a strictly (weakly)
decreasing sequence followed by a strictly (weakly) increasing sequence. A cyclic sequence is strictly
(weakly) bitonic if there exists a cyclic shift that makes it strictly (weakly) bitonic; this shift starts and ends at the maximum (a maximum) element of the sequence.
Recall that 
we defined $\Dbisr$ to be the cyclic sequence
$\Bigl(\Delta^{r}(e^*) \mid e^* \in \beta^*(g,b)\Bigr)$. Note that $\Delta^{r}(e^*)$ is not known at preprocessing time since it depends on three sites and on their relative weights.
The generalization of Corollary~\ref{cor:monotone} to the case of sites on multiple holes is:
\begin{corollary}\label{cor:bitonic}
The cyclic sequence $\Dbisr$ is weakly bitonic.
\end{corollary}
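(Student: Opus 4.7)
My plan is to derive bitonicity of $\Dbisr$ directly from the structural input provided by Lemma~\ref{lem:Delta-consecutive}: for every real $x$, the super-level set $Q^*_{\geq x}$ is a contiguous (cyclic) subpath of the cycle $\beta^*(g,b)$. I will argue the purely combinatorial statement that any cyclic sequence of reals whose super-level sets are all contiguous arcs is weakly bitonic in the sense required by the corollary.

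First I will fix a cyclic shift of $\Dbisr$ that begins (and cyclically ends) at an edge attaining the maximum value $M$, producing a linear sequence $a_1,\ldots,a_m$ with $a_1 = M$. Because position $1$ belongs to every nonempty $Q^*_{\geq x}$ (its value $M$ is above any threshold realized in the sequence), after this shift each super-level set reads in the linear order as a ``prefix plus suffix'' of the form $\{1,\ldots,\ell(x)\}\cup\{r(x),\ldots,m\}$ for some indices $1 \le \ell(x) < r(x) \le m+1$, with the natural conventions allowing either the suffix part to be empty or the whole sequence to be included.

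Next I pick any index $k$ at which the minimum is attained and show both $a_1 \geq a_2 \geq \cdots \geq a_k$ and $a_k \leq a_{k+1} \leq \cdots \leq a_m$, which is exactly weak bitonicity. For the first inequality, suppose toward contradiction that $a_i < a_{i+1}$ for some $i < k$, and set $x = a_{i+1}$; then $Q^*_{\geq x}$ contains $1$ and $i+1$ but not $i$, and the prefix-plus-suffix shape forces $r(x) = i+1$, so $\{i+1,\ldots,m\} \subseteq Q^*_{\geq x}$ contains $k$, giving $a_k \geq a_{i+1} > a_i \geq a_k$, a contradiction. A symmetric argument with $x = a_i$ rules out $a_i > a_{i+1}$ for any $i \geq k$ with $i < m$: there $\ell(x) = i$ and $k \in \{1,\ldots,i\} \subseteq Q^*_{\geq x}$, whence $a_k \geq a_i > a_{i+1}$, again contradicting the minimality of $a_k$. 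The only thing to be a little careful about is the cyclic-versus-linear bookkeeping and the corner cases where $k = 1$, $k = m$, or a super-level set is empty or all of $\beta^*(g,b)$; these collapse to trivial checks once the shift to begin at a maximum is in place, so there is no real obstacle beyond packaging the structural observation cleanly.
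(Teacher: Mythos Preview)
Your proposal is correct and takes essentially the same approach as the paper: both derive weak bitonicity from Lemma~\ref{lem:Delta-consecutive} (contiguity of each $Q^*_{\ge x}$) together with the automatic nestedness $Q^*_{\ge x'}\subseteq Q^*_{\ge x}$ for $x'\ge x$. The paper's proof is a two-line sketch that declares this implication ``clearly'' follows, whereas you have spelled out the elementary combinatorial argument in full; your write-up is a faithful expansion of exactly what the paper leaves to the reader.
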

\begin{proof}
By Lemma \ref{lem:Delta-consecutive}, $Q^*_{\geq x}$ is a subpath of $\beta^*(g,b)$, for each $\infty > x >-\infty$, and by definition we have
 $Q^*_{\geq x'}\subseteq Q^*_{\geq x}$ for every pair $x'\ge x$. This clearly implies the corollary.
\hfill \fullqed \end{proof}

We will show (Lemma~\ref{lem:findmax-extended}) that we can find the maximum of $\Dbisr$ in
$\tilde O(1)$ time. This will allow  us to turn the weakly bitonic cyclic sequence $\Dbisr$ into a weakly bitonic linear sequence.
We will then use binary search on $\Dbisr$ to find the endpoints of $Q^*_{\ge \wt(r)}$, which, by the
second part of Lemma~\ref{lem:Delta-consecutive}, are the trichromatic faces of the Voronoi diagram
of $r,g,b$ with respective additive weights $\wt(r) ,\wt(g), \wt(b)$.
The search might of course fail to find these vertices when they do not exist, either because $\wt(r)$
is too small, in which case $\Vor^*(r)$ completely ``swallows'' $\beta^*(g,b)$, or because $\wt(r)$
is too large, in which case $\beta^*(g,b)$ appears in full in $\VD^*(\{r,g,b\},\wt)$. In both these cases, there are either no Voronoi vertices, or there is a single Voronoi vertex which is dual to a hole. 

Before we show how to find the maximum of $\Dbisr$, 
we briefly discuss a general strategy for conducting binary search on linear bitonic sequences. This search is not trivial, especially when the sequence is only weakly bitonic. We first consider the case of strict bitonicity, and then show how to extend it to the weakly bitonic case.

\subsubsection{Searching in a strictly bitonic linear sequence}
Given a strictly bitonic linear sequence $\sigma$ and a value $y$, one can find the two ``gaps'' in $\sigma$ that contain $y$, where each such gap is a pair of consecutive elements of $\sigma$ such that $y$ lies between their values. (For simplicity of presentation, and with no real loss of generality, we only consider the case where $y$ is not equal to any element of the sequence.)
This is done by the following variant of binary search. 

The search consists of two phases. In the first phase the interval that the binary search
maintains still contains both gaps (if they exist), and in the second phase we have already managed
to separate between them, and we conduct two separate standard binary searches to identify each of them.

Consider a step where the search examines a specific entry $x=\sigma(i)$ of $\sigma$.
\begin{description}
\item{(i)}
If $x>y$, we compute the discrete derivative of $\sigma$ at $i$. If the derivative is
positive (resp., negative), we update the upper (resp., lower) bound of the search to $i$. (This rule holds for both phases.) 
\item{(ii)}
If $x < y$ and we are in the first phase, we have managed to separate the two gaps, and we move on
to the second phase with two searches, one with upper bound $i$ and one with lower bound $i$.
If we are already in the second phase,
we set the upper (resp., lower) bound to $i$ if we are in the lower (resp., higher) binary search.
\end{description}

\subsubsection{Searching in a weakly bitonic linear sequence}\label{sec:weak-bitonic-search}
This procedure does not work for a weakly bitonic sequence (consider, e.g., a sequence all of whose
elements, except for one, are equal). This is because the discrete derivative in (i) might be locally 0 in the weakly bitonic case. This difficulty can be overcome if, given an element $i$ in $\sigma$
such that $\sigma(i) = x$,
we can efficiently find the endpoints of the maximal interval $I$ of $\sigma$ that contains $i$ and all its
elements are of value equal to $x$ (note that in general $\sigma$ might contain up to two intervals of elements
of value equal to $x$, only one of which contains $i$). We can then compute the discrete derivatives at the endpoints of $I$, and use them to guide the search,
similar to the manner  described for the strict case.

Unfortunately, now focusing on the specific context under consideration, given an edge $\hat{e}^* \in \beta^*(g,b)$
such that $\Delta^r(\hat{e}^*)=x$, we do not know how to find the maximal interval $I$ of $\beta^*(g,b)$ such $\hat{e}^* \in I$ and for every edge $e^* \in I$, $\Delta^r(e^*)=x$.
Instead, we provide a procedure that returns an interval $I^+$ of (the cyclic) $\beta^*(g,b)$ that contains 
all edges $e^* \in \beta^*(g,b)$ for which $\Delta^r(e^*)=x$, and no edge $e^*$ for which $\Delta^r(e^*)<x$. Note that $I^+$ might contain edges $e^* \in \beta^*(g,b)$ for which $\Delta^r(e^*)>x$. When there is just one interval of elements of value equal to $x$ then $I^+$ starts or ends with an edge $e^*$ for which $\Delta^r(e^*)=x$. 
In this case, $I^+ = I$, or $\hat{e}^*$ is in the increasing part of $\Dbisr$, or $\hat{e}^*$ is in the increasing part of $\Dbisr$, depending on whether the values of $\Delta^r(\cdot)$ at the two endpoints of $I^+$ are equal, the first is smaller than the last, or the last is smaller than the first, respectively. 
When there are two intervals of elements of value equal to $x$, then $I^+$ also contains all edges $e^* \in \beta^*(g,b)$ for which $\Delta^r(e^*)>x$, and, in particular, an edge $e^*_{max}\in \beta^*(g,b)$ maximizing $\Delta^r(\cdot)$. In this case if $\hat{e}^*$ appears before (resp., after) $e^*_{max}$ in $I^+$ then $\hat{e}^*$ is in the increasing (resp., decreasing) part of $\Dbisr$, and we should set the upper (resp., lower) bound of the search in (i) to the beginning (resp., end) of the interval $I^+$.

In Section~\ref{sec:trimax} we describe the procedure that finds an edge $e^*_{\max}\in \beta^*(g,b)$ with a largest value of $\Delta^r$.
The mechanism for finding $I^+$ with the properties described above is described in Section~\ref{sec:mechanism}.

\subsection{Finding the maximum in $\Dbisr$} \label{sec:trimax}
We now describe a procedure for finding $x_{max} := \max(\Dbisr)$, or more precisely, finding some edge $e^*_{\max} \in \beta^*(g,b)$ such that $\Delta^r(e^*_{\max}) = x_{max}$.
Let $h_r$ be the hole to which the site $r$ is incident.
We check whether $r$ belongs to $\Vor(g)$ or to $\Vor(b)$ in $\VD(\{g,b\})$ by comparing the distances from $g$ to $r$ and from $b$ to $r$. 
Assume that $r$ belongs to $\Vor(g)$ (the case where $r$ belongs to $\Vor(b)$ is symmetric).
Let $T^*_g$ be the cotree of the shortest-path tree $T_g$. Define the label $\ell_g^{h_r}(f^*)$, for each dual vertex $f^*$, to be equal to the number of edges on the $f^*$-to-$h_r^*$ path in $T^*_g$. Note that, because the number of holes is constant, these labels can be computed during preprocessing, when we compute the tree $T^*_g$, without changing the asymptotic preprocessing time. Furthermore, we can augment the persistent search tree representation of the bisectors with these labels, so that, given $\beta^*(g,b)$ and $r$, we can retrieve in 
$\tilde O(1)$ time the vertex of $\beta^*(g,b)$ minimizing $\ell_g^{h_r}$.

\begin{lemma}\label{lem:trimax}
The value of $\Delta^r(\cdot)$ is $x_{max}$ for at least one of the two arcs of $\beta^*(g,b)$ incident to the dual vertex $f^* \in \beta^*(g,b)$ minimizing $\ell_g^{h_r}(\cdot)$.
\end{lemma}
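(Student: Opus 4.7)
The plan exploits the asymmetry introduced by the hypothesis $r \in \Vor(g)$ in $\VD(\{g,b\})$. Fix any arc $e^*=(uv)^* \in \beta^*(g,b)$, where $u$ is the green-side endpoint and $v$ is the blue-side endpoint. The triangle inequality gives $\Delta^r(u) = \wt(g)+d(g,u)-d(r,u) \leq \wt(g)+d(g,r)$, with equality iff $r$ lies on the shortest $g$-to-$u$ path, i.e., iff $u$ belongs to the subtree $T_g(r)$ of $T_g$ rooted at $r$. For the blue endpoint, the bisector condition $\wt(b)+d(b,v) < \wt(g)+d(g,v)$ combined with $d(g,v) \leq d(g,r)+d(r,v)$ yields $\Delta^r(v) = \wt(b)+d(b,v)-d(r,v) < \wt(g)+d(g,r)$ strictly. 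Consequently $x_{max} \leq \wt(g)+d(g,r)$, and this upper bound is attained iff some green-side bisector vertex lies in $T_g(r)$, in which case $x_{max}$ is realized at a bisector arc whose green endpoint belongs to $T_g(r)$.

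Next, I would relate $\ell_g^{h_r}$ to the subtree $T_g(r)$. The set of primal arcs with exactly one endpoint in $V(T_g(r))$ is a cut, and every such arc other than the parent arc of $r$ in $T_g$ is a non-tree arc of $T_g$: a tree arc cannot cross the boundary of a subtree except at its root. Their duals therefore lie in the cotree $T_g^*$ and, together with arcs incident to $h_r^*$, form cycles in $P^*$ enclosing the region of $P$ occupied by $T_g(r)$. Consequently, a face $f$ has small $\ell_g^{h_r}(f^*)$ precisely when the $T_g^*$-path from $f^*$ to $h_r^*$ crosses few non-tree arcs, which is exactly when $f$ has incident vertices close to $T_g(r)$ in the ancestry of $T_g$.

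Combining these two observations, the bisector dual vertex $f^*$ minimizing $\ell_g^{h_r}$ corresponds to a face whose green-side vertex on the bisector maximizes $\wt(g)+d(g,u)-d(r,u)$ among all green-side bisector vertices, thereby realizing $x_{max}$, and one of the two incident arcs therefore witnesses the maximum. The main technical obstacle is making rigorous the correspondence between cotree depth and the lowest-common-ancestor depth with respect to $r$ in $T_g$; I expect to establish this by induction along the fundamental cycles of the non-tree arcs encountered on the cotree path from $f^*$ to $h_r^*$, each cycle progressively constraining the set of possible $T_g$-ancestors of the relevant green endpoint. A short case analysis then handles the degenerate scenario in which $T_g(r)$ is disjoint from $\beta^*(g,b)$: there, $x_{max}$ is determined by the deepest vertex along $\pi(g,r)$ whose $T_g$-subtree meets the bisector, and the same cotree-depth minimizer still witnesses this value.
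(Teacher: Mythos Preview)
Your triangle-inequality bound $\Delta^r(u)\le \wt(g)+d(g,r)$, with equality iff $u\in T_g(r)$, is correct, as is the strict bound on the blue side. But these observations do not carry the proof: in general $T_g(r)$ need not meet the green side of $\beta^*(g,b)$, so the ``degenerate'' case you defer to the end is actually the generic one, and in that case your upper bound says nothing about which arc realizes $x_{\max}$.

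The central gap is the step where you assert that the bisector vertex $f^*$ minimizing $\ell_g^{h_r}$ has a green endpoint maximizing $\tilde\delta^{rg}$. You try to justify this by a vague correspondence between cotree depth toward $h_r^*$ and ancestry depth with respect to $r$ in $T_g$, but no such monotone relationship holds: $\ell_g^{h_r}(f^*)$ is a purely combinatorial count of cotree edges, whereas $\tilde\delta^{rg}(u)=\wt(g)+d(g,u)-d(r,u)$ depends on the metric (edge lengths and the actual $r$-to-$u$ shortest path). Your proposed induction over fundamental cycles is a hope, not an argument, and would have to somehow recover metric information from combinatorial cotree structure; there is no reason to expect this to work.

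The paper's proof uses the minimality of $\ell_g^{h_r}(f^*)$ in a completely different, topological way: it implies that the cotree path $P^*$ from $f^*$ to $h_r^*$ is \emph{internally disjoint} from $\beta^*(g,b)$, hence lies entirely in the green region. This lets one build a primal cycle $C$ from the two $T_g$-branches to the green endpoints $u_1,u_2$ of $f$, together with the green portion of $\partial f$; this $C$ encloses $r$ (because $P^*$ reaches $h_r^*$) but neither $b$ nor any bisector edge. Every $r$-to-$x$ shortest path, for $x$ outside $C$, must cross $C$ at a $T_g$-ancestor of $u_1$ or $u_2$, which immediately yields $\tilde\delta^{rg}(u_i)\ge x_{\max}$ for one of $i\in\{1,2\}$. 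The argument never needs a correspondence between cotree depth and $\tilde\delta^{rg}$; it only needs the single topological consequence ``the cotree path is disjoint from the bisector''. That is the idea you are missing.
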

\ifdefined\fullver
\begin{proof}
Consider the dual vertex $f^* \in \beta^*(g,b)$ minimizing $\ell_g^{h_r}(\cdot)$. 
If $\beta^*(g,b)$ goes through $h_r^*$ then $f^* = h^*_r$. Otherwise, 
let $Q^*$ be the $f^*$-to-$h_r^*$ path in $T^*_g$. 
By  choice of $f^*$, $Q^*$ is internally disjoint from $\beta^*(g,b)$. Hence for every $e$ s.t. $e^* \in Q^*$, both endpoints of $e$ belong to the cell $\Vor(g)$ in $\VD(g,b)$ (under our assumption that $r$ belongs to $\Vor(g)$). See Figure~\ref{fig:trimax}. 

\begin{figure}[h]
\begin{center}
\includegraphics[width=0.7\textwidth, clip=true, trim = 0mm 20mm 0mm 0mm]{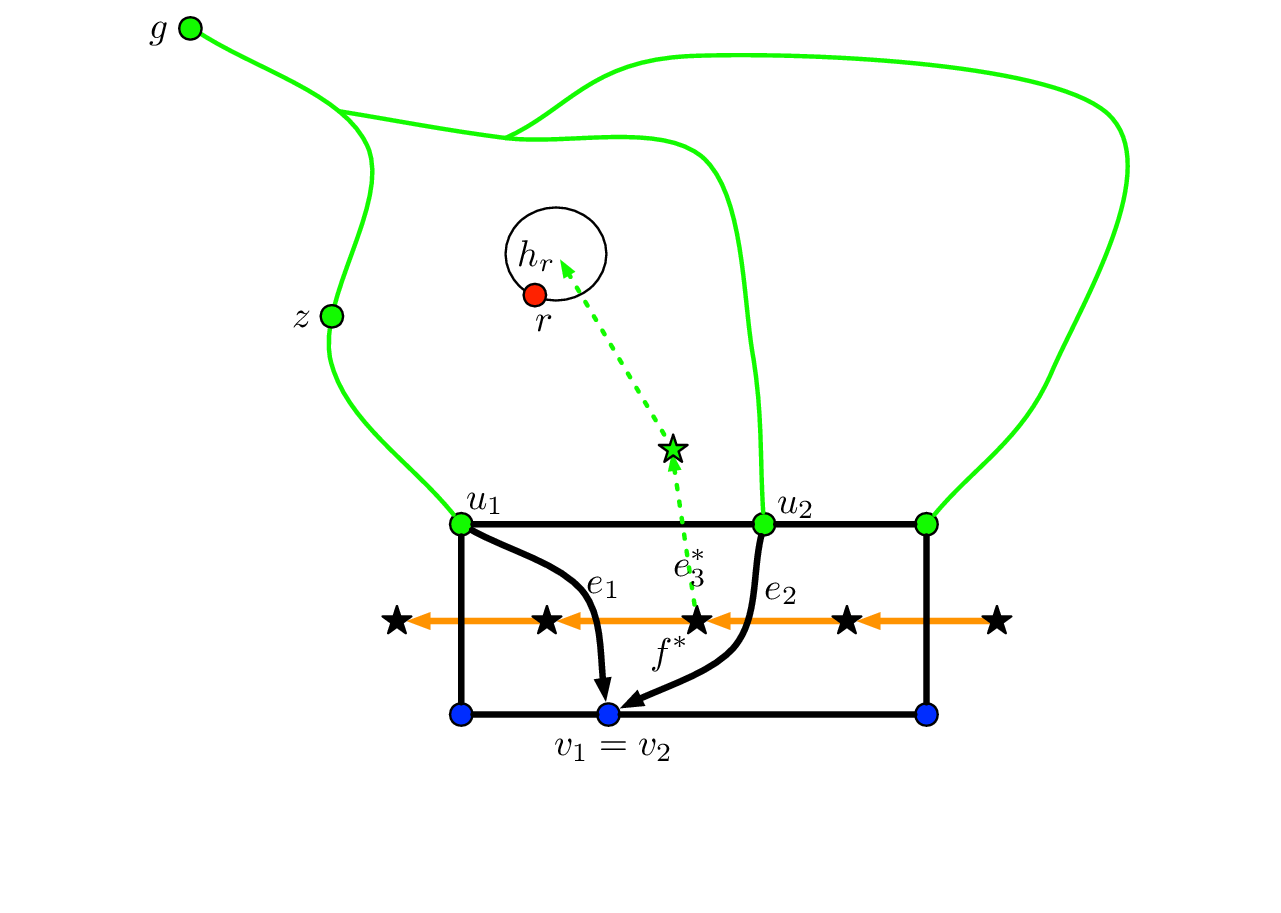}
\caption{Illustration of the proof of Lemma~\ref{lem:trimax}. Part of $\beta^*(g,b)$ is shown in orange. Primal vertices are represented by circles. Green circles belong to $\Vor(g)$ and blue to $\Vor(b)$. The blue site is not shown. Dual vertices are indicated by stars. The face $f^*$ minimizing $\ell_g^{h_r}(\cdot)$ is indicated, as well as the edges $e_1,e_2$, and their endpoints. Parts of the tree $T_g$ are shown in solid green. The only part of $T^*_g$ shown is  the $f^*$-to-$h_r^*$ path (dashed green). As the additive weight $x$ of the red site $r$ decreases, the Voronoi cell of $r$ expands. The node $z$ is the first node on the fundamental cycle of $e_3$ that enters the Voronoi cell of $r$. This happens at critical value $x_{max}$. At that time $u_1$ also becomes red, so $\Delta^r(u_1)=x_{max}$.
\label{fig:trimax}}
\end{center}
\end{figure}

Since $f^* \in \beta^*(g,b)$, there are exactly two edges of $f$ whose duals are in $\beta^*(g,b)$. Namely, edges with one endpoint in $\Vor(g)$ and one endpoint in $\Vor(b)$. Denote these edges by $e_1=u_1v_1, e_2=u_2v_2$, where $u_1, u_2 \in \Vor(g)$. 
If $Q^*$ is empty (i.e, $f^* = h^*_r$) then consider the maximal subpath of $h_r$ that belongs to $\Vor(g)$. This subpath is nonempty since we assume $r \in \Vor(g)$. If $r$ is the only vertex of $h_r$ in $\Vor(g)$, then $u_1=u_2=r$, and the lemma is immediate since $r$ is an endpoint of both $e_1$ and $e_2$. We therefore assume that if $Q^*$ is empty, then $u_1 \neq u_2$.
If $Q^*$ is not empty, let $e^*_3$ be the edge of $Q^*$ incident to $f^*$. Observe that $e_1,e_2,e_3$ are all edges of the face $f$, and that both endpoints of $e_3$ are in $\Vor(g)$. Hence, $u_1 \neq u_2$ also in this case.

Let $F_g$ denote the $u_1$-to-$u_2$ subpath of the face $f$ that belongs to $\Vor(g)$. ($F_g$ consists of the single edge $e_3$, unless $f^*$ is a hole.) Note that, if $f^* \neq h_r^*$ then $F_g$ contains $e^*_3$, and if $f^* = h^*_r$ then $F_g$ contains $r$.
Let $C$ be the cycle formed by the root-to-$u_1$ path in $T_g$, the root-to-$u_2$ path in $T_g$, and $F_g$. 
Note that the cycle $C$ does not enclose $f^*$, and, since $F_g$ contains either $r$ or the first edge of $Q^*$, the cycle $C$ does enclose $r$. By its definition, the cycle $C$ consists entirely of edges and vertices that belong to $\Vor(g)$. Recall that $f^*$ is a vertex of $\beta^*(g,b)$, so $f$ has an incident vertex $v$ that belongs to $\Vor(b)$. The vertex $v$ is not on $C$ because $v \notin \Vor(g)$, and is not strictly enclosed by $C$ since $f$ is not enclosed by $C$. This shows that the site $b$ is not enclosed by $C$ (otherwise, the $b$-to-$v$ shortest path must intersect $C$, but this is impossible since all vertices of this path belong to $\Vor(b)$, while all vertices of $C$ belong to $\Vor(g)$). It follows that all the vertices enclosed by $C$ belong to $\Vor(g)$. This implies that $C$ does not enclose any arc $e'$ whose dual is an arc of $\beta^*(g,b)$, because such an $e'$ has one endpoint in $\Vor(b)$. 

Let $z$ be a vertex that maximizes $\tilde \delta^{rg}(\cdot)$ among the vertices of $C$ that are not internal vertices of $F_g$. Note that, since $r$ is enclosed by $C$,  $\tilde \delta^{rg}(z) \geq \tilde \delta^{rg}(x)$ for any vertex $x$ that is not enclosed by $C$. This is because any $r$-to-$x$ path intersects $C$ at a vertex that is not an internal vertex of $F_g$. Thus $\tilde \delta^{rg}(z) \geq x_{max}$. 	
	By construction of $C$, $z$ is an ancestor of either $u_1$ or $u_2$. Assume, without loss of generality that $z$ is an ancestor of $u_1$ in $T_g$. Therefore, $\tilde \delta^{rg}(u_1) \geq \tilde \delta^{rg}(z)$ (because a green vertex becomes red no later than any of its ancestors in $T_g$). But $u_1$ is an endpoint of $e_1$.  	
	So $\Delta^r(e_1^*) = \Delta^r(e_1) \geq \tilde \delta^{rg}(u_1) \geq \tilde \delta^{rg}(z) \geq x_{max}$. But, since $e_1^* \in \beta^*(g,b)$, by definition of $x_{max}$ we have $x_{max} \geq \Delta^r(e_1^*)$. Therefore, $\Delta^r(e_1^*)=x_{max}$. 
\hfill \fullqed \end{proof}

To summarize, to find an arc of $\beta^*(g,b)$ maximizing $\Delta^r(\cdot)$, the algorithm does the following: (1) it finds the site $c\in \{g,b\}$ closer to $r$; (2) it finds the dual vertex $f^*$ minimizing $\ell_c^{h_r}(\cdot)$ on $\beta^*(g,b)$ in 
$\tilde O(1)$ time, and (3) it returns the arc of $\beta^*(g,b)$ incident to $f^*$ with larger $\Delta^r$-value. This establishes the following lemma.
\begin{lemma}\label{lem:findmax}
Consider the settings of Theorem~\ref{thm:tri}.	We can preprocess $P$ in $O(n|S|^2)$ time  so that one can find an arc of $\beta^*(g,b)$ maximizing $\Delta^r(\cdot)$ in 
$\tilde O(1)$ time.  
\end{lemma}

\subsubsection{The case of multiple sites} \label{subseq:trimax-multi}
We next consider the same problem in the more general setting of two individual sites, say $r$ and $g$, and a set $B = \{b_1, \dots, b_k\}$ of sites on hole $h_b$.
We assume the bisector $\beta^*(g,B)$ is represented by a binary search tree over the segments of bisectors $\beta^*(g,b_i)$ for $b_i \in B$ that form $\beta^*(g,B)$. Thus, we can access the $k$-th arc or vertex of $\beta^*(g,B)$ in $\tilde O(1)$ time.
Recall the definition  
$\tilde{\delta}^{rc}(v) := \wt(c)+d(c,v) - d(r,v)$. 
In the context of multiple sites on a hole we redefine $\Delta^{r}(v) := \min\{\tilde{\delta}^{rg}(v),\tilde{\delta}^{rb}(v)\}$, where the vertex $b$ is the vertex of $B$ closest (in additive distance) to $v$. 
We wish to find an edge of $\beta^*(g,B)$ maximizing $\Delta^r$.

Consider first the case where $g$ is closer (in additive distance) to $r$ than any $b \in B$. The treatment of this case is identical to that of the single site case. Let $h_r$ be the hole to which site $r$ is incident.
We find the dual vertex $f^*$ minimizing $\ell_g^{h_r}(\cdot)$ over all vertices of $\beta^*(g,B)$ in $\tilde O(1)$ time using the decorations for $\ell_g^{h_r}$ in the binary search tree representation of $\beta^*(g,B)$.
As in the case of single sites we return the arc of $\beta^*(g,b)$ incident to $f^*$ with larger $\Delta^r(\cdot)$.

Consider now the case where there exists a site of $B$ that is closer to $r$ than $g$. Let $b$ be the site of $B$ minimizing the additive distance to $r$. Let $B'$ be the subsequence of $B$ (ordered by the cyclic order around $h_b$)
consisting of all sites $b'$ such that
 there is a segment of
$\beta^*(g,b')$ in $\beta^*(g,B)$. By non-crossing properties of shortest paths, the cyclic order of these segments along
 $\beta^*(g,B)$ is consistent with the cyclic order of the sites of $B'$ along $h_b$.
Consider the case where $b \in B'$.
This case is similar to the single site case.
Let $\gamma^*$ be the segment of  $\beta^*(g,b)$ contained in $\beta^*(g,B)$. Note that $\gamma^*$ is simple, as it is a subpath of a simple cycle.
We find the dual vertex $f^*$ minimizing $\ell_{b}^{h_r}(\cdot)$ on $\gamma^*$.
Let $f^*_1$ and $f_2^*$ be the two endpoints of $\gamma^*$. We return
the arc of $\beta^*(g,B)$ incident to one of $f^*$, $f^*_1$ and $f_2^*$ with larger $\Delta^r(\cdot)$. The reason one needs to consider $f^*_1$ and $f^*_2$ is that the hole $h_r$ may be located in the region of the plane bounded between the branch of $T_b$ from $b$ to a vertex of $f_1$ and the boundary of the Voronoi cell of $b$ in $\VD^*(B)$. See Figure~\ref{fig:trimax-multi} (middle). 

\begin{figure}[h]
\begin{center}
\includegraphics[width=0.31\textwidth]{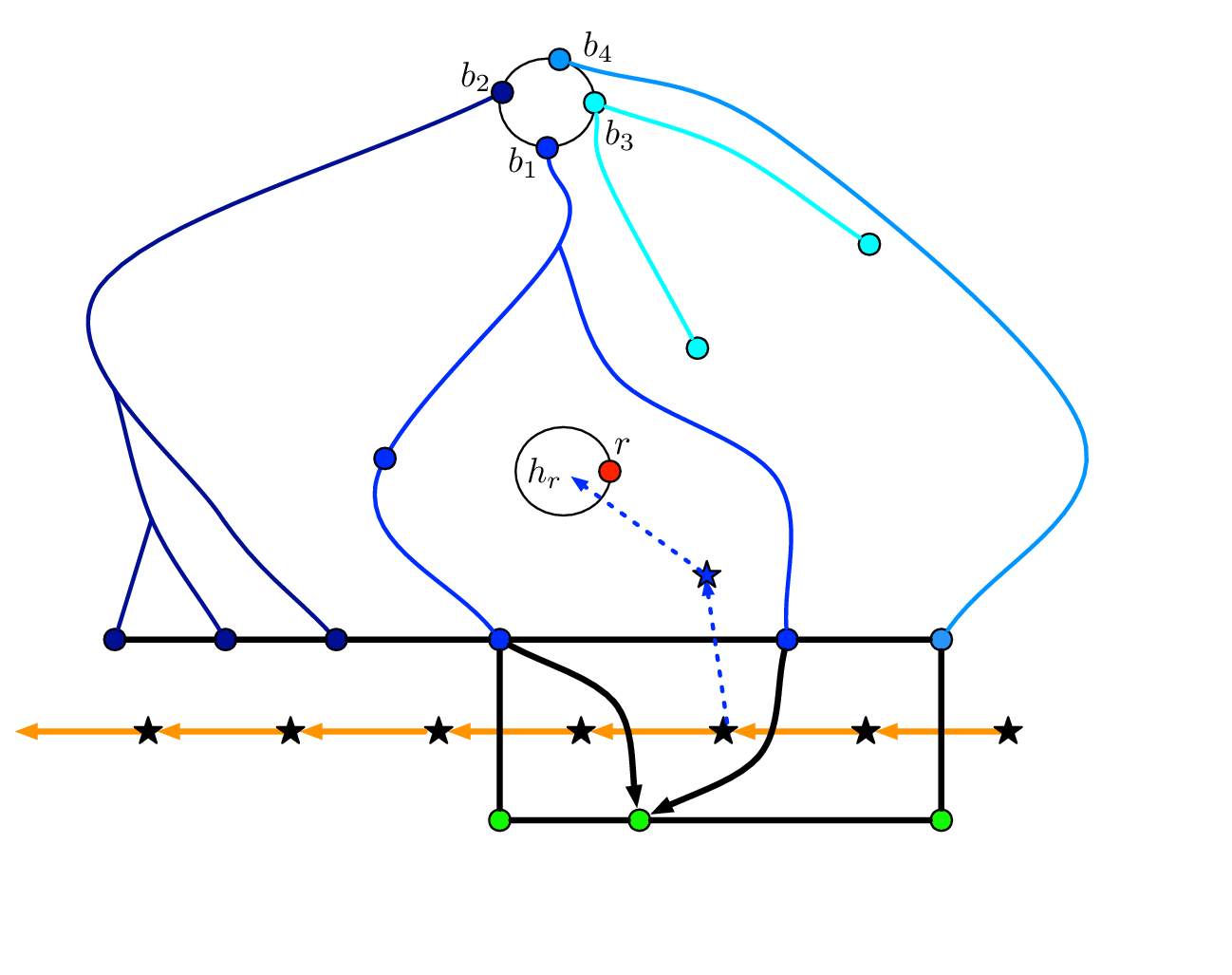}
\includegraphics[width=0.31\textwidth]{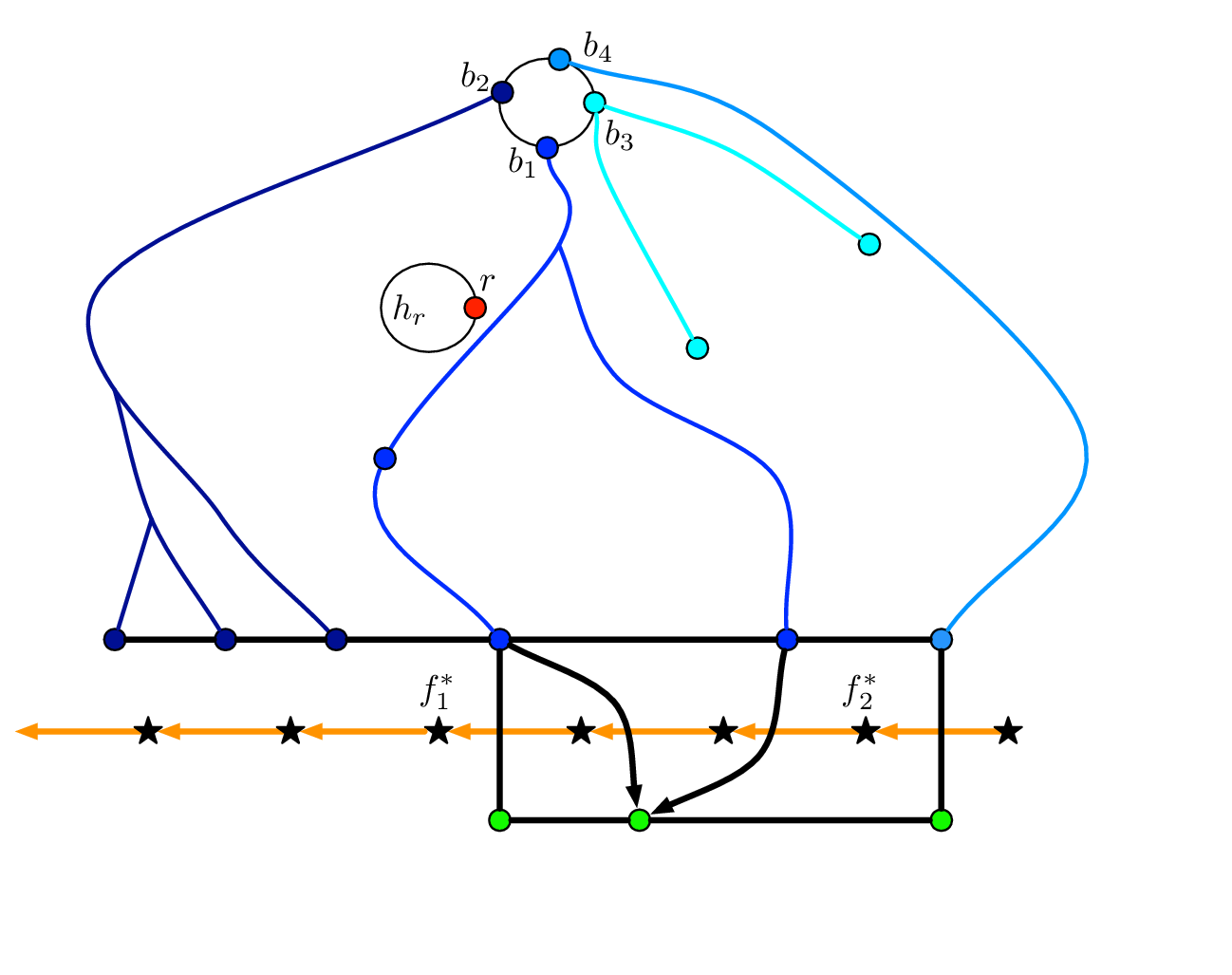}
\includegraphics[width=0.31\textwidth]{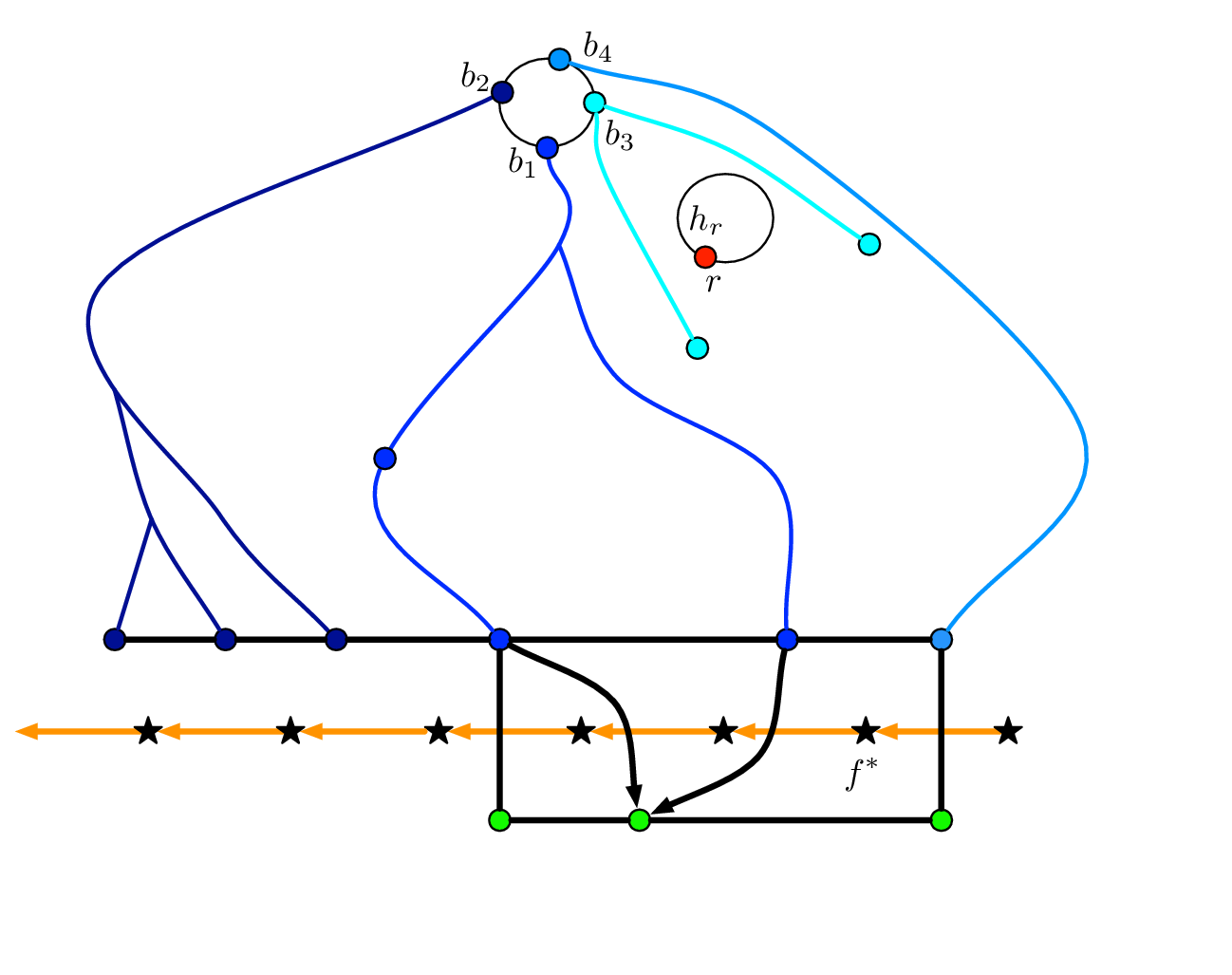}
\caption{Illustration of finding an edge maximizing $\Delta^r$ in the case of a set $B$ of multiple sites. On the left and middle are two scenarios that can occur when $b \in B'$. On the left the hole $h_r$ is sandwiched between two branches of $T_{b}$, both belonging to $\Vor(B)$. The method used in the case of a single site works here. On the middle, the hole $h_r$ is sandwiched between branches of different trees in $\Vor(B)$, so $f^*_1$ must also be considered. On the right is an illustration of the case where the hole $h_r$ belongs to a Voronoi cell of a site $b \notin B'$.
\label{fig:trimax-multi}}
\end{center}
\end{figure}

Finally, consider the case where $b \not\in B'$. In this case
we find in $\tilde O(1)$ time, using the search tree representation of $\beta^*(g,B)$, the  predecessor $b_{pre}$ and the  successor $b_{suc}$ of $b_i$ in $B'$. 
Let $f^*$ be the common dual vertex on the segment of $\beta^*(g,b_{pre})$
and $\beta^*(g,b_{suc})$ in $\beta^*(g,B)$.
We return
the arc of $\beta^*(g,B)$ incident to  $f^*$ with larger $\Delta^r(\cdot)$. See Figure~\ref{fig:trimax-multi} (right).
Recall that the procedure is given as input the site in $B \cup \{g\}$ minimizing the additive distace to $r$, so we find out in $\tilde O(1)$ time which of the above three cases applies.
The correctness of this procedure is argued in an similar manner to the proof of Lemma~\ref{lem:trimax}.
\begin{lemma}\label{lem:findmax-extended}
Consider the settings of Theorem~\ref{thm:tri-extended}.	We can preprocess $P$ in $O(n|S|^2)$ time  so that one can find an arc of $\beta^*(g,B)$ maximizing $\Delta^r(\cdot)$ in $O(\log n)$ time.  
\end{lemma}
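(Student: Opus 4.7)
My plan is to establish separately the preprocessing budget, the query time, and the correctness of the three-case procedure described above, with correctness being the substantive obstacle since the preprocessing and query-time accounting are essentially routine once the right augmentations are in place.

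For preprocessing, I would compute, for every site $c\in S$ and every hole $h$ of $P$, the shortest-path tree $T_c$ and its cotree $T^*_c$, and then the label $\ell_c^{h}(f^*)$ (the number of edges on the $f^*$-to-$h^*$ path in $T^*_c$) by a single DFS rooted at $h^*$; since the number of holes is constant, all labels are obtained in $\tilde O(n|S|)$ time. I would augment the persistent binary search trees of Theorem~\ref{thm:bisectors} for the bisectors $\beta^*(g,b_i)$ so that every node stores the subtree minimum of $\ell_c^{h}$, which preserves the $\tilde O(n|S|^2)$ envelope. Because $\beta^*(g,B)$ is itself represented as a balanced tree over $\tilde O(1)$ segments of the underlying $\beta^*(g,b_i)$, the same augmentation supports $O(\log n)$-time range-minimum queries of $\ell_c^{h}$ along any segment of $\beta^*(g,B)$, together with $O(\log n)$-time access to the $k$-th vertex or arc.

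For the query, the given input vertex $x\in\{g\}\cup B$ minimizing the additive distance to $r$ identifies the applicable case in $O(1)$ time. In case (a) I perform one range-minimum query to locate the minimizer $f^*$ of $\ell_g^{h_r}$ along $\beta^*(g,B)$; in case (b) I additionally retrieve the segment $\gamma^*\subseteq\beta^*(g,B)$ of $\beta^*(g,b)$ and its endpoints $f^*_1,f^*_2$, then a range-minimum query on $\gamma^*$ under $\ell_b^{h_r}$; in case (c) I perform predecessor/successor queries in $B'$ to find $b_{\mathrm{pre}},b_{\mathrm{suc}}$ and the common dual vertex $f^*$ of their two segments. In all cases I then evaluate $\Delta^r(\cdot)$ on a constant number of candidate arcs using the precomputed trees $T_r,T_g,T_{b_i}$, each evaluation taking $O(1)$ time. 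All operations are bounded by $O(\log n)$.

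The main obstacle is correctness, which I would prove by adapting the cycle-enclosure argument of Lemma~\ref{lem:trimax}. The single-site argument builds a cycle $C$ out of two root-to-endpoint paths in a single shortest-path tree plus a portion of the face $f$ dual to $f^*$, observes that $C$ lies entirely in one Voronoi cell of $\VD(\{g,b\})$ yet encloses $r$, and then concludes that a vertex $z$ of $C$ maximizing $\tilde\delta^{rc}(\cdot)$ has some $T_c$-descendant that is an endpoint of an arc incident to $f^*$, forcing $\Delta^r$ of that arc to attain $x_{\max}$. Case (a) reduces to the single-site proof with $B$ playing the role of a virtual source. Case (b) uses $T_b$ in place of $T_g$ inside $\gamma^*$; the endpoints $f^*_1,f^*_2$ must also be tested because, as shown in Figure~\ref{fig:trimax-multi} (middle), the hole $h_r$ may be sandwiched between $\Vor(b)$ and the cell of a neighbouring $b'\in B'$, so the enclosing cycle can leave $\gamma^*$ at an endpoint and pick up its maximum there. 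Case (c) is the most delicate: neither $b_{\mathrm{pre}}$ nor $b_{\mathrm{suc}}$ is closest to $r$, yet by Lemma~\ref{lem:bicycle} and the non-crossing property of shortest paths one can build an enclosing cycle $C$ by concatenating a branch of $T_{b_{\mathrm{pre}}}$ with a branch of $T_{b_{\mathrm{suc}}}$ meeting at $f^*$ (plus a short portion of the face $f$); verifying that $C$ lies in $\Vor(g)\cup\Vor(B)$ and encloses $r$ lets the same ancestor argument finish the proof, justifying that only the common vertex $f^*$ need be inspected.
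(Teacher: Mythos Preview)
Your proposal is correct and follows essentially the same route as the paper: the same $\ell_c^{h}$ augmentation of the persistent bisector trees, the same three-case query procedure (case (a) when $g$ is closest to $r$, case (b) when the closest blue site $b$ lies in $B'$, and case (c) when $b\notin B'$), and the same reduction of correctness to the cycle-enclosure argument of Lemma~\ref{lem:trimax}; the paper itself gives no more detail for correctness than ``argued in a similar manner to the proof of Lemma~\ref{lem:trimax}''. One small slip: in case~(c) you want the enclosing cycle $C$ to lie in $\Vor(B)$ (equivalently, in $\Vor(v_B)$ in the auxiliary graph $P'$), not in ``$\Vor(g)\cup\Vor(B)$'', and the cleanest way to phrase the ancestor argument there is to work in $P'$ with the single virtual blue site $v_B$, where the two branches you describe are simply the $v_B$-to-$u_1$ and $v_B$-to-$u_2$ paths in $T_{v_B}$.
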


\else
We omit the proof, which will appear in the full version.
\fi

\subsection{The mechanism}\label{sec:mechanism}
We now return to presenting our mechanism for finding $I$ described in Section~\ref{sec:weak-bitonic-search}.
 To this end, we need to exploit some structure of the
shortest path trees rooted at the three sites, and the
evolution of the Voronoi diagram $\VD_x := \VD(\{r,g,b\})$, with the weights $\wt(g),\wt(b)$
kept fixed and $\wt(r)=x$, as $x$ decreases from $+\infty$ to $-\infty$. For $c \in \{r,g,b\}$, let
$\Vor_x(c)$ denote the current version of $\Vor(c)$ (for the weight $\wt(r)=x$); recall that it is a
subtree of $T_c$ that spans the vertices in $\Vor(c)$ in $\VD_x$. These subtrees form a spanning
forest $F^x$ of $P$.
We call any edge not in $F^x$ with one endpoint in $\Vor_x(c_1)$ and the other in $\Vor_x(c_2)$,
for a pair of distinct sites $c_1 \neq c_2$, \emph{$c_1c_2$-bichromatic}.
To handle the case where
$\Vor_x(r)$ is empty (e.g., at $x=+\infty$), we think of adding a super source $s$ connected to each
site $c$ with an edge $sc$ of weight $\wt(c)$ (in general, these edges cannot be embedded in the plane
together with $P$), and define the edge $sr$ to be bichromatic.

We now define some bichromatic arcs to be tense at certain critical values of $x$ in a way similar to the definition
of tense arcs in Section \ref{sec:bisectors}. The difference is that here
an 
arc $uv$ is tense at $x$ if $v$ becomes closer to $r$ at $x$ than to {\em both} $g$ and $b$ (rather than to just one
other site as in Section \ref{sec:bisectors}).
Specifically,
we say that an 
arc $uv$ is \emph{tense} at $x$ if $uv$ is $rc$-bichromatic just before $x$ (i.e., for values slightly larger than $x$), and $uv$
is an arc of the full tree $T_r$,
and $x + d_{T_r}(r,v) = \wt(c) + d_{T_c}(c,v) = d(s,v)$.
The values of $x$ at which some arc becomes
tense are a subset of the critical values at which the $rb$-bisector or the $rg$-bisector change.
Recall that we assume that at each critical value $x$ only one arc is tense.
At the critical value $x$, the red tree $\Vor_x(r)$ takes over the node $v$ of the tense arc $uv$.
Let $\Vor_x^+(c)$ be the primal Voronoi cell of $c$ just before the critical value $x$.
For any descendant $w$ of $v$ in $\Vor_x^+(c)$, we have
\begin{align*}
d(s,w) & = \wt(c) + d_{T_c}(c,v) + d_{T_c}(v,w) \\
& = x + d_{T_r}(r,v) + d_{T_c}(v,w) ,
\end{align*}
so the red tree also takes over the entire subtree of $v$ in $\Vor_x^+(c)$. This establishes
the following lemma.
\begin{lemma}\label{lem:single-tight}
At any critical value $x$ of $\wt(r)$ the
 vertices that change color at $x$ are precisely the descendants of $v$ in $\Vor_x^+(c)$ where $uv$ is the unique 
 tense arc at $x$.
\end{lemma}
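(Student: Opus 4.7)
I would prove the claimed set equality by showing both inclusions.

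For the ``$\supseteq$'' direction, I would simply extend the calculation already displayed in the paragraph preceding the lemma. For any descendant $w$ of $v$ in the subtree $\Vor_x^+(c)\subseteq T_c$, the $T_c$-path from $c$ to $w$ factors through $v$, so just before $x$ one has $d(s,w)=\wt(c)+d_{T_c}(c,v)+d_{T_c}(v,w)$. Plugging in the tense-arc identity $\wt(c)+d_{T_c}(c,v)=x+d_{T_r}(r,v)$, the concatenated path $r\to_{T_r} v\to_{T_c} w$ realizes $d(s,w)$ at $x$ with weight $\wt(r)=x$, and for $\wt(r)<x$ it beats the $c$-alternative strictly, so $w$ turns red at $x$.

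The ``$\subseteq$'' direction carries the real work and is where I expect the main obstacle: reconciling two a priori different notions of ``descendant of $v$,'' namely in $T_r$ versus in the $T_c$-subtree $\Vor_x^+(c)$. I would handle it in two stages, combining the uniqueness of the tense arc with Lemma~\ref{lem:vorsubtree} (Voronoi cells are subtrees of the corresponding shortest-path tree). Stage one: any newly-red $w$ is a $T_r$-descendant of $v$. Walk the $T_r$-path from $r$ toward $w$ and let $v^\ast$ be the first vertex on this path that was not in $\Vor_x^+(r)$; its parent $u^\ast$ is therefore in $\Vor_x^+(r)$. Because $\Vor(r)$ remains a subtree of $T_r$ and contains $w$ at $x$, it must also contain $v^\ast$, so the arc $u^\ast v^\ast$ transitions from bichromatic to monochromatically red and is tense at $x$. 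By the uniqueness of the tense arc, $u^\ast v^\ast=uv$, giving $v^\ast=v$.

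Stage two: such a $w$ also lies in the $\Vor_x^+(c)$-subtree rooted at $v$. Let $c''$ denote $w$'s color just before $x$. The equality of $r$- and $c''$-distances at $x$, together with $d(r,w)=d_{T_r}(r,v)+d_{T_r}(v,w)$ and the tense-arc identity, yields
\[
\wt(c'')+d(c'',w)\;=\;\wt(c)+d_{T_c}(c,v)+d_{T_r}(v,w),
\]
and the right-hand side equals the length of the concatenated path $c\to_{T_c} v\to_{T_r} w$. If $c''\ne c$, then just before $x$ we have $\wt(c'')+d(c'',w)<\wt(c)+d(c,w)$, so the identity would force $d_{T_c}(c,v)+d_{T_r}(v,w)<d(c,w)$, contradicting that $d(c,w)$ is the shortest $c$-to-$w$ distance. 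Hence $c''=c$; combined with uniqueness of shortest paths, this forces the $T_c$ route from $c$ to $w$ to pass through $v$, placing $w$ in the required $\Vor_x^+(c)$-subtree.
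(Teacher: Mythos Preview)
Your proof is correct, and in fact more complete than the paper's own. The paper establishes the lemma via the short paragraph immediately preceding its statement: it checks only the ``$\supseteq$'' direction (the displayed calculation showing that every descendant of $v$ in $\Vor_x^+(c)$ becomes red at $x$) and leaves the ``precisely'' part implicit, relying on the standing assumption that at each critical value exactly one arc is tense. Your $\supseteq$ argument is literally the paper's. Your two-stage $\subseteq$ argument---first using Lemma~\ref{lem:vorsubtree} and uniqueness of the tense arc to force any newly-red $w$ to be a $T_r$-descendant of $v$, then using the tense-arc identity plus uniqueness of shortest paths to pin down $c''=c$ and place $w$ in the $T_c$-subtree under $v$---actually supplies the justification the paper omits. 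So the approaches agree in spirit; yours simply writes out the direction the paper took for granted.
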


Consider the sequence $\Dbisr$, and let $x' > x$ be two consecutive values in it.
Recall that both $Q^*_{\geq x'}$ and $Q^*_{\geq x}$ are contiguous (cyclic) subsequences
of $\beta^*(g,b)$ (Lemma~\ref{lem:Delta-consecutive}). Assume $Q^*_{\geq x'} = \beta^*(g,b)[j \dots k]$,
and $Q^*_{\geq x} = \beta^*(g,b)[i \dots l]$ (where the indices are taken modulo the length of $\beta^*(g,b)$).
Since, by definition, $Q^*_{\geq x'} \subset Q^*_{\geq x}$, the set
$Q^*_{=x} := Q^*_{\geq x} \setminus Q^*_{\geq x'}$ corresponding to elements of $\Dbisr$
with value exactly $x$, forms two intervals $[i \dots j), (k \dots l]$ of $\beta^*(g,b)$, at least one of which is non-empty.
Let $uv$ be the unique 
tense arc at critical value $x$. Let $c\in\{g,b\}$ be such that
$uv$ is $rc$-bichromatic just before $x$. By the preceding arguments, including Lemma~\ref{lem:single-tight},
the nodes $w$ with $\Delta^{r}(w)=x$ are descendants of $v$ in
the subtree of $T_c$ spanning
$\Vor_{x}^+(c)$.
Furthermore, the
 subtree of $T_c$ spanning $\Vor_{\infty}(c)$ (this is the cell of $c$ in $\VD(\{g,b\})$ when $r$ is at distance $\infty$ from all vertices) contains the subtree of
$T_c$ spanning
$\Vor_x^+(c)$ and additional nodes $w$ for which $\Delta^{r}(w)=\tilde{\delta}^{rc}(w) \geq x$ that
 switched to the red tree at critical values greater than $x$ (recall that $x$
is decreasing).

Recall the definition of the path $R^*_v$ in the statement of Lemma~\ref{lem:consecutive} (with respect to the bisector $\beta^*(g,b)$).
It follows, by the discussion above, that any edge $e^*$ of $R^*_v$ has $\Delta^{r}(e^*) \geq x$,
and all edges $e^*$ with $\Delta^{r}(e^*)=x$ belong to $R^*_v$. Let $e^*_1$, and $e^*_2$ be the first
and last edges of $R^*_v$, respectively. By Lemma~\ref{lem:consecutive}, $e^*_1$ and $e^*_2$ are
also the edges with minimum and maximum values of $\pre_c(\cdot)$ in $R^*_v$, respectively.
It follows that at least one of $\Delta^{r}(e^*_1),\Delta^{r}(e^*_2)$ is $x$ (and the other
is $\ge x$), and that, moreover, either $e^*_1$ or $e^*_2$ is an extreme edge in the maximal
interval of edges $Q^*_{\ge x}$ of $\beta^*(g,b)$.

Exploiting these properties, we design the following procedure {\sc GetInterval($\hat e^*$)}.
The input is an edge $\hat{e}^* \in \beta^*(g,b)$ with $\Delta^{r}(\hat{e}^*) = x$.
The output is an interval $I$ of  extreme edges $ e^*$ of $Q^*_{\geq x}$
such that (1)  either the first or the last edge of $I$ is of value $x$, (2)
$I$ contains all edges of value $x$ in $\beta^*(g,b)$ (and $\hat{e}^*$ in particular).

\bigskip
\bigskip
\noindent
{\sc GetInterval($\hat e^*$)}
\begin{enumerate}
\item Let $\hat e=vw$ be the primal edge of $\hat e^*$. Find the endpoint of $\hat e$ whose
      $\Delta^{r}(\cdot)$ value is $x$ (Lemma~\ref{lem:single-tight} implies that
      there is only one such endpoint). Suppose, without loss of generality, that $\Delta^{r}(v)=x$.
\item Let $c \in \{g,b\}$ be the site such that 
$\Delta^{r}(v)=\tilde{\delta}^{rc}(v)=x$.
\item Find the ancestor $u$ of $v$ in $T_c$ that is nearest to the root, such that
      $\tilde{\delta}^{rc}(u)=x$.
\      Note that the node $u$ is an endpoint of the unique tense arc at
      critical value $x$. Finding $u$ can be done by binary search on the root-to-$v$ path in $T_c$ since,
      Lemma~\ref{lem:single-tight} implies that all the ancestors of $u$ on this path have strictly smaller $\tilde \delta^{rc}$-values.
\item \label{GI:Rv} 
Let $p$ be the parent of $u$ in $T_c$.
Return the interval of $\beta^*(g,b)$ consisting of arcs whose $\pre_c(\cdot)$ numbers are in the interval $(\pre_c(pu), \pre_c(up))$ (that is, return $R^*_v$). Since, by Lemma~\ref{lem:consecutive}, the cyclic order on $\beta^*(g,b)$ is
      consistent with $\pre_c(\cdot)$, we can find this interval by a binary search if we use $\pre_c(\cdot)$ as an additional key in the search tree representing $\beta^*(g,b)$.
\end{enumerate}

To efficiently implement {\sc GetInterval($\hat e^*$)},
we retrieve and compare the distances
from $r$, $g$, and $b$ to $v$ and $w$. This gives $\Delta^{r}(v)$, $\Delta^{r}(w)$, and thereby
$\Delta^{r}(\hat e)$. It also reveals the
Voronoi cell in $\VD(\{g,b\},\{ \wt(g),\wt(b) \})$ containing $v$.
All this is done easily in $O(1)$ time.
To carry out the binary search to find
 the ancestor $u$
of $v$ in $T_c$ that is nearest to the root, such that
      $\tilde{\delta}^{rc}(u)=x$, we use a level ancestor data structure on $T_c$ which we prepare during  preprocessing.
Each query to this data structure takes  $O(1)$ time.

Using the weak bitonicity of $\Dbisr$ (Corollary~\ref{cor:bitonic}), the procedure {\sc GetInterval}, and the procedure for finding
$e^*_{\max}$ described in Section \ref{sec:trimax},
we can find the trichromatic faces that are dual to the Voronoi vertices
of $\VD^*(\{r,g,b\})$, under the respective weights $\wt(r),\wt(g),\wt(b)$ by the variation of binary search described before.

Several additional enhancements of the preprocessing stage are needed to support an efficient implementation
of this procedure. First, we need to store $T_c$ for each individual site $c$.
Second, for each bisector $\beta^*(c_1,c_2)$ we need to store in its persistent search tree
representation two secondary keys $\pre_{c_1}(\cdot)$ and $\pre_{c_2}(\cdot)$. As we discussed
above, these keys are consistent with the cyclic order of $\beta^*(c_1,c_2)$. Clearly, all these
enhancements do not increase the preprocessing time asymptotically.
We have thus established Theorem~\ref{thm:tri}.

\ifdefined\fullver

\subsection{Dealing with a group $B$ of sites}

In this section we consider the more general scenario of Theorem~\ref{thm:tri-extended}, where there is a single red site $r$, a single green site $g$,
and a set $B$ of blue sites that are on some  hole $\H$. Recall the definition of $\VD^*(r,g,B)$ provided just before the statement of Theorem~\ref{thm:tri-extended}. 
The goal is to find the trichromatic vertices of $\VD^*(r,g,B)$, under the
currently assigned weights; by a \emph{trichromatic vertex} of $\VD^*(r,g,B)$ we mean a primal face with at least one vertex
in $\Vor(r)$ (a red vertex), at least one vertex in $\Vor(g)$ (a green vertex), and at least one vertex
in $\Vor(b)$, for some $b\in B$ (a blue vertex). Handling this situation is similar to the simpler
case of three individual sites, but requires several modifications.
Note that, by creating a ``super-blue'' site inside $\H$, and by connecting it to all the blue sites,
Lemma~\ref{lem:2vvert} still holds, so, as in the simpler case, there are at most two
trichromatic vertices.

Next, Lemma~\ref{lem:Delta-consecutive} still holds, but its proof needs a slight modification
because Lemma~\ref{lem:bicycle} might not apply. Instead, we need to use Lemma~\ref{lem:GBcycle}, which only guarantees that $\beta^*(g,B)$ is a non-self-crossing cycle, which might pass multiple
times through $\H^*$.

\begin{proof}({\em of Lemma~\ref{lem:Delta-consecutive} for a bisector $\beta^*(g,B)$}) 
The original proof of Lemma~\ref{lem:Delta-consecutive} considers a maximal contiguous interval of
edges of $\beta^*(g,b)$ whose corresponding primal edges have at least one red endpoint. It argues that the
extreme vertices of such a path must be trichromatic. Since in a simple cycle extreme vertices of maximal subpaths are distinct, and there are at most two trichromatic vertices (by Lemma~\ref{lem:2vvert}), it follows there can only be one such maximal subpath. Here, if $\H^*$ appears more than once in $\beta^*(g,B)$, and if $\H^*$ is trichromatic, we could have two edge-disjoint
maximal subpaths $I_1$, $I_2$ of $\beta^*(g,B)$ that share $\H^*$ as an endpoint. 

We prove by contradiction that this does not happen. Assume that $\H^*$ is
trichromatic. Then, by Lemma~\ref{lem:2vvert}, there is at most one other trichromatic vertex $x^*\neq \H^*$ along $\beta^*(g,B)$.
It is impossible that both $I_1$ and $I_2$ share $x^*$ too; by Lemma~\ref{lem:GBcycle}, $h^*$ is the only vertex that can have degree greater than 2 in $\beta^*(g,B)$, and so we could have merged
$I_1$ and $I_2$ into a larger subpath, contradicting their maximality. Hence, among
the four endpoints of $I_1$ and $I_2$, at least three are $\H^*$. This, combined with the fact
that $\beta^*(g,B)$ is non-self-crossing, imply the following property: We can choose an endpoint
of $I_1$ equal to $\H^*$ and an endpoint of $I_2$ equal to $\H^*$, choose edges $e_1^*$,
$e_2^*$ incident to the first endpoint, with $e_1^* \notin I_1$ and $e_2^* \in I_1$,
and choose edges $e_3^*$, $e_4^*$ incident to the second endpoint, with $e_3^*\notin I_2$ 
and $e_4^* \in I_2$, so that the cyclic order of these edges around $\H^*$ is
$(e_1^*,e_2^*,e_3^*,e_4^*)$. Note that $e^*_3 \notin I_1$, by the maximality of $I_1$ and $I_2$. As in the original proof of Lemma~\ref{lem:Delta-consecutive}, for each of
$e_2^*$, $e_4^*$, its primal edge has at least one red endpoint, and for each of
$e_1^*$, $e_3^*$, its primal edge has one green endpoint and one blue endpoint.
The two red endpoints can be connected by a path in $\Vor(r)$, consisting exclusively
of red vertices, and the two green endpoints can be connected by a path in $\Vor(g)$,
consisting exclusively of green vertices. However, by the cyclic order of these edges
along $\bd\H$, the red and the green paths must cross one another, which is impossible
because $\Vor(r)$ and $\Vor(g)$ are disjoint. This contradiction shows that
Lemma~\ref{lem:Delta-consecutive} continues to hold in this case too. See Figure~\ref{fig:tri2}.
\end{proof}

\begin{figure}[htb]
\begin{center}
\includegraphics[scale=0.75, clip=true, trim = 0mm 130mm 70mm 10mm]{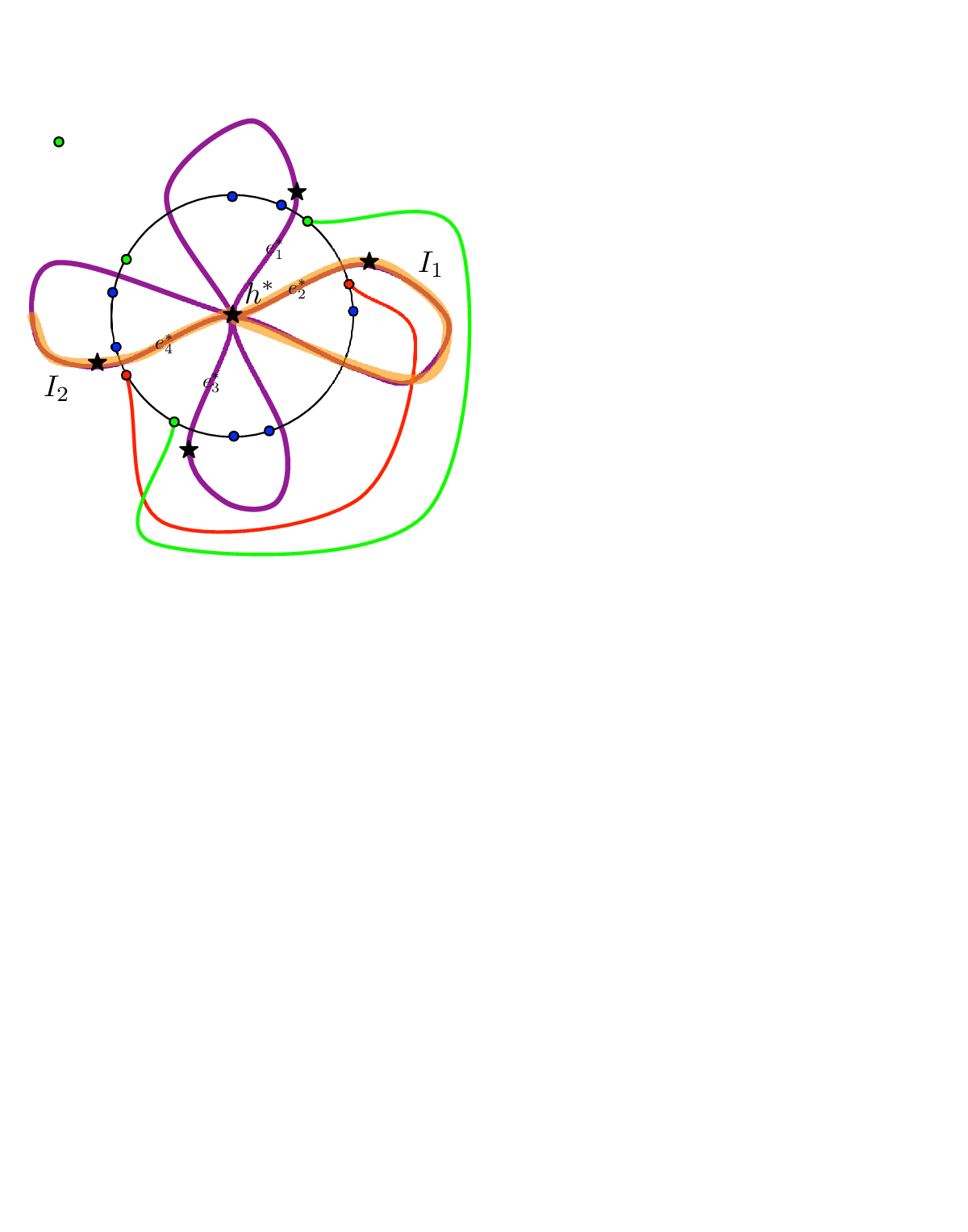}
\caption{Modification to the proof of Lemma~\ref{lem:Delta-consecutive}. The bisector $\beta^*(g,B)$ is shown in purple. Two non-consecutive subpaths, $I_1, I_2$, of $\beta^*(g,B)$ formed by edges of $Q^*_{\geq x}$ are highlighted in orange. These subpaths meet at the dual vertex $h^*$. The edges $e^*_1,e^*_2,e^*_3,e^*_4$ are indicated (their endpoints (dual vertices) are shown as stars), as well as a possible coloring of their primal endpoints. A contradiction arises since the red path and the green path intersect.
\label{fig:tri2}}
\end{center}
\end{figure}

We also need to slightly revise the statement of Lemma~\ref{lem:consecutive} as follows:
Let $g$ be a site and $B$ be a set of sites on a single hole $\H$. Let $b\in  B$, and let $X$ be the segment of $\beta^*(g,b)$ along $\beta^*(g,B)$. Note that, by Lemma~\ref{lem:gBcycle}, $X$ is a well defined single segment of $\beta^*(g,B)$. 
Let $v \neq g$ be a node in the cell $\Vor(g)$ in $\VD(g \cup B)$. Let $q$ be the parent of $v$ in $T_g$. Then the following hold:
\begin{enumerate}
\item
The arcs in $R^*_v := \{ e^* \in \beta^*(g,B) \mid \pre_g(pv) < \pre_g(e^*) < \pre_g(vp) \}$ form a subpath of $\beta^*(g,B)$.
\item
The labels $\pre_g(e^*)$ are strictly monotone along $R^*_v$.
\end{enumerate}

 Let $v$ be a node in the cell $\Vor(b)$ in $\VD(g \cup B)$. Let $p$ be the parent of $v$ in $T_b$. Then the following hold:
\begin{enumerate}
\item
The arcs in $
R^*_v := \{ e^* \in X \mid
pre_b(pv) < pre_b(e^*) < pre_b(vp) \}
$
is a contiguous subpath of $X$. 

\item
The labels $\pre_b(e^*)$ are strictly monotone along $R^*_u$.
\end{enumerate}

The proof for $v \in \Vor(g)$ is the same as the original proof of Lemma~\ref{lem:consecutive}, by considering a super blue site $v_B$ instead of the individual sites in $B$. The proof for $v \in \Vor(b)$ is also the same as the original proof of Lemma~\ref{lem:consecutive}, applied to just  the cell $\Vor(b)$ of the diagram $\VD(g\cup B)$.

Note that, 
by Lemma~\ref{lem:gBcycle}, $\beta^*(g,B)$ consists of at most $|B|$ segments, each belonging to $\beta^*(g,b_j)$ for a different $b_j$. We will perform the bitonic search in two phases, first to locate the (at most two) segments containing a trichromatic vertex, and then locating the trichromatic vertex within each segment. 

In the first phase of the bitonic search we choose at each step an arc $\hat e^*$  of $\beta^*(g,B)$ from a segment of $\beta^*(g,b_j)$ that roughly partitions the number of segments of the active portion of the search equally. Thus, when we handle the arc $\hat e^*$, we know the site $b\in B$ such that $\hat e^*$ belongs to the segment $X$ of $\beta^*(g,b)$ on $\beta^*(g,B)$. 
Suppose that $\Delta^r(\hat{e}^*)=x$, and let  $uv$ be the unique tense arc at $x$.
If $v \in \Vor(g)$, then the procedure {\sc GetInterval} does not depend on the blue sites at all, and proceeds as in the single site case.
If $v \in \Vor(b)$, then, by Lemma~\ref{lem:gBcycle}, the interval that we seek consists only of edges on the boundary of $\Vor(b)$ in $\VD(g,B)$, so it is contained in the segment $X$. We can therefore find this interval by invoking the procedure {\sc GetInterval} with the sites $r,g,b$ as in the 
case of a single blue site $b$. The interval 
$R^*_v$ returned is an interval of extreme edges $ e^*$ of $\{ e^* \in \beta^*(g,b) \mid  \Delta^{r}(e^*) \geq x \}$
such that (1)  either the first or the last edge of $R^*_v$ is of value $x$, and (2)
$R^*_v$ contains all edges of value $x$ in $\beta^*(g,b)$ (and $\hat{e}^*$ in particular).
We are interested in a maximal such interval in $\{ e^* \in \beta^*(g,B) \mid  \Delta^{r}(e^*) \geq x \}$, not in $\{ e^* \in \beta^*(g,b) \mid  \Delta^{r}(e^*) \geq x \}$  (i.e., with respect to $\beta^*(g,B)$, not $\beta^*(g,b)$). Hence, we return the intersection of $R^*_v$ and $X$. 
This can be computed in $\tilde O(1)$ time by checking whether each endpoint of $R^*_v$ belongs to $X$, and if not, truncating $R^*_v$ at the corresponding endpoint of $X$.  

In the second phase, we complete the bitonic search within a single segment of $\beta^*(g,B)$ that belongs to $\beta^*(g,b)$ for a specific site $b$. The search is conducted in the same manner as in the first phase since we know the relevant site $b \in B$.
\else
The proof of Theorem~\ref{thm:tri-extended} builds upon that of Theorem~\ref{thm:tri}. Some additional preprocessing is required to handle a set of nearly consecutive sites $B$ instead of a single site $b$, as well as some changes due to the weaker structure of $\beta^*(g,B)$. The details will appear in the full version.
\fi


\section{Computing the Voronoi diagram} \label{sec:voronoi}

In this section we describe an algorithm that, given access to the pre-computed representation of the bisectors (Theorem~\ref{thm:bisectors}), and the mechanism for computing trichromatic vertices provided by Theorem~\ref{thm:tri-extended},
computes $\VD^*(S)$ in $\tilde{O}(|S|)$ time.
Thus we establish all parts of Theorem~\ref{thm:vor}, except for the mechanism for maximum queries in a Voronoi cell (item $(ii)$ in Theorem~\ref{thm:vor}), which is shown in Section~\ref{sec:prep_max}.
The presentation proceeds through several stages that handle the cases where the sites lie on the boundary of a single hole, of two holes, of three holes, and finally the general case.
\ifdefined\fullver
\else
Due to space constraint we only include the description of the single hole case. The case of multiple holes will appear in the full version.
\fi

\medskip
\noindent
{\bf Representing the diagram.}
We represent $\VD^*(S)$ by a reduced graph in which we contract all the vertices of degree $2$.
That is, we replace each path $p^*=p^*_1,p^*_2,\ldots,p^*_l=q^*$ in $\VD^*(S)$, where $p^*$ and $q^*$
are Voronoi vertices, and $p^*_2,p^*_3,\ldots,p^*_{l-1}$ are vertices of degree $2$, by the single
edge $(p^*,q^*)$. We represent this reduced graph using the standard DCEL data structure for
representing planar maps (see~\cite{dBCOvK} for details). Note that, $\VD^*(S)$ and its reduced representation may be disconnected, and may contain parallel edges and self loops. The standard DCEL data structure can represent disconnected and non-simple planar maps.
By Lemma \ref{lem:bisectors}, the path $(p^*=p^*_1,p^*_2,\ldots,p^*_l=q^*)$ is a contiguous portion
of some bisector $\beta^*(u_1,u_2)$. We store with the contracted edge $(p^*,q^*)$ pointers that
(a) identify the sites $u_1$, $u_2$ whose dual cells are adjacent to this edge, and
(b) point to the first and the last edges of this portion, namely to the edges $(p^*,p^*_2)$
and $(p^*_{l-1},q^*)$.

\ifdefined\fullver
\subsection{Single hole} \label{sec:single}
\fi
\medskip
\noindent
{\bf The divide-and-conquer mechanism.}
Assume that all sites lie on a single face (hole) $\H_1$ of the graph $P$, and let $\H^*_1$ denote its dual vertex. Note that here we only assume that the sites of the diagram are on a single hole, not that there is just a single hole (a possibly non-triangulated face of $P$).
We describe a divide-and-conquer algorithm for constructing $\VD^*(S)$ in $\tilde{O}(|S|)$ time.
Note that in this case the diagram does not contain an isolated loop; that is, its edges form a connected
graph (see below for a more precise statement and justification).

We partition the set $S$ of sites into two contiguous subsets of (roughly) the same size $k\approx |S|/2$.
Each subset is consecutive in the cyclic order around $\bd \H_1$; to simplify the notation, we assume that they have exactly the same size.
The sites in one subset, $G$, denoted as $g_1,\ldots,g_k$ in their clockwise order around
$\H_1$, are referred to as the {\em green} sites, and those of the other subset $R$, denoted as
$r_1,\ldots,r_k$ in their clockwise order around $\H_1$, are the {\em red} sites. If $k > 3$
we recursively compute the Voronoi diagram of $G$, denoted as $\VD^*(G)$, and the Voronoi diagram of $R$,
denoted as $\VD^*(R)$. If $k = 3$ then we compute $\VD^*(G)$ (resp.,  $\VD^*(R)$) using the algorithm of Section \ref{sec:trichrom} and if $k=2$,  $\VD^*(G)$ (resp.,  $\VD^*(R)$) is the bisector of the two sites in $G$ (resp.,  $R$). We now describe how to merge these two diagrams and obtain
$\VD^*(S)$, in $\tilde{O}(|S|)$ time.

\medskip
\noindent
{\bf Constructing \boldmath$\beta^*(G,R)$.}
To merge $\VD^*(G)$ and $\VD^*(R)$, we have to identify all segments of the bisectors
$\beta^*(g,r)$, for $g\in G$ and $r\in R$, that belong to $\VD^*(S)$.
Recall that we denoted the subgraph of $\VD^*(S)$ induced by the edges of these segments by $\beta^*(G,R)$.
The structure of $\beta^*(G,R)$ is specified by Lemma~\ref{lem:GBcycle} (this is the case where $h_g=h_b$ in the statement of the lemma).

Our algorithm consists of two main stages. In the first stage we identify the
edges of $\beta^*(G,R)$ that are incident to $\H^*_1$; that is, edges of $\beta^*(G,R)$
that are dual to edges on $\bd \H_1$.  We denote this subset of $\beta^*(G,R)$ by $\beta_I^*(G,R)$.
In the second stage we compute the paths that comprise $\beta^*(G,R)$, each of which connects
a pair of edges of $\beta_I^*(G,R)$, as obtained in Stage 1. Each such path is a concatenation of
contiguous portions of ``bichromatic''
bisectors of the form $\beta^*(g_i,r_j)$.

\medskip
\noindent
{\bf Stage 1.}
\hypertarget{sec:single-stage1}
We partition $\bd \H_1$ into segments by cutting off its edges whose duals are in $\VD^*(G)\cup \VD^*(R)$.
The vertices in each such segment belong to a single Voronoi cell $\Vor(g_i)$ of $\VD(G)$, and to a
single Voronoi cell $\Vor(r_j)$ of $\VD(R)$, and we then denote this segment by $A_{ij}$.
Consider the bisector $\beta^*(g_i,r_j)$. It is a simple cycle incident to $\H^*_1$, so it has at most two edges that cross $\bd \H_1$. These edges are
stored with $\beta^*(g_i,r_j)$ when it is constructed
by the sweeping procedure at preprocessing. For each segment $A_{ij}$, we check whether one of the edges, $e^*$, of $\beta^*(g_i,r_j)$ that
is incident to $\H^*_1$ is dual to an edge in $A_{ij}$.
 This can be done by assigning to each edge a number according to the cyclic walk along $\H_1$, and checking if the number assigned to $e$ is between the numbers assigned to  the endpoints of $A_{ij}$.
If we find such an edge $e^*$, it belongs to $\beta_I^*(G,R)$, and we add it to this set.

Another way in which an edge $e^*$ can belong to $\beta_I^*(G,R)$ is when
it is a delimiter between two consecutive segments $A_{ij}$ and $A_{i'j'}$
(where we have $i=i'$ or $j=j'$, but not both).
Let $e=xx'$ be the primal edge of $\bd \H$
dual to $e^*$, with $x\in A_{ij}$ and $x'\in A_{i'j'}$. In this case, the site in $G\cup R$
nearest to $x$ must be either $g_i$ or $r_j$, and the site nearest to $x'$ must be either
$g_{i'}$ or $r_{j'}$. Then $e^*$ is an edge of $\beta^*_I(G,R)$ if and only if the site
nearest to $x$ and the site nearest to $x'$ are of different colors
(a test that we can perform in constant time).

Since the preceding arguments exhaust all possibilities, we obtain the following lemma
that asserts the correctness of this stage.
\begin{lemma}
The procedure described above identifies all edges of $\beta_I^*(G,R)$.
\end{lemma}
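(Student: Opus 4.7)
The plan is to show two inclusions: every edge found by the procedure lies in $\beta_I^*(G,R)$, and conversely every edge of $\beta_I^*(G,R)$ is found by one of the two cases. The forward direction (soundness) is essentially by construction; the main content is the reverse direction (completeness), which requires a case split on how the endpoints of the primal edge fall into the segmentation $\{A_{ij}\}$ of $\bd\H_1$.

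Let $e^* \in \beta_I^*(G,R)$ and let $e = xx'$ be its primal, so $e$ lies on $\bd\H_1$. By definition of $\beta^*(G,R)$, one endpoint of $e$, say $x$, is in $\Vor(g)$ for some $g\in G$ and in $\Vor(r)$ for some $r\in R$ (since the Voronoi diagrams $\VD(G)$ and $\VD(R)$ each partition $V$), and similarly for $x'$. Thus $x$ belongs to some segment $A_{ij}$ and $x'$ to some segment $A_{i'j'}$. I would then split into two cases. In the first case, $A_{ij}=A_{i'j'}$, so both $x,x'$ lie in $\Vor(g_i)\cap \Vor(r_j)$ (with respect to the two subdiagrams). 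In this situation I would argue that since one endpoint of $e$ must be closer to a green site and the other to a red site (as $e^*\in\beta^*(G,R)$), the unique candidate pair is $(g_i,r_j)$, so $e^*\in\beta^*(g_i,r_j)$. Since $e$ is on $\bd\H_1$, the dual $e^*$ is incident to $\H_1^*$, hence it is one of the at most two incident edges of $\beta^*(g_i,r_j)$ recorded during preprocessing, and the cyclic-interval test correctly flags it.

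In the second case, $A_{ij}$ and $A_{i'j'}$ are distinct; because the segmentation is induced by cutting at edges whose duals lie in $\VD^*(G)\cup \VD^*(R)$, the edge $e$ is by definition a delimiter. The closest site in $G\cup R$ to $x$ (resp.\ $x'$) is, by the definition of the segments, forced to be either $g_i$ or $r_j$ (resp.\ $g_{i'}$ or $r_{j'}$). The fact that $e^*\in\beta^*(G,R)$ means the two nearest sites are of opposite colors, so the color test in the procedure succeeds and $e^*$ is reported. Conversely, if the procedure reports $e^*$ in this case, the bichromatic nearest-site condition directly witnesses that $e^*\in\beta^*(G,R)$.

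For soundness of the first case, if the procedure outputs an edge $e^*$ of $\beta^*(g_i,r_j)$ that is incident to $\H_1^*$ and is dual to an edge of $A_{ij}$, then by the definition of $A_{ij}$ the two endpoints of $e$ lie in $\Vor(g_i)$ and $\Vor(r_j)$, one in each (since $e^*$ is on the bisector between them), certifying $e^*\in\beta^*(G,R)$. No step in the argument is intricate; the main thing to get right is simply to be careful that the segmentation is done with respect to \emph{both} $\VD^*(G)$ and $\VD^*(R)$ so that each segment $A_{ij}$ is associated with a unique pair $(g_i,r_j)$, after which the two cases above are exhaustive and disjoint. Combining the two cases establishes the lemma.
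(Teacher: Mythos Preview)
Your proposal is correct and takes essentially the same approach as the paper. The paper's own ``proof'' is simply the sentence ``Since the preceding arguments exhaust all possibilities, we obtain the following lemma,'' so your argument is really just a careful unpacking of the two cases the paper already describes in Stage~1 (edge interior to a segment $A_{ij}$ versus delimiter edge), checking both soundness and completeness explicitly rather than leaving them implicit.
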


\medskip
\noindent
{\bf Stage 2.}
By Lemma~\ref{lem:GBcycle}, $\beta^*(G,R)$ is a union of edge-disjoint simple cycles, all passing through $\H^*_1$; we refer to these sub-cycles simply as cycles.
The set $\beta_I^*(G,R)$ produced in Stage 1 consists of the edges incident to $\H^*_1$ on each of these cycles.
Moreover, as already argued, each of these cycles contains exactly two such edges.
Saying it slightly differently (with a bit of notation abuse) ignoring $\H^*_1$, each of these cycles is a simple
path in $P^*$ that starts and ends at a pair of respective edges of $\beta_I^*(G,R)$,
and otherwise does not meet $\bd \H_1$.
Stage 2 constructs these paths one at a time, picking an edge $e_1^*$
of $\beta_I^*(G,R)$, and tracing the path $\gamma$ that starts at $e_1^*$ until it reaches
$\bd \H_1$ again, at another ``matched'' edge $e_2^* \in \beta_I^*(G,R)$; the stage repeats this
tracing step until all the edges of $\beta_I^*(G,R)$ are exhausted.

We assume that each of $\VD^*(G)$ and $\VD^*(R)$ has at least two non-empty cells. The procedure for the other cases is a degenerate version of the one we describe.
We compute the path $\gamma\in \beta^*(G,R)$ containing an initial edge $e_1^* \in \beta_I^*(G,R)$ as follows.
Let $\beta^*(g_i,r_j)$ be the bisector that contains $e_1^*$ in $\beta^*(G,R)$.
It follows that $e_1^*$ is either contained in, or lies on the boundary of
$\Vor^*(g_i)$ in $\VD^*(G)$, and the same holds for $\Vor^*(r_j)$ in $\VD^*(R)$.
We would like to use Theorem~\ref{thm:tri-extended} with $r_j,g_i,G\setminus\{g_i\}$ and with $g_i,r_j,R\setminus\{r_j\}$. 
We have already computed $\VD^*(G)$, so $\beta^*(g_j, G\setminus\{g_i\})$ is available (it is the boundary of the Voronoi cell of $g_j$ in $\VD^*(G)$). Therefore, requirement (iii) in the statement of the theorem is satisfied. To satisfy requirement (ii) we compute, for each site in $S$, the Voronoi cell in $\VD(G)$ containing it. The following short description of this computation does not assume the sites are on a single hole, in preparation for the multi-hole cases. Each hole $h$ is either entirely contained in a single Voronoi cell in $\VD(G)$ or is intersected by Voronoi edges of $\VD^*(G)$. In the latter case we can infer which portion of the sites on $h$ belongs to which cell by inspecting the order of  the edges of $\VD^*(G)$ incident to $h$ along the boundary of $h$. 
We had already mentioned in the description of \hyperlink{sec:single-stage1}{Stage 1} that the edges of a bisector that are incident to a hole are stored with the bisector when it is constructed by the sweeping procedure at preprocessing. In the former case all the sites on $h$ are in the same Voronoi cell of $\VD(G)$, so we can choose an arbitrary site on $h$ and find its Voronoi cell by explicitly inspecting its distance from each site in $G$. Thus, the total time required to find, for all sites $s \in S$ the Voronoi cell of $\VD(G)$ to which $s$ belongs is $\tilde O(|G|)$, which is dominated by the time invested to compute $\VD^*(G)$ in the first place.

We use Theorem~\ref{thm:tri-extended} with $r_j,g_i,G\setminus\{g_i\}$  to find the (one or two) trichromatic vertices of $\VD^*(r_j,g_i, G\setminus\{g_i\})$. Similarly, we use Theorem~\ref{thm:tri-extended} with $g_i,r_j,R\setminus\{r_j\}$  to find the (one or two) trichromatic vertices of $\VD^*(g_i,r_j, R\setminus\{r_j\})$. Note that $\H_1^*$ is a trichromatic vertex in both diagrams, so it is always one of the vertices returned by the algorithm of Theorem~\ref{thm:tri-extended}.

Denote by $I_g$ the segment of the bisector $\beta^*(g_i,r_j)$ 
starting with $e_1^*$ and ending at the first trichromatic vertex of $\VD^*(r_j,g_i, G\setminus\{g_i\})$ encountered along $\beta^*(g_i,r_j)$ ($I_g$ may be all of $\beta^*(g_i,r_j)$ in case $e_1^*$ is incident to the single trichromatic vertex of $\VD^*(r_j,g_i, G\setminus\{g_i\})$).
Let $I_r$ be the segment of the bisector $\beta^*(g_i,r_j)$ starting with $e_1^*$ and ending at the first trichromatic vertex of $\VD^*(g_i,r_j,R\setminus\{r_j\})$ encountered along $\beta^*(g_i,r_j)$ ($I_r$ may be all of $\beta^*(g_i,r_j)$ in case $e_1^*$ is incident to the singletrichromatic vertex of $\VD^*(g_i,r_j,R\setminus\{r_j\})$).
Let $I$ be the shorter of the segments $I_g$ and $I_r$.
The segment $I$ of $\beta^*(g_i,r_j)$ is a Voronoi edge (of the final diagram)
in $\beta^*(G,R)$. If $I$ is the entire $\beta^*(g_i,r_j)$ then $\gamma = I$ and we are done.
Otherwise, each endpoint of $I$ is a Voronoi vertex of $\VD^*(S)$
adjacent to the cells $\Vor(g_i)$, $\Vor(r_j)$,
and to a third cell which is either $\Vor(g_{i'})$ for some green site $g_{i'}\not= g_i$,
or $\Vor(r_{j'})$ for some red site $r_{j'}\not= r_j$. Let $f^*$ be the endpoint of $I$
different from $\H^*_1$. Assume for concreteness that
$f^*$ is adjacent in $\VD^*(S)$ to $\Vor(g_i)$, $\Vor(r_j)$, and to some other
$\Vor(r_{j'})$.
Let $e_2^*$ be the edge of $\beta^*(g_i,r_{j'})$ incident to $f^*$ that is on the boundary of $\Vor (r_{j'})$ in $\VD^*(S)$  (the third edge adjacent to $f^*$ in  $\VD^*(S)$ is an edge of $\beta^*(r_j,r_{j'})$ from $\VD^*(R)$, and is not part of $\beta^*(G,R)$).
We continue tracing $\gamma$ by repeating the above procedure with 
$e^*_2$ taking the role of $e^*_1$.
We continue tracing $\gamma$
in this manner, identifying the Voronoi edges on $\gamma$ one by one, until we get back to $\H^*_1$ through an edge $a^* \in \beta^*_I(G,R)$.
We then discard $e_1^*$ and $a^*$ from $\beta^*_I(G,R)$, and keep performing this tracing
procedure from another edge of $\beta^*_I(G,R)$, until all edges of $\beta^*_I(G,R)$ are exhausted,
obtaining in this way all the cycles of $\beta^*(G,R)$. See Figure~\ref{fig:single} for an illustration.

\begin{figure}
\begin{center}
\captionsetup{subrefformat=parens}
\begin{subfigure}{.3\textwidth}
  \centering
  \includegraphics[width=0.95\textwidth]{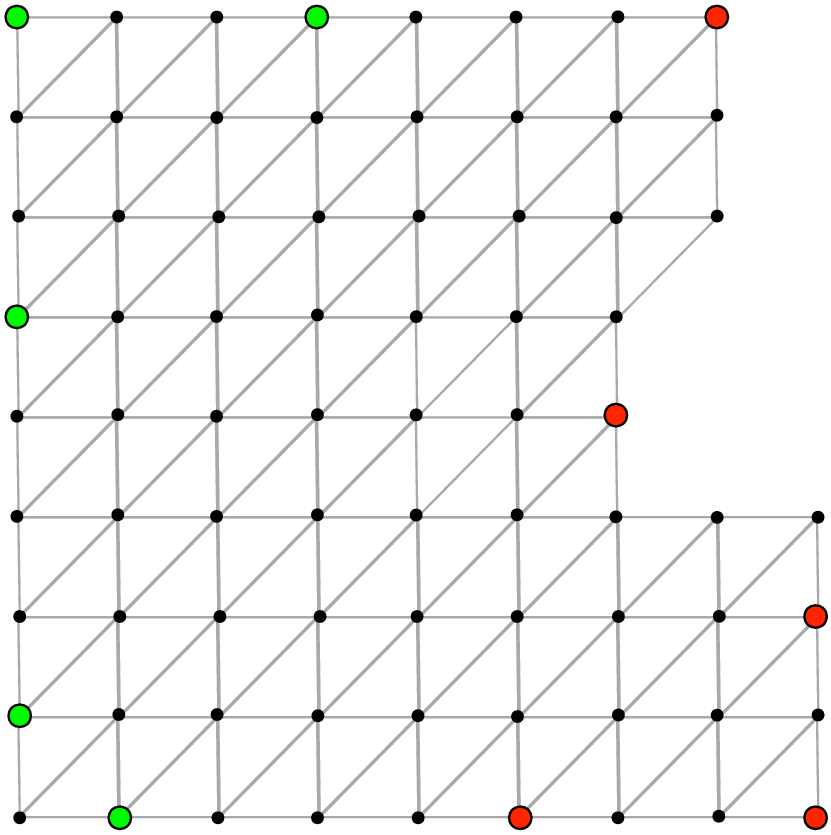}
  \caption{}
  \label{fig:sub1}
\end{subfigure}
\begin{subfigure}{.3\textwidth}
  \centering
  \includegraphics[width=0.95\textwidth]{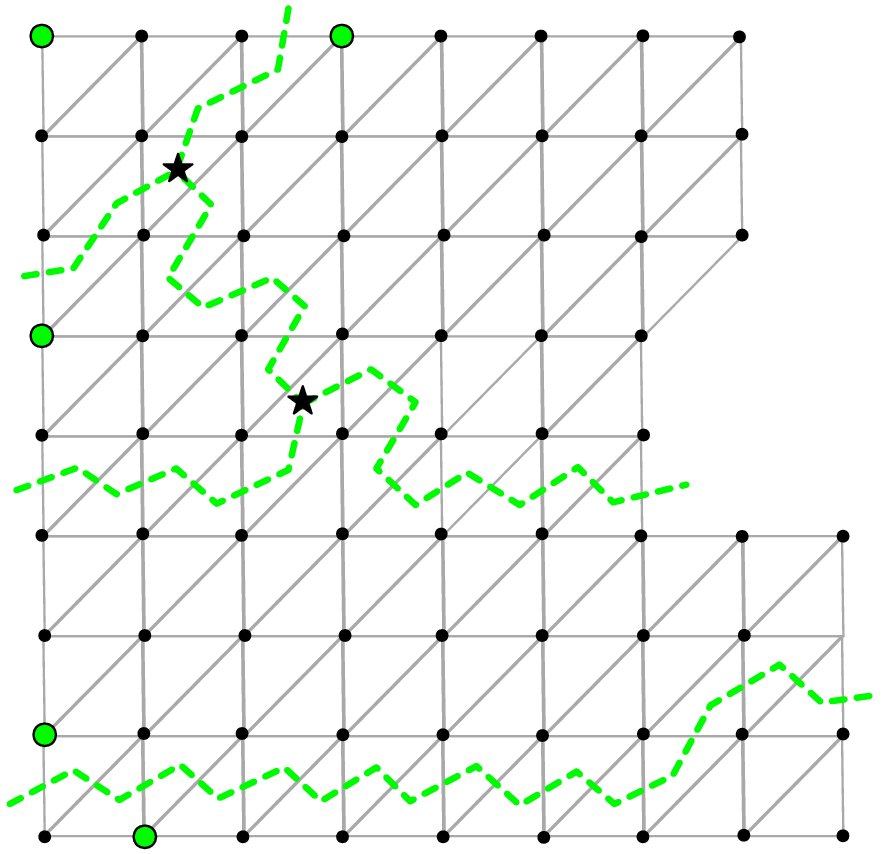}
  \caption{}
  \label{fig:sub1G}
\end{subfigure}
\begin{subfigure}{.3\textwidth}
  \centering
  \includegraphics[width=0.95\textwidth]{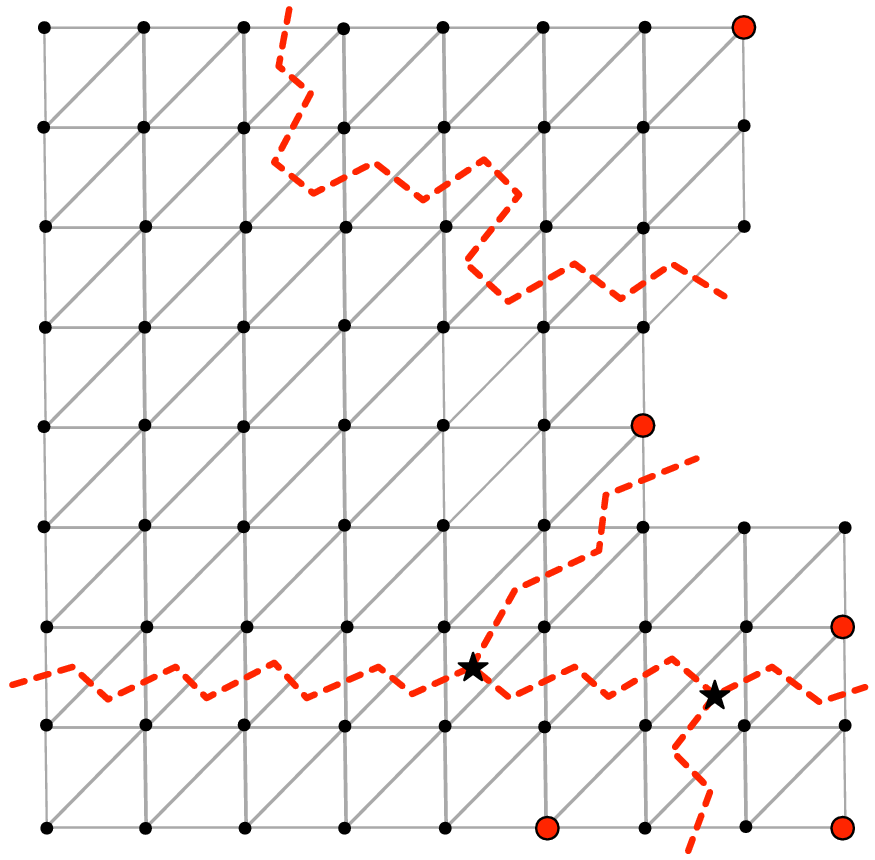}
  \caption{}
  \label{fig:sub1R}
\end{subfigure}
\begin{subfigure}{.3\textwidth}
  \centering
  \includegraphics[width=0.95\textwidth]{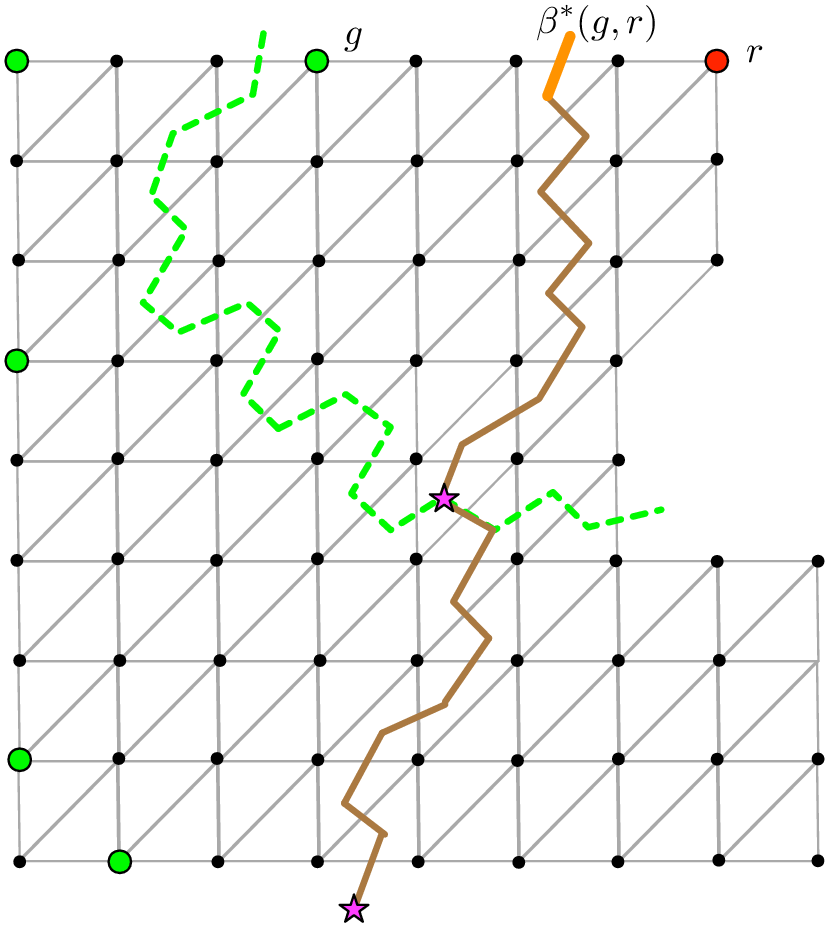}
  \caption{}
  \label{fig:sub2G}
\end{subfigure}
\hspace{0.1\textwidth}
\begin{subfigure}{.3\textwidth}
  \centering
  \includegraphics[width=0.95\textwidth]{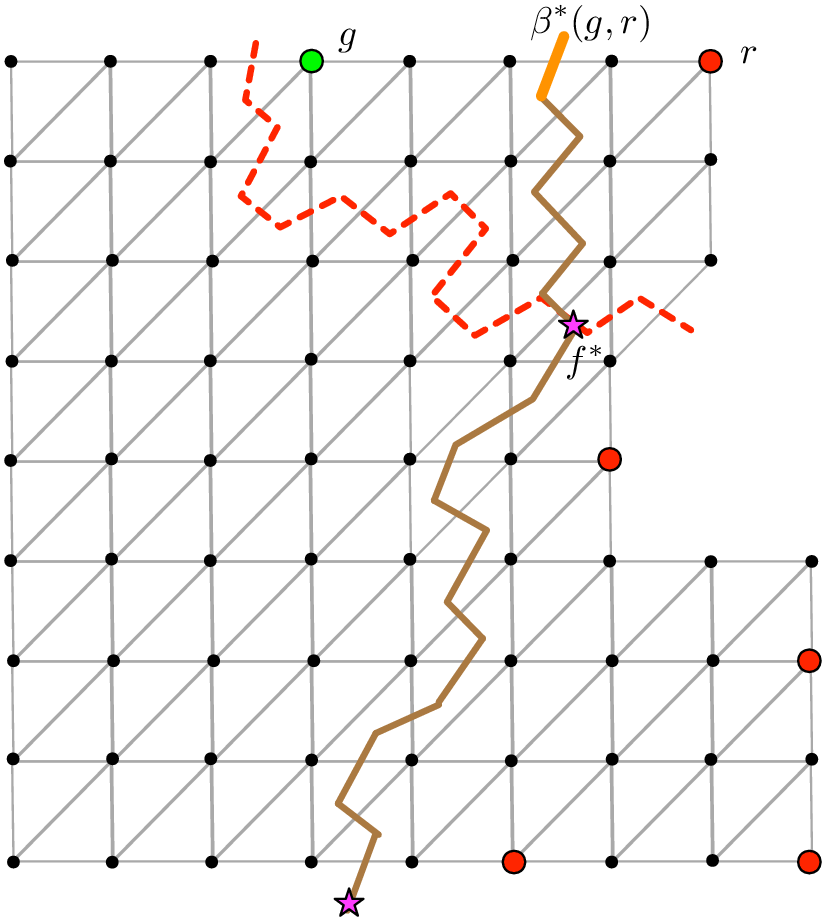}
  \caption{}
  \label{fig:sub2R}
\end{subfigure}
\begin{subfigure}{.3\textwidth}
  \centering
  \includegraphics[width=0.95\textwidth]{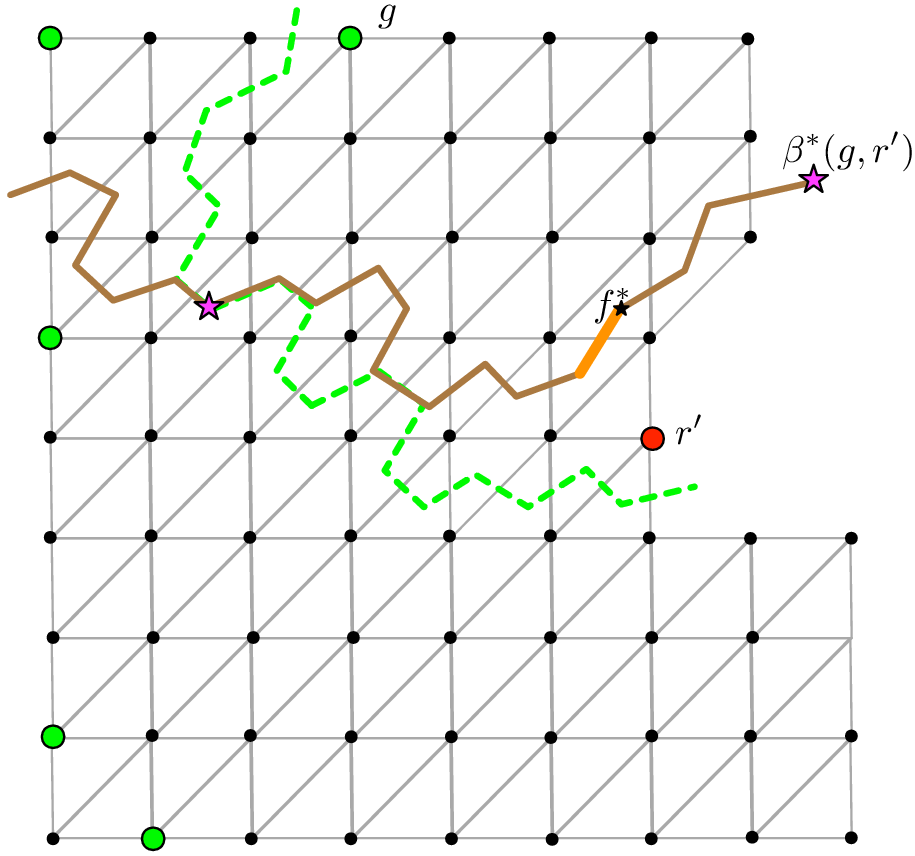}
  \caption{}
  \label{fig:sub3G}
\end{subfigure}
\begin{subfigure}{.3\textwidth}
  \centering
  \includegraphics[width=0.95\textwidth]{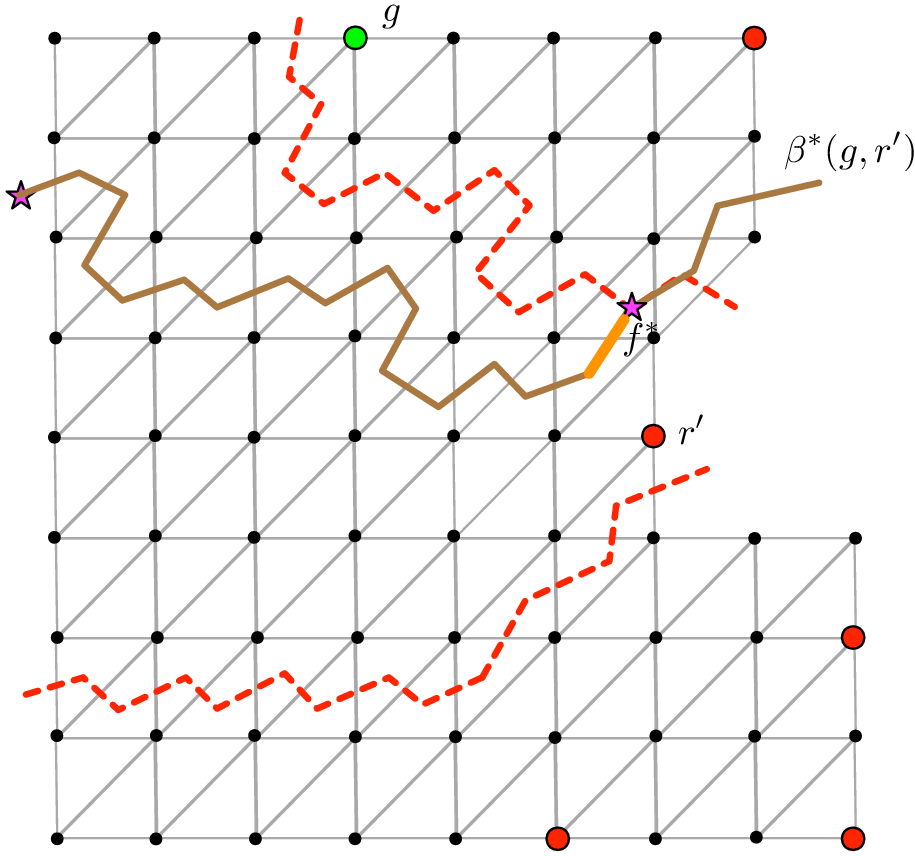}
  \caption{}
  \label{fig:sub3R}
\end{subfigure}
\begin{subfigure}{.3\textwidth}
  \centering
  \includegraphics[width=0.95\textwidth]{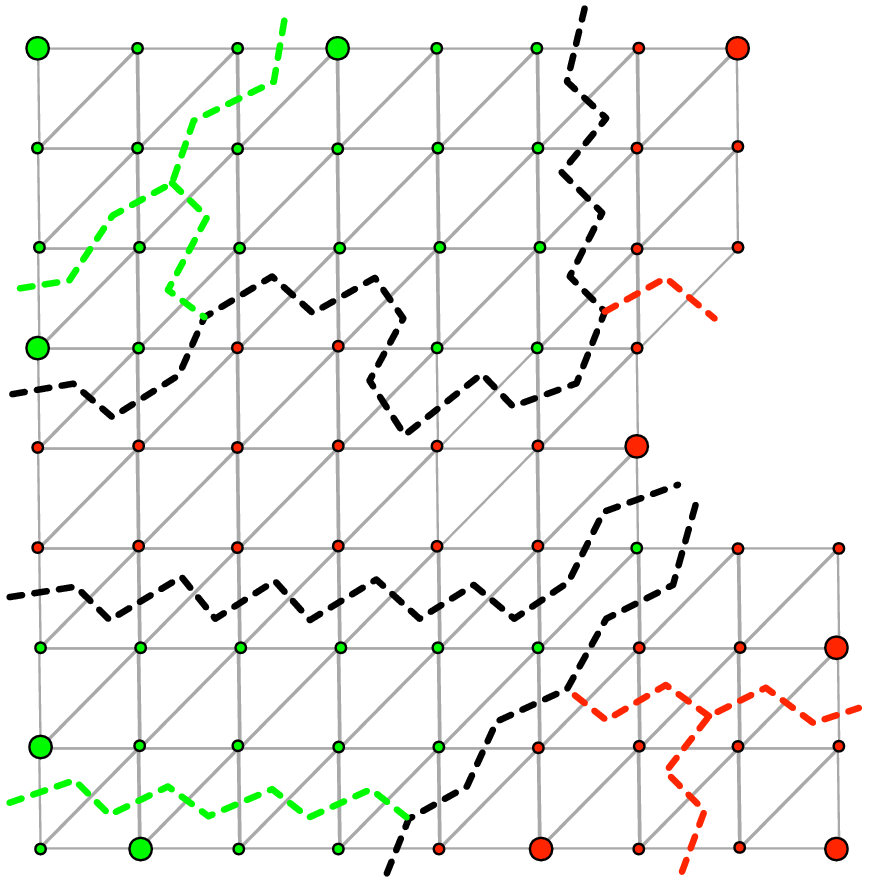}
  \caption{}
  \label{fig:sub4}
\end{subfigure}
\caption{Illustration of Stage 2 of the construction. ~\subref{fig:sub1} A graph with a set $S$ of sites (larger circles) on a single hole, partitioned into green  and red sites. 
\subref{fig:sub1G} $VD^*(G)$.
\subref{fig:sub1R} $VD^*(R)$.
\subref{fig:sub2G} An arc $e_1^*$ of $\beta^*_I(G,R)$ (orange) belongs to $\beta^*(g,r)$ (brown). We compute the trichromatic vertices of $\VD^*(r,g,G\setminus\{g\})$ (magenta stars).
\subref{fig:sub2R} Similarly, we compute the trichromatic vertices of $\VD^*(g,r,R\setminus\{r\})$. Since $I_r$ is shorter than $I_g$, $I_r$ is a Voronoi edge of $\VD^*(S)$ in $\beta^*(G,B)$. 
\subref{fig:sub3G} The arc $e_2^*$ (orange) belongs to $\beta^*(g,r')$. We compute the trichromatic vertices of $\VD^*(r',g,G\setminus\{g\})$.
\subref{fig:sub3R} Similarly, we compute the trichromatic vertices of $\VD^*(g,r',R\setminus\{r'\})$.
\subref{fig:sub4} The complete diagram $\VD^*(S)$. The bisector $\beta^*(G,B)$ is shown in black.
\label{fig:single}}
\end{center}
\end{figure}

\medskip
\noindent
{\bf Wrap-up.}
Once we have computed the simple cycles comprising the non-self-crossing cycle $\beta^*(G,R)$, we can assemble the entire
Voronoi diagram $\VD^*(S)$ by pasting parts of $\VD^*(G)$ and $\VD^*(R)$ to $\beta^*(G,R)$ as follows.
Consider the bisector $\beta^*(G,R)$ as a cycle in the plane. It intersects the embedding of $\VD^*(G)$ at the endpoints of segments of $\beta^*(G,R)$ (each such intersection is either a Voronoi vertex of $\VD^*(G)$ or a degree 2 vertex on some Voronoi edge of $\VD^*(G)$). We use the DCEL representation to cut $\VD^*(G)$ along $\beta^*(G,R)$, keeping just parts of $\VD^*(G)$ that belong to the $G$ side of $\beta^*(G,R)$. We repeat the process for $\VD^*(R)$, and then glue together $\beta^*(G,R)$ and the parts we kept from $\VD^*(G)$ and $\VD^*(R)$. This is done by identifying the endpoints of segments of $\beta^*(G,R)$ that appear in multiple parts.

\medskip
\noindent
{\bf Running time.}
Each call to Theorem~\ref{thm:tri-extended} takes $\tilde{O}(1)$ time, and the
number of calls is proportional to the overall complexity of $\VD(G)$, $\VD(R)$,
and $\VD(S)$, which is $O(|S|)$. It follows that the overall cost of the
divide-and-conquer mechanism is $\tilde{O}(|S|)$.

\ifdefined\fullver

\subsection{Two holes} \label{sec:double}

Assume now that the sites $S$ are located on only two holes (distinguished faces) $\H_1$ and $\H_2$ of $P$. We call the sites on $\bd \H_1$ (resp., on $\bd \H_2$) the
\emph{green sites} (resp., \emph{red sites}), and denote those subsets of $S$ by
$G$ and $R$, respectively. We extend the algorithm of Section \ref{sec:single} to
compute $\VD^*(S)$ in this ``bichromatic'' case.
Let $\H^*_1$ and $\H^*_2$ be the dual vertices of $\H_1$ and $\H_2$, respectively.

Our algorithm first computes the Voronoi diagram $\VD^*(G)$ of the green sites, and the
Voronoi diagram $\VD^*(R)$ of the red sites, using the algorithm of Section \ref{sec:single}.
(That algorithm works in the presence of other holes that have no sites incident to them.)
We then merge $\VD^*(G)$ and $\VD^*(R)$ into $\VD^*(S)$, by
finding the bisector $\beta^*(G,R)$. This bisector consists of the bichromatic edges of $\VD^*(S)$, each of which
corresponds to a segment of a bisector of the form $\beta^*(g,r)$, for some $g\in G$
and $r\in R$. The  structure of $\beta^*(G,R)$ is characterized in Lemma~\ref{lem:GBcycle}.

Our algorithm for computing $\beta^*(G,R)$ has two stages, similar to the two stages of the algorithm in the previous subsection. In the first stage we compute the edges of $\beta^*(G,R)$ that are dual to edges of $\H_1$ or of $\H_2$. We denote this subset of $\beta^*(G,R)$ by $\beta_I^*(G,R)$.
This is done in exactly the same way as the computation of $\beta_I^*(G,R)$ in the previous section.

Note however that here, in contrast with the previous algorithm, $\beta_I^*(G,R)$ may be empty, in case $\beta^*(G,R)$ is a simple cycle which does not contain $\H^*_1$ and $\H^*_2$.
When $\beta_I^*(G,R)=\emptyset$, we identify an edge $e^*$ on the ``free-floating'' cycle $\beta^*(G,R)$, using a special procedure that we describe below.

In the second stage we construct the whole $\beta^*(G,R)$.
If $\beta_I^*(G,R)\not= \emptyset$ (i.e., if $\beta^*(G,B)$ is incident to $h^*_1$ or to $h^*_2$),
then $\beta^*(G,R)$ decomposes into simple sub-cycles each
passing through $h_1^*$ and/or $h_2^*$. Each sub-cycle decomposes into one or two paths between edges of
$\beta_I^*(G,R)$. If the cycle contains only $h_1^*$ (resp., $h_2^*$)
 it becomes a single path when we delete $h_1^*$ (resp., $h_2^*$). This path connects a pair of edges of $\beta_I^*(G,R)$, where the two edges
in such a pair are both dual to edges of $\H_1$ (resp., $h_2$). If the cycle contains both holes then it splits into two paths when we delete $h_1^*$ and $h_2^*$.
Each of these paths connects a pair of edges of $\beta_I^*(G,R)$, where one
is dual to an edge of $\H_1$ and one dual to an edge of $\H_2$.
We compute these paths and cycles exactly as we computed, in Section \ref{sec:single},
the paths of $\beta^*(G,R)$ connecting pairs of edges of $\beta_I^*(G,R)$.

If $\beta^*(G,R)$ is a simple cycle that
does not contain $\H^*_1$ and $\H^*_2$, we compute it using a similar tracing procedure, but this
time starting from the edge $e^* \in \beta^*(G,R)$ that is produced by the special procedure. To complete the description of the computation of $\beta^*(G,R)$, it remains to describe the special procedure.

\medskip
\noindent
{\bf Finding an edge \boldmath$e^*\in \VD^*(G,R)$ when \boldmath$\VD^*(G,R)$ does not meet the holes.} We split this procedure into two cases.

\medskip
\noindent
{\bf Case 1:} There is a path between $\H^*_1$ and $\H^*_2$ in $\VD^*(G)$.
(The case where there is a path between $\H^*_1$ and $\H^*_2$ in $\VD^*(R)$ is treated symmetrically.)
It is easy to verify that there is such a path if and only if some edge of $\VD^*(G)$ passes through
 $\H^*_2$ (as already noted, this information is available from the preprocessing stage).
Let $\pi^*(\H^*_1,\H^*_2)$ be such a path in $\VD^*(G)$ between $\H^*_1$ and $\H^*_2$. 
We say that an edge $e^*$ of is green (resp. red), if both endpoints of $e$ belong to green (resp. red) cells of $\VD(S)$. Note that if an edge is neither green nor red then it belongs, by definition, to $\beta^*(G,R)$, so our goal is to find such an edge. 
Since (we assume that) $\beta^*(G,R)$ is a simple cycle separating $G$ and $R$,
it follows that
the entire $\bd h_1$ (resp., $\bd h_2$) is contained in green (resp., red) cells of $\VD^*(S)$. Hence the first edge of $\pi^*(\H^*_1,\H^*_2)$ is green, and the last edge of $\pi^*(\H^*_1,\H^*_2)$ is red. 
It follows that $\pi^*(\H^*_1,\H^*_2)$ contains a vertex $x^*$ such that the two edges of $\pi^*(\H^*_1,\H^*_2)$ incident to $x^*$ are neither both green nor both red. 
Therefore, the primal face $x$ corresponding to $x^*$ has incident primal vertices both in green and red cells of $\VD(S)$, so some edge (in fact, at least two edges) incident to $x^*$ is an edge of $\beta^*(G,R)$. 
We first find such a vertex $x^*$ by binary search, and then find an incident edge of $\beta^*(G,R)$ by binary search as follows.

\begin{figure}[h]
\begin{center}
\includegraphics[scale=0.3]{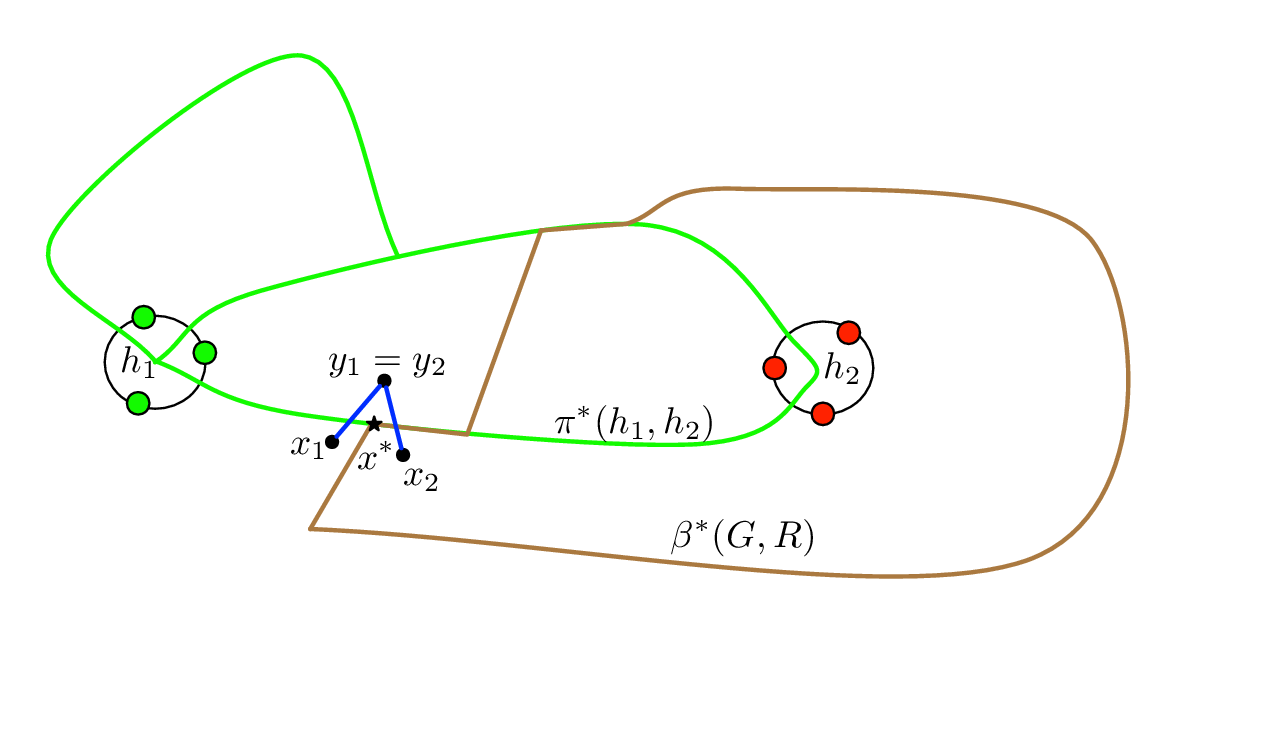}
\caption{Illustration of Case 1. The diagram $\VD^*(G)$ is shown in green. The bisector $\beta^*(G,R)$ is brown. In this example, $x_1$ and $x_2$ are in green cells of $\VD^*(G\cup R)$, and $y_1=y_2$ is in a red cell.
\label{fig:6-2-1}}
\end{center}
\end{figure}

Consider a candidate dual vertex $x^*$ of $\pi^*(\H^*_1,\H^*_2)$.
Denote the dual edges to the two edges of $\pi^*(\H^*_1,\H^*_2)$ incident to $x^*$ by $x_1y_1$ and $x_2y_2$. See Figure~\ref{fig:6-2-1}. 
We find by "brute force" the Voronoi cells to which the four primal vertices $x_1,x_2,y_1,y_2$ belong (i.e., by comparing the distances from each of these primal vertices to all sites in $S$). 
If not all four vertices belong to cells of the same color, the binary search is done. 
If all four vertices are in green (resp. red) cells we continue the search in the subpath of $\pi^*(\H^*_1,\H^*_2)$ between $x^*$ and $\H^*_2$ (resp. between $\H^*_1$ and $x^*$). 
This binary search takes $O(|S|\log n) = \tilde O(|S|)$ time.  

After identifying a vertex $x^*$ we need to identify an edge of $\beta^*(G,R)$ incident to $x^*$. Since, by Lemma~\ref{lem:GBcycle}, the degree of $x^*$ in $\beta^*(G,R)$ is 2, the primal vertices on the boundary of the primal face $x$ dual to $x^*$ can be partitioned into two maximal sets of consecutive vertices, such that all the vertices in one set are in green Voronoi cells in $\VD(S)$, and those in the other set are in red cells. Since we have already located a vertex in each set, the two edges of $\beta^*(G,R)$ incident to $x^*$ can be found by binary search, where at each step of the binary search we consider a single edge $e$ on the boundary of the face $x$, and identify, by "brute force", the Voronoi cells in $\VD(S)$ to which the two endpoints of $e$ belong. This binary search also takes $O(|S|\log n) = \tilde O(|S|)$ time.

\medskip
\noindent
{\bf Case 2:} The hole $\H_2$ is contained in a single cell of $\VD^*(G)$
and the hole $\H_1$ is contained in a single cell of $\VD^*(R)$
(that is, no edge of $\VD^*(G)$ (resp.,  $\VD^*(R)$) crosses any of $\bd \H_2$ (resp., $\bd \H_1$)).
Let $g$ be the green site such that all the vertices of $\bd \H_2$ lie in
$\Vor(g)$ in $\VD(G)$, and let $r$ be the red site such that all the vertices
of $\bd \H_1$ lie in $\Vor(r)$ in $\VD(R)$. 
We use Theorem~\ref{thm:tri-extended} to compute the trichromatic vertices of $\VD^*(g,r,R\setminus r)$ and of $\VD^*(r,g,G\setminus g)$. The requirements for applying the theorem are satisfied for the same reasons as in the single hole case.
There are two subcases depending on whether any trichromatic vertices were found in any of these two applications of Theorem~\ref{thm:tri-extended}.

\medskip \noindent
{\bf Subcase 2(a):}
If no trichromatic vertices were found, then 
either (i) $\beta^*(g,r)$ is empty, or (ii) the boundary of $\Vor^*(r)$ in $\VD^*(R)$
is disjoint from the boundary of $\Vor^*(g)$ in $\VD^*(G)$, and the bisector $\beta^*(g,r)$
separates these two boundaries.  See Figure~\ref{fig:6-2-2aii}. In case (i), suppose, without loss of generality, that $\beta^*(g,r)$ is empty because $\wt(g) + d_P(g,r) < \wt(r)$. Since we assumed that $\beta^*(G,R)$ does not intersect $h_2$, it must be that, for every site $r' \in R$, $\wt(g) + d_P(g,r') < \wt(r')$. Therefore, in this case, $\VD^*(S) = \VD^*(G)$. In case (ii), $\beta^*(G,R)$ is the bisector $\beta^*(g,r)$.

\begin{figure}[h]
\begin{center}
\includegraphics[scale=0.75]{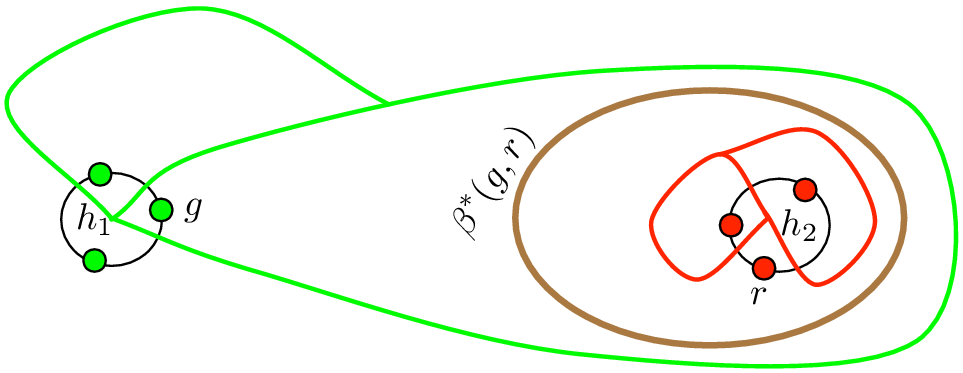}
\caption{Ilustration of Subcase 2(a).ii. The diagram $\VD^*(G)$ is shown in green. The diagram $\VD^*(R)$ is shown in red. The bisector $\beta^*(g,r)$ is brown. In this example, $\beta^*(g,r) = \beta^*(G,R)$.
\label{fig:6-2-2aii}}
\end{center}
\end{figure}

\medskip \noindent
{\bf Subcase 2(b):} 
Otherwise, at least one of the trichromatic vertices of $\VD^*(g,r,R\setminus r)$ and of $\VD^*(r,g,G\setminus g)$ must be a vertex of $\beta^*(G,R)$. To see this consider a curve $\pi^*$ in the dual connecting $h_1^*$ and $h_2^*$  that lies inside or on the boundary of both $\Vor^*(g)$ in $\VD^*(G)$ and $\Vor^*(r)$ in $\VD^*(R)$. Since the bisector $\beta^*(r,g)$ separates $h_1$ from $h_2$, it must cross $\pi^*$ at some vertex $x^*$. Follow $\beta^*(r,g)$ from $x^*$ until it first intersects a Voronoi edge of either $\Vor^*(g)$ or $\Vor^*(r)$ at some dual vertex $y^*$. Note that $y^*$ is a trichromatic vertex of either $\VD^*(g,r,R\setminus r)$ or $\VD^*(r,g,G\setminus g)$. It follows that 
the portion of $\beta^*(r,g)$ between $x^*$ and $y^*$ is a portion of $\beta^*(G,R)$, and that $y^*$ is a dual vertex on $\beta^*(R,G)$.
See Figure~\ref{fig:6-2-2b}.
If $y^*$ is not a hole, we can find which of the 3 incident edges belongs to $\beta^*(G,R)$ by "brute force". If $y^*$ is a hole then one of the two edges of $\beta^*(g,r)$ incident to $y^*$ is also an edge of $\beta^*(G,R)$. Since the edges of $\beta^*(g,r)$ incident to each hole were stored when $\beta^*(g,r)$ was constructed by the sweeping algorithm at preprocessing, we can access each of them and check which is an edge of $\beta^*(G,R)$ by "brute force". 
\begin{figure}[h]
\begin{center}
\includegraphics[scale=0.65]{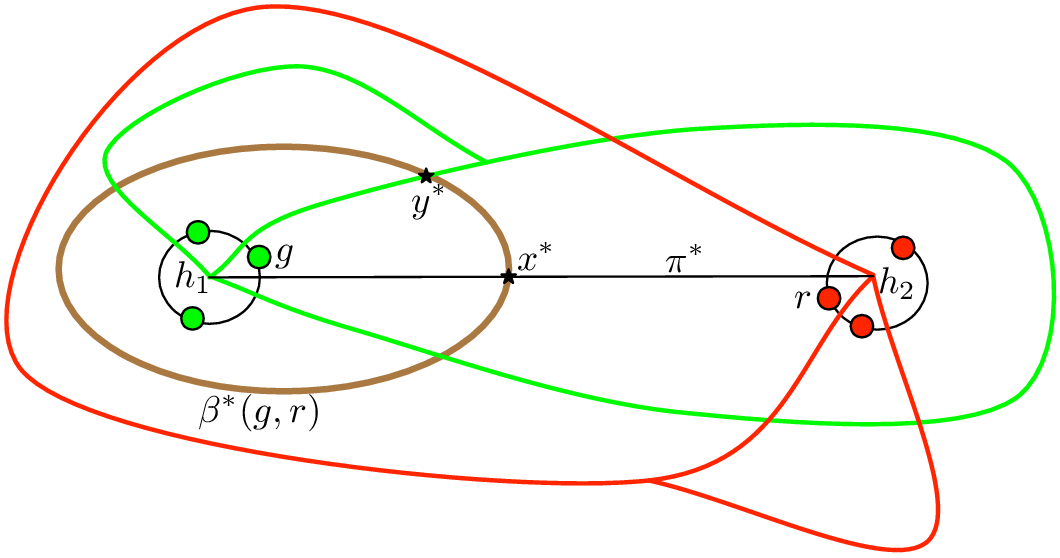}
\caption{Ilustration of Subcase 2(b). The diagram $\VD^*(G)$ is shown in green. The diagram $\VD^*(R)$ is shown in red. The bisector $\beta^*(g,r)$ is brown. 
\label{fig:6-2-2b}}
\end{center}
\end{figure}

Once we have computed $\beta^*(G,R)$, we paste it with the appropriate portions of
$\VD^*(G)$ and $\VD^*(R)$, to obtain $\VD^*(S)$, in a straightforward manner as we combined
$\beta^*(G,R)$ with $\VD^*(G)$ and $\VD^*(R)$ in Section \ref{sec:single}.

\subsection{Three holes} \label{sec:trivert}

The next stage is to compute all the $O(1)$ ``trichromatic'' Voronoi diagrams, where each
such diagram is of the form $\VD^*(S_i\cup S_j\cup S_k)$, for distinct indices
$1\le i<j<k\le \NH$, where $S_i$ (resp., $S_j$, $S_k$) consists of all the sites on
the boundary of $i$-th hole (resp., $j$-th hole, $k$-th hole) of $P$.

Extending Lemma~\ref{lem:2vvert}, we get the following interesting property.
\begin{lemma} \label{lem:2v3vert}
Let $S$, $S'$, $S''$ be three subsets of sites, each appearing consecutively on
the boundary of a single hole of $P$ (when the holes are not distinct, the corresponding
subsets are assumed to be separated along the common hole boundary). Then there are at
most two trichromatic Voronoi vertices in
$\VD^*(S\cup S'\cup S'')$. That is, there are at most two faces of $P$ for which
there exist sites $u\in S$, $v\in S'$, and $w\in S''$, such that each of the cells
$\Vor(u)$, $\Vor(v)$, $\Vor(w)$ contains a single vertex of $f$.
\end{lemma}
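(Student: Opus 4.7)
My plan is to reduce Lemma~\ref{lem:2v3vert} to the three-site Lemma~\ref{lem:2vvert} by collapsing each of $S$, $S'$, $S''$ into a single ``super-site.'' Concretely, I would augment $P$ to a graph $P^\circ$ by placing an artificial vertex $u_S$ inside the hole containing $S$ and adding artificial arcs $u_S u$ of length $\wt(u)$ for every $u\in S$ (with $\wt(u_S):=0$), and do the same for $S'$ and $S''$. Under the separation hypothesis, each star can be drawn inside its hole without crossings, so $P^\circ$ remains a planar embedded graph in which every original face of $P$ persists as a face.

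Next I would verify that the super-sites correctly encode the three collective Voronoi regions. For every vertex $p$ of $P$, the additively weighted distance in $P^\circ$ satisfies $\wt(u_S)+d_{P^\circ}(u_S,p)=\min_{u\in S}(\wt(u)+d_P(u,p))$, and analogously for $u_{S'}$ and $u_{S''}$; any shortest path from $u_S$ must exit the hole through some $u \in S$ achieving the minimum. Consequently, in $\VD_{P^\circ}(\{u_S,u_{S'},u_{S''}\})$ the cell of $u_S$, restricted to $V(P)$, equals $\bigcup_{u\in S}\Vor_{P,S\cup S'\cup S''}(u)$, with the analogous identities for $u_{S'}$ and $u_{S''}$.

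With these identifications in hand I would apply Lemma~\ref{lem:2vvert} inside $P^\circ$: it yields at most two faces of $P^\circ$ incident to a vertex from each of the three super-cells. Any trichromatic face $f$ of $\VD^*(S\cup S'\cup S'')$ is, by definition, an original face of $P$ (not one of the artificial faces newly created inside the holes) whose boundary carries a vertex from each of the three collective cells; by the preceding paragraph these vertices also lie in the three distinct super-cells of $P^\circ$, so $f$ is trichromatic in $P^\circ$ as well. The bound of two therefore transfers back to $P$, giving the lemma.

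The main obstacle, as I see it, is the planar embedding step when two or three of the subsets live on a common hole $h$: the separation hypothesis is precisely what lets me route the stars of $u_S$, $u_{S'}$, and $u_{S''}$ inside $h$ in pairwise disjoint topological disks, so that no original face of $P$ is destroyed and no spurious face-adjacency is introduced. Once this topological step is justified carefully (and a minor check is made that the non-degeneracy assumption on shortest paths in $P^\circ$ can be enforced by perturbation), the rest of the argument is essentially bookkeeping around the correspondence between collective cells in $P$ and super-cells in $P^\circ$.
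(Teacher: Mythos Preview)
Your proposal is correct and follows essentially the same approach as the paper: the paper's proof also collapses each of $S$, $S'$, $S''$ to a single point inside its hole (using the separation hypothesis to keep the drawing planar when holes coincide) and then reruns the $K_{3,3}$ argument of Lemma~\ref{lem:2vvert}. The only cosmetic difference is that you phrase it as a black-box reduction to Lemma~\ref{lem:2vvert} in the augmented graph $P^\circ$, whereas the paper redoes the $K_{3,3}$ drawing directly; the underlying idea is identical, and indeed the super-site construction you use already appears verbatim elsewhere in the paper (e.g., in the proof of Lemma~\ref{lem:GBcycle}).
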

\begin{proof}
The proof proceeds essentially as in the proof of Lemma~\ref{lem:2vvert}, by showing that the existence
of three trichromatic vertices would lead to an impossible plane drawing of $K_{3,3}$. Here
one set of vertices are points inside the faces dual to the vertices, and another set
are points inside the holes whose boundaries contain the sets $S$, $S'$, $S''$.
(When the holes are not distinct, we draw a separate point inside the hole for each subset
along its boundary; the fact that these subsets are separated allows us to embed the relevant edges
in a crossing-free manner.)
\hfill \fullqed \end{proof}

Fix three distinct holes, call them $\H_1$, $\H_2$, $\H_3$, and let $S_1$, $S_2$, $S_3$
be the respective subsets of sites, such that $S_i$ lies on $\bd \H_i$, for $i=1,2,3$.
The preceding procedure has already computed the three bichromatic diagrams
$\VD^*(S_1\cup S_2)$, $\VD^*(S_1\cup S_3)$, and $\VD^*(S_2\cup S_3)$.
We now want to merge them into the trichromatic diagram $\VD^*(S_1\cup S_2\cup S_3)$.

We take each bichromatic bisector of the form $\beta^*(u,v)$, for $u\in S_1$
and $v\in S_2$, that contains a Voronoi edge in $\VD^*(S_1\cup S_2)$, and note
that there are only $O(|S_1|+|S_2|)=O(|S|)$ such bisectors.
For each such bisector $\beta^*(u,v)$, we take the Voronoi cell $\Vor^*(u)$
in $\VD^*(S_1\cup S_3)$ (assuming it is not empty), and denote it for specificity as
$\Vor^*_{13}(u)$. 
Let $W_{13}(u)$ be the set of all sites $w\in S_3$ such that there is a segment of the 
bisector $\beta^*(u,w)$ in $\Vor^*_{13}(u)$. 
We would like to find trichromatic vertices of $\VD^*(v,u,W_{13}(u))$ that lie on an edge of $\Vor^*_{13}(u)$. Such vertices are trichromatic vertices of $\VD^*(S_1 \cup S_2 \cup S_3)$. 
To find such vertices we would like to apply Theorem~\ref{thm:tri-extended} to $v,u$, and $W_{13}(u)$. Requirement (ii) in the statement of Theorem~\ref{thm:tri-extended} is satisfied as argued in the single hole case. 
Requirement (iii) might not be satisfied because $\beta^*(u,W_{13}(u))$ might not be fully represented in $\VD^*(S_1 \cup S_3)$ ($\Vor^*_{13}(u)$ contains parts of $\beta^*(u,W_{13}(u))$, but might also contain segments of bisectors between $u$ and other sites in $S_1$).
To satisfy requirement (iii) we compute $\Vor^*(\{u\} \cup W_{13}(u))$ (this is a diagram involving sites on just two holes so we can use the algorithm in Section~\ref{sec:double}). This takes $\tilde O(|\Vor^*_{13}(u)|)$ time. The boundary of the Voronoi cell of $u$ in this diagram is the bisector $\beta^*(u,W_{13}(u))$. 
We can therefore apply Theorem~\ref{thm:tri-extended} to $v,u$, and $W_{13}(u)$ in $\tilde O(1)$ time.
If we get two trichromatic vertices, we test each of them to see whether it lies on a Voronoi edge (edges) of $\Vor^*_{13}(u)$. 
Each such vertex is a Voronoi vertex of $\VD^*(S_1 \cup S_2 \cup S_3)$, as is easily verified, and otherwise it is not.

By checking all bisectors $\beta^*(u,v)$ that contain a Voronoi edge in $\VD^*(S_1\cup S_2)$ we find the zero, one or two trichromatic vertices of $\VD^*(S_1 \cup S_2 \cup S_3)$. Constructing $\VD^*(S_1 \cup S_2 \cup S_3)$ is now done in a similar manner to the wrap-up stage in the single hole case, as follows. Assume first that two trichromatic vertices $u^*$ and $v^*$ were found. Consider $\beta^*(S_1,S_2)$. Recall that it is represented as a binary search tree over segments of bisectors of the form $\beta^*(u_i,v_j)$ for $u_i \in S_1, v_j\in S_2$. Both $u^*$ and $v^*$ are vertices on $\beta^*(S_1,S_2)$. We cut $\beta^*(S_1,S_2)$ at $u^*$ and $v^*$, and keep just the subpath $\beta^*_{12}$ of $\beta^*(S_1,S_2)$ that belongs to $\VD^*(S_1 \cup S_2 \cup S_3)$. In a similar manner we obtain the subpaths $\beta^*_{13}$ of $\beta^*(S_1,S_3)$ and $\beta^*_{23}$ of $\beta^*(S_2,S_3)$ that belong to $\VD^*(S_1 \cup S_2 \cup S_3)$. 
The concatenation of $\beta^*_{12}$ and $\beta^*_{13}$ is the cycle forming the boundary of the Voronoi cell of $S_1$ in $\VD^*(S_1 \cup S_2 \cup S_3)$. 
We therefore cut the DCEL representation of $\VD^*(S_1)$ along the concatenation of $\beta^*_{12}$ and $\beta^*_{13}$ and keep just the side that corresponds to the Voronoi cell of $S_1$ in $\VD^*(S_1 \cup S_2 \cup S_3)$. In a similar manner we obtain the DCEL representation of the the Voronoi cells of $S_2$ and of $S_3$ in $\VD^*(S_1 \cup S_2 \cup S_3)$. We then paste the three cells into the DCEL representation of $\VD^*(S_1\cup S_2 \cup S_3)$ by identifying their common boundaries along $\beta^*_{12}, \beta^*_{13}$ and $\beta^*_{23}$.

The degenerate cases of a single or no trichromatic vertex in $\VD^*(S_1 \cup S_2 \cup S_3)$ are similar to the case of two trichromatic vertices.

The total time to handle each site $u \in S_1$ is $\tilde O(|\Vor_{13}(u)| + |\Vor_{12}(u)|)$. The first term is for computing $\beta^*(u,W_{13}(u))$, and the second term accounts for all the applications of Theorem~\ref{thm:tri-extended} involving $u$. Summing over all $u \in S_1$ we get $\tilde O(|S_1|+|S_2|+|S_3|)$. 
 This completes the description of the construction of $\VD(S_1\cup S_2\cup S_3)$
for the sites on a fixed triple of holes. We repeat this for all (the constant
number of) such triples, and obtain all the trichromatic diagrams.
It follows from the analysis presented above that the overall running time of the
constructions is $\tilde{O}(|S|)$.

\subsection{Handling multiple holes} \label{sec:final}

The procedures presented so far construct all the trichromatic Voronoi diagrams,
namely all the diagrams of the form $\VD^*(S_i\cup S_j\cup S_k)$,
for distinct indices $1\le i<j<k\le \NH$, where $S_i$ (resp., $S_j$, $S_k$) consists
of all the sites on the boundary of the $i$-th hole (resp., $j$-th hole, $k$-th hole)
of $P$. The final step in the construction, presented in this subsection, produces the
overall diagram, of the entire set $S$, from these partial diagrams.
(We assume here that there are at least four holes, otherwise this stage is not needed.)

Clearly, every Voronoi vertex in $\VD^*(S)$ is also a Voronoi vertex in some trichromatic
diagram, so the set of all the trichromatic vertices, over all the $\binom{\NH}{3}$
triples of holes, is a superset of the actual vertices of $\VD^*(S)$. (To be precise,
these sets also include ``monochromatic'' and ``bichromatic'' vertices, determined by
sites that belong to just one hole or to two holes.) To find the real final Voronoi
vertices and edges, we proceed as follows.

For each bisector $\beta^*(u,v)$, say, with $u\in S_1$ and $v\in S_2$, where $S_1$
and $S_2$ are the subsets of sites on the boundaries of two fixed respective holes
$\H_1$, $\H_2$, we mark along $\beta^*(u,v)$ all the Voronoi edges and vertices that show up in
any of the trichromatic diagrams $\VD^*(S_1\cup S_2\cup S_k)$, for $k=3,\ldots,\NH$.
We sort the resulting vertices in their order along $\beta^*(u,v)$, and thereby obtain a
partition of $\beta^*(u,v)$ into ``atomic'' intervals, each delimited by two consecutive (real or fake) vertices.

We then count, for each atomic interval $I$, in how many trichromatic diagrams it
is contained in a Voronoi edge (of that diagram). This is easy to do by computing this number
for some initial interval in brute force (which takes $\tilde{O}(1)$ time), and then by
updating the count by $\pm 1$ as we move from one atomic interval to the next one.

As is clear from the construction, the portions of $\beta^*(u,v)$ that are
Voronoi edges in the full diagram $\VD^*(S)$ are precisely those that are covered by exactly $\NH-2$
trichromatic Voronoi edges if $i\not = j$, or by $\NH-1$ edges if $i=j$.
(This is the number of possible indices $k\not=i,j$ in each of these cases.)

We can therefore identify in this manner all the true Voronoi edges along each bisector,
and also the true Voronoi vertices (which are the endpoints of the true edges). Each true vertex
arises in this manner three times. By matching these three occurrences, we can ``stitch'' together
the edges and vertices of $\VD^*(S)$ into a single planar graph that represents the diagram.
It is also easy to augment this graph in situations where the diagram is not connected, using
standard features of the DCEL data structure~\cite{dBCOvK}.

\fi


\section{Preprocessing for max queries}\label{sec:prep_max}
In this section we establish the last part (item $(ii)$) of Theorem~\ref{thm:vor}. Recall that $P$ contains $r$ vertices and a set $S$ of $b$ sites. We wish to preprocess $P$ (independently of any weight assignment to the sites) in $\tilde O(rb^2)$ time, so that, given the cell $\Vor^*(v)$ of a site $v$, we can return the vertex $w \in \Vor(v)$ maximizing $d(v,w)$  in  time that is linear (up to polylogarithmic factors) in the number of Voronoi vertices of $\Vor^*(v)$. Cabello described a similar mechanism, but since in his Voronoi diagrams the sites are on a single hole, $\VD^*(S)$ is connected, so the boundary of each Voronoi region is a single cycle. As our treatment is more general, we need to handle the case of sites on multiple holes, where the boundary of Voronoi cells might consist of multiple cycles. Our approach for the single hole case is essentially that of Cabello, although our presentation is different. 
We then extend the technique to handle sites on multiple holes. 
\ifdefined\fullver 

Let $T$ 
be the shortest-path tree rooted at $v$ in $P$ and let $T^*$ be the cotree of $T$. (That is,
$T^*$ is the tree whose edges are the duals of the edges which are not in $T$.)
Let $h_\infty$ 
be the hole
such that $v\in \bd h_\infty$, and let
$h^*_\infty$ be the  vertex dual to $h_\infty$.
We root $T^*$ at $h^*_\infty$ and when we refer to an edge $f^*g^*\in T^*$ then $f^*$ is the parent of $g^*$ in  $T^*$
(i.e., $f^*$ is closer to $h^*_\infty$ than $g^*$).

We label the vertices of $P$ according to their distance (in $P$) from $v$ and we label each face $f$ (and its corresponding dual vertex $f^*$)  that is not a hole of $P$ with the maximum label of a vertex incident to $f$. Note that the trees $T$, $T^*$ and the labels can be easily computed and stored in $\tilde O(rb)$ time.

We will use the following properties of the boundary of the Voronoi cell of $v$ in any Voronoi diagram $VD(S)$ in $P$. 

\begin{lemma}
The boundary of $\Vor^*(v)$ is a set $\mathcal C^*$ of at most $t$ vertex-disjoint non-self-crossing cycles, where $t$ is the number of holes of $P$.
\end{lemma}
\begin{proof}
Add an artificial vertex $x_h$ in each hole $h$ of $P$, and artificial infinite-length edges connecting $x_h$ to every vertex of $h$. The resulting piece $P'$ has no holes. We consider $x_h$ to belong to $\Vor(v)$ in $\VD(S)$ if and only if all vertices of $h$ belong to $\Vor(v)$. It follows that $P' \setminus \Vor(v)$ has at most $t$ connected components. Hence, by the duality of cuts and cycles in planar graphs (see, e.g., \cite{planarbook}), the boundary of $\Vor^*(v)$ is a set $\mathcal C^*$ of at most $t$ mutually non-crossing simple cycles in $P'^*$. 
Since deleting the artificial edges is equivalent to contracting the duals of the artificial edges,  $\mathcal C^*$ becomes a set of at most $t$ mutually non-crossing non-self-crossing cycles in $P^*$. Since no two cycles cross, if any two cycles share a vertex, they can be combined into a single non-self-crossing cycle.
\end{proof}

We assume that $\mathcal C^*$ contains a cycle
$C^*_0$ through $h^*_\infty$ that encloses all the other cycles in $\mathcal C^*$. See Figures~\ref{fig:grid2},~\ref{fig:grid1}.
(If there is no such cycle then we add a dummy cycle enclosing all other cycles, which is  a self loop
through $h^*_\infty$.)
Furthermore, since $\Vor(v)$ is connected, no cycle  $C^*_1\in \mathcal C^*\setminus \{ C^*_0 \}$
is enclosed by a cycle $C^*_2\in \mathcal C^*\setminus \{ C^*_0 \}$.

Our goal is to report the maximum label of a vertex that is enclosed by $C^*_0$ and not enclosed by any $C^* \in \mathcal C^* \setminus \{ C^*_0 \}$. 

We say that a face $f$ of $P$ belongs to $\Vor(v)$ if all vertices of $f$ belong to $\Vor(v)$.
Let  $C^*$ be a cycle in $\mathcal C^*$. 
We say that an arc $f^*g^*$ of $T^*$ {\em penetrates} $\Vor^*(v)$ at $C^*$
(or, in short, {\em penetrates} at $C^*$)
if the face $f$ does not belongs to $C^*$ 
and the face $g$ belongs to $\Vor(v)$. 
We say that an arc $f^*g^*$ of $T^*$
{\em exits} $\Vor^*(v)$ at $C^*$ if the face $g$ belongs to $C^*$ and the face $f$ belongs to $\Vor(v)$. 

\begin{figure}[h]
\begin{center}
\includegraphics[scale=0.7]{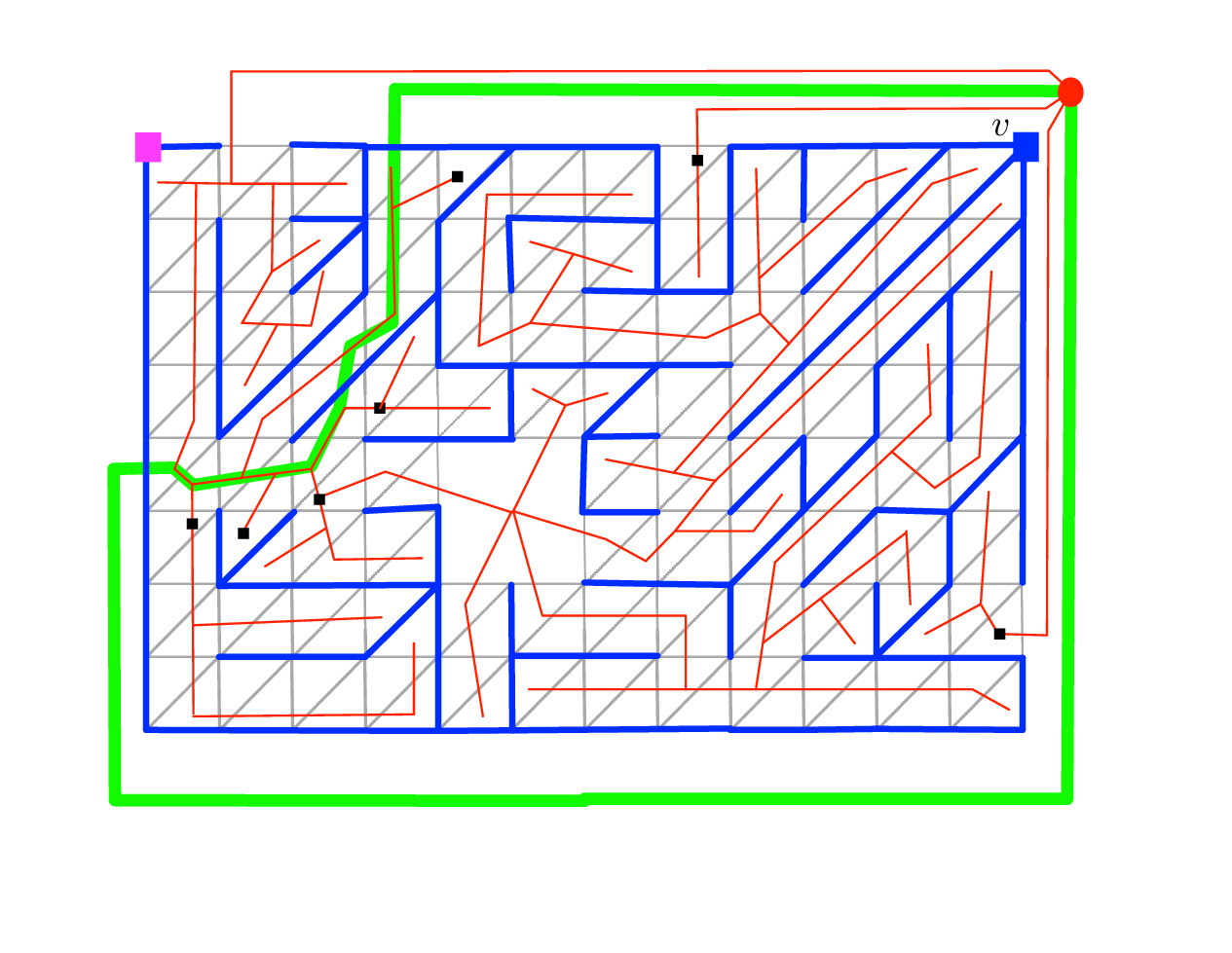}
\caption{A graph with two holes. Two sites are shown, both on the external hole (blue and pink squares). The site on the upper right corner is $v$. In this example, $\Vor^*(v)$ (green) consists of one simple cycle that goes through the infinite face. $T$ is shown in blue and $T^*$ in red. Heads of arcs penetrating at $C^*_0$ are indicated by black squares.
\label{fig:grid2}
}
\end{center}
\end{figure}

\begin{lemma} \label{lem:T*C*0}
Every root-to-leaf path in $T^*$ contains at most one arc penetrating at $C^*_0$ and no arc exiting at $C^*_0$.
\end{lemma}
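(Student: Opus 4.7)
The plan is to exploit the interplay between $T^*$ and the fundamental cycles of non-tree primal edges of $T$, together with the fact that $\Vor(v)$ is a subtree of $T$ rooted at $v$ (Lemma~\ref{lem:vorsubtree}). The key structural observation I would establish first is the following. Consider any arc $f^*g^* \in T^*$ (with $f^*$ the parent) whose corresponding primal edge $e=pq$ is a non-tree edge of $T$ with both endpoints in $\Vor(v)$. Since $\Vor(v)$ is a subtree of $T$, the $T$-path from $p$ to $q$ stays inside it, so the fundamental cycle $\gamma_{pq}$ (this $T$-path together with $e$) is contained entirely in $\Vor(v)$. By the standard planar-duality correspondence between fundamental cycles of a spanning tree and fundamental cuts of its cotree, removing $f^*g^*$ from $T^*$ partitions it into two subtrees whose dual vertices lie on the two sides of $\gamma_{pq}$ in the embedding; the subtree rooted at $g^*$ lies on the side \emph{not} containing $h^*_\infty$. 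Call this side $R_{\bar v}$ and the other $R_v$. Crucially, every primal edge of $\gamma_{pq}$ has both endpoints in $\Vor(v)$, so none of them is a cut edge of the partition $(\Vor(v),V\setminus \Vor(v))$; hence no edge of $C^*_0$ (which by definition consists of duals of such cut edges) is dual to any edge of $\gamma_{pq}$. Therefore $C^*_0$ does not cross $\gamma_{pq}$ in the plane, and because $h^*_\infty\in C^*_0\cap R_v$, the whole cycle $C^*_0$ lies in $R_v$.

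I would then use this observation to dispatch both parts of the lemma. For the no-exiting-arc statement, assume for contradiction that $f^*g^*\in T^*$ exits at $C^*_0$, meaning $f\in \Vor(v)$, $g\notin \Vor(v)$, and $g^*\in C^*_0$. Since $f\in \Vor(v)$ forces the shared primal endpoints $p,q$ into $\Vor(v)$, the observation above applies, so the subtree of $T^*$ rooted at $g^*$ lies in $R_{\bar v}$, and in particular $g^*\in R_{\bar v}$. But $g^*\in C^*_0 \subseteq R_v$, a contradiction.

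For the at-most-one-penetration statement, assume for contradiction that a root-to-leaf path of $T^*$ contains two penetrating arcs $f_1^*g_1^*$ followed by $f_2^*g_2^*$. Applying the observation to $f_1^*g_1^*$: since $g_1\in \Vor(v)$, the subtree of $T^*$ rooted at $g_1^*$ is contained in the side $R_{\bar v}^{(1)}$ of $\gamma_{p_1q_1}$ away from $h^*_\infty$, while $C^*_0\subseteq R_v^{(1)}$. Because $f_2^*$ lies strictly below $g_1^*$ on the path, it belongs to this subtree, so $f_2^*\in R_{\bar v}^{(1)}$. But the definition of penetrating at $C^*_0$ requires $f_2^*\in C^*_0\subseteq R_v^{(1)}$, a contradiction. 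The main subtlety when turning this plan into a rigorous proof will be carefully invoking the planar-duality statement about $T$, $T^*$ and fundamental cycles in the presence of non-triangular holes (in particular $h_\infty$), but this is a standard fact that extends to embedded planar graphs with holes.
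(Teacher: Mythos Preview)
Your argument is correct, and it takes a genuinely different route from the paper's. The paper first observes that two penetrations on a root-to-leaf path force an exit in between, so it suffices to rule out exits; it then assumes a penetrating arc $f_0^*g_0^*$ followed by an exiting arc $f_1^*g_1^*$, forms a \emph{dual} cycle by concatenating the $f_0^*$-to-$g_1^*$ path in $T^*$ with a subpath of $C^*_0$, and argues this cycle separates $\Vor(v)$ into two nonempty parts, contradicting that $\Vor(v)$ is a subtree of $T$. You instead work on the \emph{primal} side: for the relevant cotree arc you take the fundamental cycle $\gamma_{pq}$ of its primal edge in $T$, observe it lies entirely in $\Vor(v)$ (via Lemma~\ref{lem:vorsubtree}), conclude that $C^*_0$ cannot cross it, and then invoke the standard tree/cotree fundamental-cycle duality to place $C^*_0$ and the offending subtree on opposite sides. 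Your approach packages the separation argument into a single reusable observation and then handles both claims symmetrically, which is clean; the paper's argument is shorter but more ad~hoc. The only care needed in a full write-up is the point you already flagged: the fundamental-cycle/cut correspondence when $h_\infty$ is a non-triangular face, which is standard.
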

\begin{proof}
Clearly, for a root-to-leaf path of $T^*$ to contain two arcs penetrating at $C^*_0$, there must be an arc
between them on the path that exits at $C^*_0$. It therefore suffices to show there are no exiting arcs. Since the root $h^*_\infty$ is not strictly enclosed by $C^*_0$, any root-to-leaf path of $T^*$ must penetrate at $C^*_0$ before exiting at $C^*_0$. Suppose for the sake of contradiction that  $f^*_0g^*_0$ is a penetrating arc and $f^*_1g^*_1$ is its descendant exiting arc. Consider the cycle formed by the $f^*_0$-to-$g^*_1$ path in $T^*$ and a $f^*_0$-to-$g^*_1$ subpath of $C^*_0$. This cycle partitions $\Vor(v)$ into two nonempty regions
which gives a
 contradiction  since $T$ does not cross $T^*$, but the subtree of $T$ spanning $\Vor(v)$ is connected.
\hfill \fullqed \end{proof}

To simplify the presentation we
 first assume that $C_0^*$ is the only cycle in $\mathcal C^*$ (this is the case, for example, when all the sites are on a single hole).
 We will remove this assumption later.
Each vertex $u\in \Vor(v)$ is either adjacent to an arc of $C_0^*$ or to a face $f$ which is strictly enclosed by $C^*_0$.
By Lemma~\ref{lem:T*C*0}, for any penetrating arc $f^*g^* \in T^*$, the subtree of $T^*$ rooted at $g^*$ consists only of faces (of $P$, i.e., vertices of $P^*$) strictly enclosed by $C^*_0$. Conversely, since the root of $T^*$ is not strictly enclosed by $C^*_0$, every face 
which is strictly enclosed by $C^*_0$ belongs to the subtree of $g^*$ for some penetrating arc $f^*g^* \in T^*$. It follows that it suffices to find the maximum label among the labels of the (primal) vertices
of $\Vor(v)$ adjacent to arcs of $C_0^*$ and of the labels of  the dual vertices in each of the subtrees of $T^*$ rooted at an arc
penetrating at $C^*_0$.

\smallskip
\noindent
{\bf Preprocessing.}
For each node  $g^*\in  T^*$ we compute the maximum label of a node in the subtree of $g^*$ in $T^*$ (denoted by $T^*_{g^*}$).
Then, we extend the computation of the
 persistent binary search trees representing the
bisectors $\beta^*(v,w)$ (see Section~\ref{sec:bisectors}) to store, with
each dual vertex $f^*$ on a bisector $\beta^*(v,w)$ that is not a hole, a value $\ell(f^*)$, defined as follows.
Let $g^*$ be the neighbor of $f^*$ that does not belong to $\beta^*(v,w)$ (recall that $f^*$ has exactly three incident arcs, two of which belong to $\beta^*(v,w)$). If the arc $f^*g^*$ is an arc of  $T^*$ that penetrates $\beta^*(v,w)$ (i.e., if $f^*$ is on $\beta^*(v,w)$ and $g^*$ belongs to $\Vor(v)$ in $\VD(\{v,w\})$), we set $\ell(f^*)$ to  the maximum label of a node in $T^*_{g^*}$. Otherwise we set $\ell(f^*)$ to $0$.
Next, with each arc $e^*$ of $\beta^*(v,w)$ we store a value $\ell(e^*)$
which is equal to the label of the endpoint of the primal arc $e$ that is closer to $v$.
In addition, we store at the nodes of the binary search tree representing each bisector $\beta^*(v,w)$,
the maximum of the
 $\ell(\cdot)$ values of the arcs and vertices in its subtree (of the binary search tree).
 This allows us to compute the maximum value of a node  in
a segment of $\beta^*(v,w)$ in logarithmic time. 
In addition, we construct a data structure for range maximum queries (RMQ) for the arcs incident to $h^*_\infty$ in their cyclic order
around $h_\infty$,
where the value $\ell(h^*_\infty g^*)$ of each arc $h^*_\infty g^*$ is defined to be the maximum label of a vertex of
$T^*_{g^*}$.
All these additions to the preprocessing can be done within the $O~O(rb^2)$ time to compute the bisectors in Theorem~\ref{thm:bisectors}.

\smallskip
\noindent
{\bf Query.}
For the query, we are given the representation of $C^*_0$ as a sequence of segments of bisectors $\beta^*(v,\cdot)$, and we need to report the maximum label of a vertex in $\Vor(v)$.
 We query each segment $\gamma^*$ of a bisector
on  $C^*_0$  for the maximum value of $\ell(f^*)$ and $\ell(e^*)$
among the interior vertices $f^*$ and the arcs of $\gamma^*$, respectively. This takes $O(\log r)$ time per segment. In addition, for each two consecutive segments with a shared dual vertex $f^*$, we check whether the arc $f^*g^*$ incident to $f^*$ but not on $C^*_0$ belongs to $T^*$ and penetrates at $C^*_0$. If so, we take  the maximum label of a node in $T^*_{g^*}$ as a candidate value
for the maximum distance as well. 
The reason endpoints of segments are treated differently than interior vertices of a segment is that at an interior vertex $f^*$, the two arcs of the bisector incident to $f^*$ are also arcs of $C_0^*$, whereas when $f^*$ is the meeting point of two segments of two distinct bisectors this is not the case. 
Finally, if $g^*h^*_\infty$ and $h^*_\infty g^{\prime *}$ are two consecutive arcs of $C^*_0$ incident to $h^*_\infty$, we query the RMQ data structure of $h^*_\infty$ for the maximum value associated with the arcs
(strictly) between $h^*_\infty g^*$ and $h^*_\infty g^{\prime *}$ in the cyclic order around $h^*_\infty$ that are enclosed by $C^*_0$.
The desired maximum value is the largest among all these candidate values.

\subsection{Working with multiple cycles}\label{sec:max_holes}
We now remove the assumption $\mathcal C^*$ consists of just $C_0$. The challenge in this case is that a branch of $T^*$ that penetrates $C^*_0$ may go in and out of $\Vor^*(v)$ through other cycles in $\mathcal C^*$. We show, however, that because the sites lie on a constant number of holes, there is a constant number of problematic branches of $T^*$ and we can handle them with a special data structure.
Recall that we would like to report the maximum label of a (primal) vertex
strictly enclosed by $C^*_0$ but not enclosed by any $C^* \in \mathcal C^* \setminus \{ C^*_0 \}$.

Consider traversing a root-to-leaf path $Q$ in $T^*$. It starts with a prefix of faces that are not strictly enclosed by $C^*_0$, followed by an arc $f^*g^*$ that penetrates $\Vor^*(v)$ at $C^*_0$. By Lemma~\ref{lem:T*C*0}, the suffix of $Q$ starting at $g^*$ is strictly enclosed by $C^*_0$. This suffix of $Q$ is not enclosed by any other cycle in $\mathcal C^*$ until it uses an arc that exits $\Vor^*(v)$ at some $C^* \in \mathcal C^*$. From that point on it is enclosed by $C^*$ until it encounters an arc
that penetrates at $C^*$, and so on. Therefore, we would like to report the maximum in any subtree rooted at $g^*$ for any penetrating arc $f^*g^*$ (at any $C^* \in \mathcal C^*)$, but remove from each such subtree the subtrees rooted at $g^*$ for any exiting arc $f^*g^*$. The next two lemmas characterize the exiting arcs.

\begin{figure*}[h!]
\begin{center}
\includegraphics[scale=0.8, clip=true, trim = 0mm 0mm 0mm 0mm]{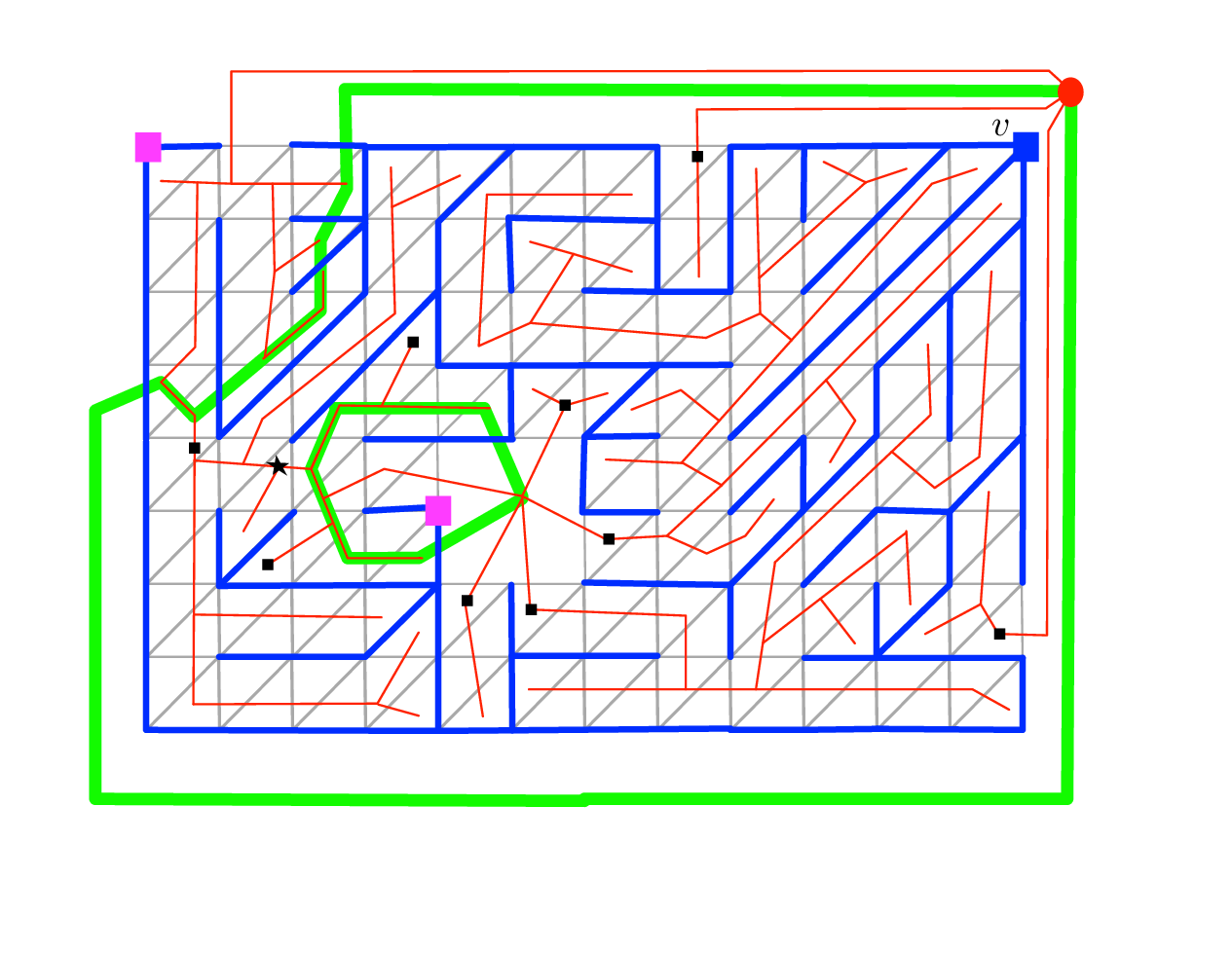}
\caption{A graph with two holes. Three sites are shown (blue and pink squares), two on the external hole and one on the inner hole. The blue site on the upper right corner is $v$. In this example, $\Vor^*(v)$ (green) consists of two cycles. $T$ is shown in blue and $T^*$ in red. Heads of arcs penetrating $\Vor^*(v)$ are indicated by black squares. The tail of an arc exiting $\Vor^*(v)$ is indicated by a black star.
\label{fig:grid1}}
\end{center}
\end{figure*}

\begin{lemma} \label{lem:T*C*}
    For every cycle $C^* \in \mathcal C \setminus C^*_0$ there is exactly one arc of $T^*$ that exits at $C^*$.
\end{lemma}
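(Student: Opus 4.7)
The plan is to adapt the cut-partition argument used in Lemma~\ref{lem:T*C*0} to handle inner cycles. For \emph{existence} of at least one exiting arc at $C^*$: the cycle $C^*$ is dual to a simple cut separating $\Vor(v)$ from a connected component $X$ of $V(P)\setminus\Vor(v)$, and $X$ is nonempty, so the region $R$ strictly enclosed by $C^*$ contains at least one dual vertex (for instance a hole of $P$ incident to some site whose Voronoi cell lies inside $R$). Since $h^*_\infty$ lies strictly outside $C^*$ and $T^*$ spans $P^*$, the unique $T^*$-path from $h^*_\infty$ into $R$ must, by planarity of $P^*$, pass through some vertex of $C^*$. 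Taking $f^*g^*$ to be its first edge with $g^*\in C^*$ gives $f^*$ strictly outside $C^*$ with $f\in\Vor(v)$ (since locally the exterior of an inner cycle belongs to $\Vor^*(v)$) and $g\notin\Vor(v)$, so $f^*g^*$ is an exiting arc at $C^*$.

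For \emph{uniqueness} I argue by contradiction, in the same style as Lemma~\ref{lem:T*C*0}. Assume there are two distinct exiting arcs $f_1^*g_1^*, f_2^*g_2^*\in T^*$ at $C^*$, chosen so that the $T^*$-path $\pi^*$ between $g_1^*$ and $g_2^*$ is as short as possible; this pair exists (take any two exits and refine iteratively), and then $\pi^*$ contains no other exiting arc at $C^*$. Concatenating $\pi^*$ with one of the two subpaths of the simple cycle $C^*$ from $g_2^*$ back to $g_1^*$ produces a closed curve $\Gamma$ in the planar embedding of $P^*$. Dually, $\Gamma$ corresponds to a primal cut whose edges are duals either of $T^*$-edges in $\pi^*$, which are non-tree primal edges of $P$, or of $C^*$-edges from the chosen subpath, which are cut edges between $\Vor(v)$ and $X$. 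In particular, no edge in this primal cut is a tree edge of $T$ with both endpoints in $\Vor(v)$, so $T[\Vor(v)]$ has no edges crossing $\Gamma$. Just as in Lemma~\ref{lem:T*C*0}, a suitable choice of the $C^*$-subpath makes $\Gamma$ partition $\Vor(v)$ into two nonempty pieces, contradicting the connectedness of $T[\Vor(v)]$ from Lemma~\ref{lem:vorsubtree}.

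The main obstacle I anticipate is the topological step: verifying that some choice of $C^*$-subpath yields a $\Gamma$ which genuinely splits $\Vor(v)$ into two nonempty pieces. This is the analogue of the (implicit) step in the proof of Lemma~\ref{lem:T*C*0}; it should follow from the fact that the interior of $\pi^*$ lies in $\Vor^*(v)$ with endpoints on $C^*$, so $\pi^*$ acts as a chord nontrivially separating $\Vor^*(v)$, together with a careful planar-embedding analysis confirming that each of the two resulting regions contains at least one primal vertex of $\Vor(v)$. A secondary subtlety is ensuring $\pi^*$ stays in $\Vor^*(v)$ away from its endpoints, which could fail if $\pi^*$ detours through another inner cycle's enclosed region; this case can be absorbed into the same cut analysis, since any such detour contributes only non-tree duals or $\Vor(v)/X'$-cut duals to $\Gamma$, preserving the non-crossability of $T[\Vor(v)]$.
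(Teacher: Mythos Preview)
Your approach is essentially the same as the paper's. The paper's proof is extremely terse: existence follows because $T^*$ is spanning, and for uniqueness one assumes two exiting arcs $f_0^*g_0^*,f_1^*g_1^*$, forms the cycle from the $g_0^*$-to-$g_1^*$ path in $T^*$ together with a portion of $C^*$, and asserts (exactly as in Lemma~\ref{lem:T*C*0}) that this cycle partitions $\Vor(v)$ into two nonempty regions that $T$ cannot connect.

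Your additions are unnecessary. The minimality condition on $\pi^*$ buys nothing: the argument only uses that every edge of $\pi^*$ lies in $T^*$ (so its primal dual is not in $T$) and that every edge of the chosen $C^*$-segment is a cut edge between $\Vor(v)$ and its complement. Neither fact depends on $\pi^*$ being shortest or on $\pi^*$ staying inside $\Vor^*(v)$, so your ``secondary subtlety'' about detours through other inner cycles is a non-issue. The paper, like you, leaves the ``two nonempty pieces'' step implicit; if you want to justify it, note that $f_0^*$ and $f_1^*$ both lie in $\Vor^*(v)$ and are separated by the constructed cycle (they lie on opposite local sides of $\pi^*$ at its endpoints), so each side contains a primal vertex of $\Vor(v)$.
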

\begin{proof}
    Since $T^*$ is a spanning tree there is at least one such arc. Assume there are two arcs $f^*_0g^*_0, f^*_1g^*_1$ exiting at $C^*$. The cycle formed by one of the portions of $C^*$ between $g^*_0, g^*_1$ and the unique $g^*_0$-to-$g^*_1$ path  in $T^*$ (not necessarily directed) partitions $\Vor(v)$ into two nonempty regions. A contradiction arises since $T$ does not cross $T^*$, but the subtree of $T$ spanning $\Vor(v)$ is connected.
\hfill \fullqed \end{proof}

\begin{corollary}
Let $h \neq h_\infty$ be a hole, and suppose
that there is a cycle $C^* \in \mathcal C^* \setminus C^*_0$ that either strictly encloses $h^*$ or goes through $h^*$. Then the unique arc of $T^*$ that exits at $C^*$ is on the path from  $h_\infty^*$ to $h^*$ in $T^*$.
\end{corollary}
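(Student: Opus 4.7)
The plan is a proof by contradiction, generalizing the cycle-construction technique used in the proofs of Lemma~\ref{lem:T*C*0} and Lemma~\ref{lem:T*C*}. Let $e^*=f^*g^*$ denote the unique edge of $T^*$ exiting at $C^*$, whose existence is guaranteed by Lemma~\ref{lem:T*C*}; recall that the face $f$ belongs to $\Vor(v)$ (so $f^*$ lies strictly in the interior of $\Vor^*(v)$), while $g^*$ lies on $C^*$. Assume for contradiction that $e^*$ does not lie on the unique $T^*$-path from $h_\infty^*$ to $h^*$; equivalently, $h^*\notin T^*_{g^*}$. The goal is to exhibit a simple closed curve $\Gamma$ in the plane that both separates $\Vor(v)$ into two nonempty pieces and cannot be crossed by any edge of $T|_{\Vor(v)}$, violating the fact that $\Vor(v)$ induces a connected subtree of $T$ (Lemma~\ref{lem:vorsubtree}).

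The curve $\Gamma$ is assembled in two parts. Its first part is the unique $g^*$-to-$h^*$ path in $T^*$; by the assumption $h^*\notin T^*_{g^*}$, this path necessarily begins by traversing $e^*$ upward from $g^*$ to $f^*$ and then proceeds through ancestors of $f^*$ before descending to $h^*$. For the closing portion, joining $h^*$ back to $g^*$, I distinguish two cases. When $h^*$ lies on $C^*$, the closing portion is a subpath of $C^*$ from $h^*$ to $g^*$, so that $\Gamma$ consists entirely of $T^*$-edges and $C^*$-edges. When $h^*$ is strictly enclosed by $C^*$, I close $\Gamma$ by any simple curve drawn inside the open topological disk bounded by $C^*$; such a curve is easy to embed because $h^*$ is the dual vertex of the hole face $h$, providing an open region of the embedding through which to route.

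The crucial non-crossing observation is that no edge of $T$ with both endpoints in $\Vor(v)$ crosses $\Gamma$: edges of $T^*$ are duals of primal edges not in $T$; edges of $C^*$ are duals of primal cut edges (each of which has exactly one endpoint in $\Vor(v)$); and in the strictly-enclosed case, the closing curve lies inside the disk bounded by $C^*$, which contains no vertex of $\Vor(v)$ and hence no edge of $T|_{\Vor(v)}$. To see that $\Gamma$ actually separates $\Vor(v)$, consider the primal edge $e$ dual to $e^*\in\Gamma$: since the face $f$ is fully in $\Vor(v)$, both endpoints of $e$ lie in $\Vor(v)$, and since $e^*$ appears exactly once on $\Gamma$, these two endpoints lie on opposite sides of $\Gamma$, contradicting the connectedness of $T|_{\Vor(v)}$. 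The main obstacle is the strictly-enclosed subcase, where the closing portion of $\Gamma$ cannot be drawn along dual edges and instead exploits the fact that $h^*$ is a hole; verifying that the closing curve is disjoint from edges of $T|_{\Vor(v)}$ ultimately relies on the fact that $\Vor(v)$ lies entirely outside $C^*$, so every edge of $T|_{\Vor(v)}$ is drawn outside $C^*$ in the planar embedding.
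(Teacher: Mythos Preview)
Your argument can be made to work, but it is far more elaborate than needed. The paper dispatches the corollary directly in three sentences: since $h^*$ lies on or is enclosed by $C^*$, the face $h$ does not belong to $\Vor(v)$; therefore the $h_\infty^*$-to-$h^*$ path in $T^*$ must exit $\Vor^*(v)$ at $C^*$ in order to reach $h^*$; by Lemma~\ref{lem:T*C*} that exit edge is unique, hence equals the edge $e^*$ you start with, and so $e^*$ lies on the path. No contradiction, no curve.

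Your separating-curve construction has a loose end: you speak of the two endpoints of $e$ lying on ``opposite sides of $\Gamma$'', which presupposes $\Gamma$ is a Jordan curve, yet you never verify simplicity. In the strictly-enclosed case the $g^*$-to-$h^*$ path in $T^*$ must itself re-enter the disk bounded by $C^*$ to reach $h^*$, so Part~1 and your freely chosen closing curve may well intersect. (The remark about $h^*$ being a hole ``providing an open region through which to route'' is beside the point---the obstacle to simplicity is Part~1 itself, not the ambient graph.) The argument survives if you replace ``opposite sides'' by a mod-$2$ intersection count, since you do correctly establish that every edge of $T$ with both endpoints in $\Vor(v)$ avoids $\Gamma$ while $e$ crosses it exactly once. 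But all of this machinery is only recovering what the paper gets for free: the root-to-$h^*$ path must carry an exit edge at $C^*$, and uniqueness finishes.
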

\begin{proof}
Observe that since $h^* \neq h^*_\infty$ is on $C^*$ or strictly enclosed by $C^*$ then, by definition of the boundary cycle $C^*$, the hole $h$ does not belong to $\Vor(v)$. 
Therefore, the path in $T^*$ from $h^*_\infty$ to $h^*$ must exit $\Vor^*(v)$ at $C^*$ to get to $h^*$. By Lemma~\ref{lem:T*C*}, this is the unique arc exiting at $C^*$.
\hfill \fullqed \end{proof}

\smallskip
\noindent
{\bf Preprocessing (with multiple holes).}
 For every hole $h\in {\mathcal H}$, let $T^*[h^*]$ denote the
path from $h_\infty^*$ to $h^*$ (the vertex dual to $h$) in $T^*$.
Consider the subtree $T_{\mathcal H}^*$ of $T^*$ which is the union of the paths
$T^*[h^*]$ over all holes $h\in {\mathcal H}$.

We construct a data structure that can answer the following queries on
$T_{\mathcal H}^*$. Given
dual vertices $v^*_1,v^*_2, \dots v^*_{k}$ in $T_{\mathcal H}^*$ (where $k \leq t$) 
such that
$v^*_1$ is an ancestor of all the other $v^*_i$'s,
return the maximum label of a vertex in $T^*_{v^*_1} \setminus \bigcup_i T^*_{v^*_i}$.
We obtain this data structure
by contracting all the arcs of $T^*\setminus  T_{\mathcal H}^*$ into their ancestors in
$T_{\mathcal H}^*$, keeping in each vertex $v^*$ of
$T_{\mathcal H}^*$ the maximum label of a vertex in the subgraph contracted to $v^*$.
We then represent $T_{\mathcal H}^*$ (with these maxima as the vertex values)  by an  Euler tour tree \cite{HenzingerK97}.
We can construct this data structure
 within the $\tilde O(r)$ time that it takes to compute $T^*$.

At the vertices $f^*$ of the bisectors $\beta^*(v,w)$ we store the labels  $\ell(f^*)$
which are defined as before with the difference that
we set $\ell(f^*) = 0 $
if the penetrating arc $f^*g^*$ is in $T_{\mathcal H}^*$ (in the query, these arcs will be taken into account separately).
We also store, for each vertex $f^* \in \beta^*(v,w)$, which does not correspond to a hole, a boolean flag $b(f^*)$
which is $1$, 
if the arc $e^*$ incident to $f^*$ that is not an arc of $\beta^*(v,w)$, belongs to $T_{\mathcal H}^*$ and $e^*$ has one endpoint on $\beta^*(v,w)$ and the other endpoint belongs to $\Vor(v)$ in $\VD(v,w)$ (i.e., $e^*$ either penetrates or exits $\Vor(v)$ at $\beta^*(v,w)$.
 We maintain these flags so that we can list all such dual vertices in any segment of  $\beta^*(v,w)$ in $O(\log r)$ time per dual vertex.

Finally, we maintain an RMQ data structure over the arcs of each hole $h$ in their cyclic order around $h$.
The value $\ell(h^*g^*)$ of each arc $h^*g^*$ is defined to be the maximum label of a vertex
in $T^*_g$ if
$h^*g^* \in T^* \setminus T_{\mathcal H}^*$ and $0$ otherwise.
 We also store with $h^*$ the indices of the (constant number of) arcs of $T_{\mathcal H}^*$ that are incident to $h^*$.

\smallskip
\noindent
{\bf Query (with multiple holes).}
We traverse the cycles $C^* \in \mathcal C^*$.
For each cycle $C^*$, we query the representation of
each segment $\gamma^*$ of a bisector $\beta^*(v,\cdot)$ in
$C^*$  and find the maximum value of $\ell(f^*)$ over all interior vertices  $f^*\in \gamma^*$ and arcs $e^* \in \gamma^*$.
This takes  $O(\log r)$ time per segment.
The maximum in each segment is a candidate for the maximum distance from $v$ to a vertex in $\Vor(v)$.
 In addition, for each pair of consecutive segments with a common dual vertex $f^*$, we check whether the third arc $f^*g^*$ belongs to $T^*\setminus T_{\mathcal H}^*$  and penetrates at $C^*$. If so, we consider the maximum label in $T^*_g$ as a candidate
for the maximum as well. Finally,
for each hole $h$ that $C^*$ crosses we identify the pair of consecutive arcs
$g^*h^*$ and $h^*g^{\prime *}$ of $C^*$ which are incident to $h^*$,
and  we query the RMQ of $h^*$ for the maximum value associated with the arcs
which are between $g^*h^*$ and $h^*g^{\prime *}$ in the cyclic order
of the arcs around $h^*$ and inside $\Vor^*(v)$. We pick the largest among all these candidates as a candidate for the maximum distance as well.

It remains to handle vertices of $\mathcal C^*$ that belong to $T_{\mathcal H}^*$.
For this we collect a set $X$ of arcs while traversing the cycles
in
$\mathcal C^*$ as follows.
For each dual vertex $f^*$ such that the flag $b(f^*)$ is $1$
we put the arc $f^*g^*$ incident to $f^*$ which is not in
$C^*$ in $X$.
From the definition of the flags, the arcs in $X$ are in  $T_{\mathcal H}^*$.
 Note that each arc $f^*g^* \in X$
either penetrates or exits a cycle $C^* \in \mathcal C^*$,
and furthermore there
can be multiple penetrating arcs collected at the same cycle
$C^*$.
For each hole $h$ such that $h^*$ is in some cycle $C^*$ we add to $X$ the arcs of
$T_{\mathcal H}^*$ incident to $h^*$ (and in $\Vor(v)$).

If we consider all the arcs in $X$ on a
root-to-leaf path in $T_{\mathcal H}^*$, they must  alternate between penetrating and exiting, with the first
such arc penetrating, and the last such arc exiting.
Since there are $O(1)$ leaves in $T_{\mathcal H}^*$ and also $O(1)$ collected
arcs that exit, we have that $|X|=O(1)$.
We then use the Euler tour tree data structure for $T_{\mathcal H}^*$ to retrieve the contribution
of each subtree of $T^*$ rooted at the vertex internal to $\Vor^*(v)$ of each penetrating arc of $X$.
Let $v_1^*$ be an endpoint, internal to $\Vor^*(v)$, of a penetrating arc in $X$. 
We identify all the exiting arcs $e^*$ in $X$ which are descendants of $v_1^*$, 
such that the $v_1^*$-to-$e^*$ path in $T_{\mathcal H}^*$ contains no other arc of $X$.
Let $v^*_{2},\ldots,v_j^*\in X$ be the endpoints, internal to
$\Vor^*(v)$, of these exiting arcs. We query the Euler tour data structure for $T_{\mathcal H}^*$ with
$v^*_1,v^*_{2},\ldots,v_j^*$ and take the returned value as another candidate for the maximum distance.
Specifically, we identify the queries to the Euler tour data structure for $T_{\mathcal H}^*$  as follows.
Let $X'$ be the set of endpoints, internal to
$\Vor^*(v)$, of the arcs in $X$.
We define the depth of a vertex $g^* \in X'$ to be the number of vertices in
$X'$ that are proper ancestors of $g^*$ in $T_{\mathcal H}^*$. 
Note that vertices at even depth are endpoints of penetrating arcs, and vertices at odd depths are endpoints of exiting arcs.
For every $i$ and for every $g^*\in X$ of depth $2i$, we query the Euler tour data structure for $T_{\mathcal H}^*$ with $v^*_1 = g^*$, and $v^*_2, v^*_3, \dots, v_j^*$ the descendants of $g^*$ in $T_{\mathcal H}^*$ that belong to $X'$ and have depth $2i+1$. Since $|X| = O(1)$, we perform a constant number of queries, and, can compute the appropriate vertices for each query
 in $O(1)$ time. See Figure~\ref{fig:grid-multi} for an illustration.

 \begin{figure*}[h!]
\begin{center}
\includegraphics[scale=0.8, clip=true, trim = 0mm 20mm 0mm 0mm]{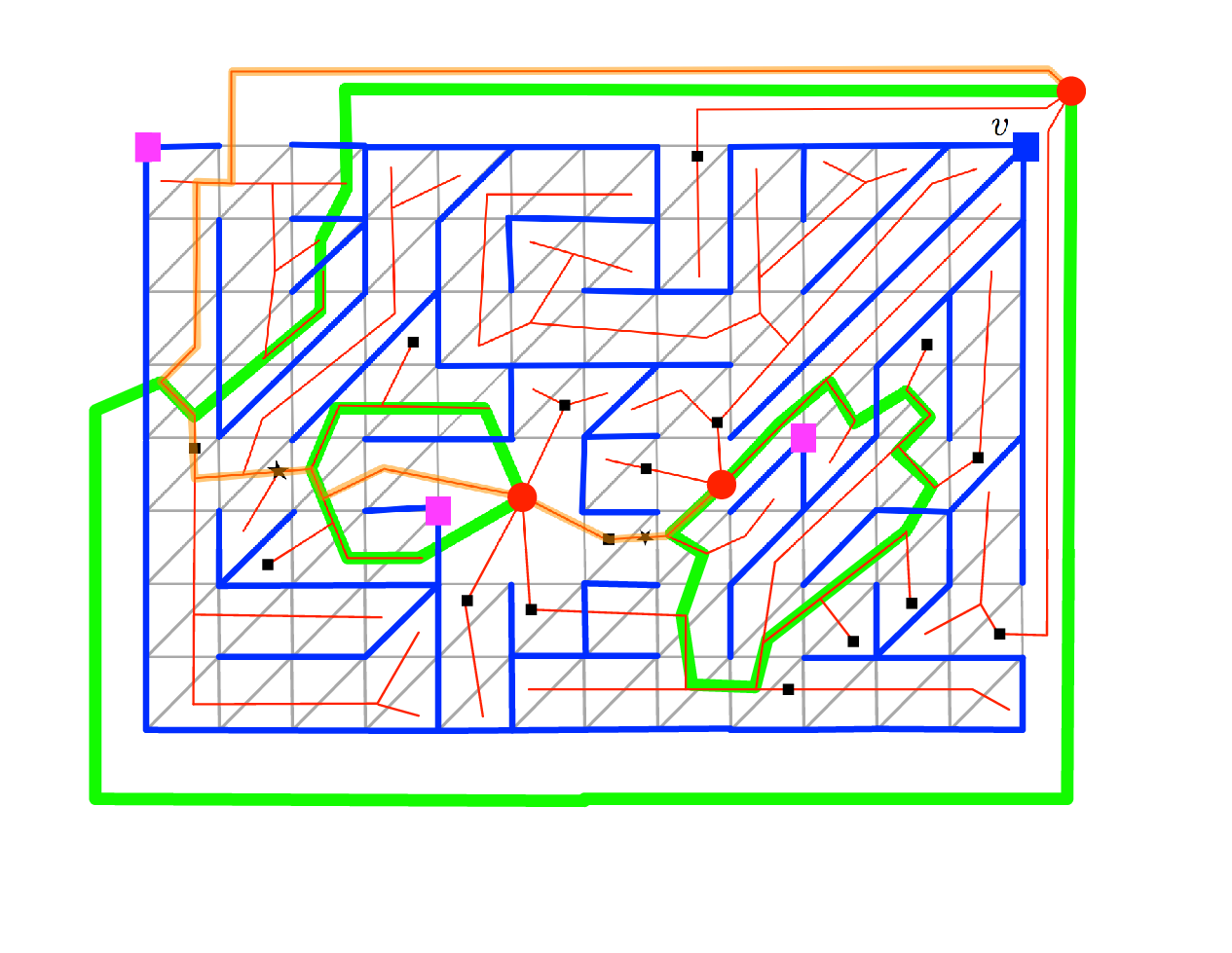}
\caption{A graph with three holes (red circles). Four sites are shown (blue and pink squares), two on the external hole and one on each of the inner holes. The blue site on the upper right corner is $v$. In this example, $\Vor^*(v)$ (green) consists of three cycles. $T$ is shown in blue and $T^*$ in red. The tree $T^*_{\mathcal H}$ (a path in this example) is highlighted in orange. Heads of arcs penetrating $\Vor^*(v)$ are indicated by black squares. The tail of an arc exiting $\Vor^*(v)$ is indicated by a black star. The set $X$ consists the two dual vertices marked by a black square in $T^*_{\mathcal H}$, and the two dual vertices marked by a black star in $T^*_{\mathcal H}$. These are the endpoints of penetrating and exiting arcs that belong to $T^*_{\mathcal H}$. Observe that for any penetrating arc $e^*$ not in $T^*_{\mathcal H}$, the entire subtree of $e^*$ in $T^*$ is in $\Vor(v)$. Candidates from these subtrees of $T^*$ are accounted for as in the single cycle case.
For the penetrating arc $e^*$ in $T^*_{\mathcal H}$ only the portion of the subtree of $e^*$ that is not in the subtree of any of the following exiting arc in $T^*_{\mathcal H}$ is in $\Vor(v)$. Candidates from this portion are accounted for by querying the tree $T^*_{\mathcal H}$.  
\label{fig:grid-multi}}
\end{center}
\end{figure*}

We have thus established the following towards the proof of Theorem~\ref{thm:vor}.
\begin{theorem}
	Consider the settings of Theorem~\ref{thm:vor}. One can preprocess $P$ in $\tilde O(rb^2)$ time so that, for any site $u\in S$, the maximum distance from $u$ to a vertex in $\Vor(u)$ in $\VD(S,\wt)$ can be retrieved in $\tilde O(|\bd \Vor(u)|) $ time, where $|\bd \Vor(u)|$ denotes the complexity of $\Vor(u)$.
\end{theorem}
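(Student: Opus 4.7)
The plan is: for each site $u$, root the shortest-path tree $T_u$ (and its cotree $T^*$) at $h^*_\infty$, the dual of the hole $h_\infty$ containing $u$; label every primal vertex by its distance from $u$, and label every non-hole face by the maximum label on its boundary. Since $\Vor(u)$ is a connected subtree of $T_u$ (Lemma~\ref{lem:vorsubtree}), the boundary of $\Vor^*(u)$ consists of at most $t=O(1)$ non-self-crossing dual cycles -- an outer cycle $C^*_0$ through $h^*_\infty$, plus at most one cycle per other hole that lies inside $\Vor^*(u)$. The query goal is to compute the maximum vertex-label strictly enclosed by $C^*_0$ and not enclosed by any inner cycle, in time $\tilde O(|\bd\Vor(u)|)$.

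I would first handle the single-hole case, where $\mathcal{C}^*=\{C^*_0\}$. The structural key is that no root-to-leaf path of $T^*$ ever exits $\Vor^*(u)$ through $C^*_0$, since a penetrating/exiting pair would split $\Vor(u)$ into two components; hence every $T^*$-subtree hanging off a penetrating edge lies entirely inside $\Vor^*(u)$ and can be summarized by its subtree-max, precomputable once. At preprocessing I augment the persistent search trees representing the bisectors $\beta^*(u,w)$ (from Theorem~\ref{thm:bisectors}) by storing, at each degree-three dual vertex $f^*$, a value $\ell(f^*)$ equal to the max label in the $T^*$-subtree attached through $f^*$'s third (non-bisector) edge when that edge penetrates, together with edge-labels $\ell(e^*)$ equal to the label of the endpoint of $e$ nearer to $u$; propagating subtree maxima up the BST supports $O(\log r)$ range-max queries on arbitrary bisector segments. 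An additional RMQ over the cyclic order of edges at $h^*_\infty$ covers contributions routed through $h^*_\infty$ itself. The query walks $C^*_0$ as a sequence of bisector segments and hole gaps, issuing one range-max per segment plus an $O(1)$ pivot check at each shared endpoint, for a total of $\tilde O(|\bd\Vor(u)|)$.

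The principal difficulty is the multi-hole case: a $T^*$-subtree penetrating at $C^*_0$ may re-emerge through an \emph{exiting} edge of an inner cycle $C^*\in\mathcal{C}^*\setminus\{C^*_0\}$, so naive subtree-maxes would overcount vertices lying in holes enclosed by $\bd\Vor^*(u)$. I would prove that each inner cycle $C^*$ has exactly one exiting edge and that this edge lies on the unique $h^*_\infty$-to-$h^*$ path in $T^*$, where $h$ is the hole associated with $C^*$; by the cycle-splitting argument of Lemma~\ref{lem:T*C*0}, two exiting edges on the same $C^*$ would partition $\Vor(u)$. Consequently the union $T^*_\mathcal{H}$ of all root-to-hole paths has only $O(1)$ leaves and meets $\bd\Vor^*(u)$ in $O(1)$ penetrating or exiting edges along any boundary walk. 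I would contract $T^*\setminus T^*_\mathcal{H}$ into $T^*_\mathcal{H}$, pushing each contracted subgraph's max to its anchor, and store the result in an Euler-tour tree supporting queries of the form ``max label in $T^*_{v^*_1}\setminus\bigcup_i T^*_{v^*_i}$'' where $v^*_1$ is an ancestor of constantly many descendants $v^*_i$. In parallel I would zero out $\ell(f^*)$ whenever the third edge lies in $T^*_\mathcal{H}$, attach a boolean flag so such vertices can be listed in polylog time per segment, and build RMQs around every hole (not just $h^*_\infty$). The query then walks each cycle of $\mathcal{C}^*$, collecting ordinary range-maxes as before, and additionally groups the $O(1)$ flagged edges of $T^*_\mathcal{H}$ into a constant number of Euler-tour queries of ``ancestor-minus-descendants'' type to recover exactly the contributions of subtrees that penetrate through $T^*_\mathcal{H}$ while subtracting those that exit through deeper cycles. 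Preprocessing is dominated by computing $b$ shortest-path trees and augmenting their bisector representations, fitting within the $\tilde O(rb^2)$ bound of Theorem~\ref{thm:bisectors}, and the query cost is $\tilde O(|\bd\Vor(u)|)$ as required.
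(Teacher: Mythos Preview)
Your proposal is correct and follows essentially the same approach as the paper: the same structural lemmas (no exiting edge at $C^*_0$; a unique exiting edge per inner cycle, lying on the $h^*_\infty$-to-$h^*$ path in $T^*$), the same augmented bisector representations with $\ell(f^*),\ell(e^*)$ and propagated subtree maxima, the same RMQ structures at each hole, and the same Euler-tour tree on the contracted hole-path subtree $T^*_{\mathcal H}$ to handle the constantly many penetrating/exiting edges that route through it. The only cosmetic difference is phrasing; the decomposition, data structures, and query procedure match the paper's Section~\ref{sec:prep_max} point for point.
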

\else
Due to space constraints the details will appear in the full version.
\fi

\bibliographystyle{abbrv}

\end{document}